\def\namedlabel#1#2{\begingroup
    #2%
    \def\@currentlabel{#2}%
    \phantomsection\label{#1}\endgroup
}
\numberwithin{equation}{section}
\newtheoremstyle{corsivo}
   {\medskipamount}{\medskipamount}%
   {\itshape}{}%
   {\bfseries}{}%
   { }
   {\thmname{#1}\thmnumber{\@ifnotempty{#1}{ }\@upn{#2}}%
    \thmnote{ {\bfseries(#3)}}.}%
\theoremstyle{corsivo}
\newtheorem{thm}{Theorem}
\newtheorem{lemma}{Lemma}
\newtheorem{prop}{Proposition}
\newtheoremstyle{dritto}
   {\medskipamount}{\medskipamount}%
   {\rmfamily}{}%
   {\bfseries}{}%
   { }
   {\thmname{#1}\thmnumber{\@ifnotempty{#1}{ }\@upn{#2}}%
    \thmnote{ {\bfseries(#3)}}.}%
\theoremstyle{dritto}
\declaretheorem[style=dritto,name=Definition,qed=$\Diamond$]{dfn}
\declaretheorem[style=dritto,name=Remark,qed=$\Diamond$]{rmk}
\declaretheorem[style=dritto,qed=$\Diamond$]{assumption}
\newcommand{\eps}{\varepsilon}
\newcommand{\Id}{\mathds{1}}
\newcommand{\B}{\mathbb{B}}
\newcommand{\Bred}{\mathbb{B}\sub{eff}}
\newcommand{\C}{\mathbb{C}}
\newcommand{\R}{\mathbb{R}}
\newcommand{\Z}{\mathbb{Z}}
\newcommand{\T}{\mathbb{T}}
\newcommand{\PB}{\mathcal{P}}
\newcommand{\Hi}{\mathcal{H}}
\newcommand{\U}{\mathcal{U}}
\newcommand{\BH}{\mathcal{B}(\mathcal{H})}
\newcommand{\scal}[2]{\left\langle #1, #2 \right\rangle}
\newcommand{\norm}[1]{\left\| #1 \right\|}
\newcommand{\bra}[1]{\left\langle #1 \right|}
\newcommand{\ket}[1]{\left| #1 \right\rangle}
\newcommand{\eu}{\mathrm{e}}
\newcommand{\iu}{\mathrm{i}}
\newcommand{\di}{\mathrm{d}}
\newcommand{\act}{\triangleleft}
\newcommand{\sub}[1]{_{\mathrm{#1}}}
\newcommand{\Tr}{^{\mathsf{T}}}
\DeclareMathOperator{\tr}{tr}
\DeclareMathOperator{\Ran}{Ran}
\DeclareMathOperator{\Ker}{Ker}
\DeclareMathOperator{\Fr}{Fr}
\DeclareMathOperator{\diag}{diag}
\DeclareMathOperator{\Pf}{Pf}
\DeclareMathOperator{\Span}{Span}
\newcommand{\ie}{{\sl i.\,e.\ }}
\newcommand{\eg}{{\sl e.\,g.\ }}
\newcommand{\set}[1]{ \left\{  #1 \right\}}
\newcommand{\effective}{effective\ }
\newcommand{\primo}[1]{#1\sub{new}}
\newcommand{\eff}[1]{#1\sub{eff}}
\newcommand{\uc}[1]{#1\sub{uc}}
\newcommand{\soloarXiv}[1]{#1}
\let\oldfootnote\footnote
\renewcommand{\footnote}[1]{\oldfootnote{\  #1}}
\title[$\Z_2$ invariants of topological insulators]{$\Z_2$ invariants of topological insulators \\[1mm] as geometric obstructions}
\author{Domenico Fiorenza \and Domenico Monaco \and Gianluca Panati}
\begin{document}

\begin{abstract}
We consider a gapped periodic quantum system with time-reversal symmetry of fermionic (or odd) type, \ie the time-reversal operator squares to $-\Id$. We investigate the existence of periodic and time-reversal invariant Bloch frames in dimensions $2$ and $3$. In $2d$, the obstruction to the existence of such a frame is shown to be encoded in a $\Z_2$-valued topological invariant, which can be computed by a simple algorithm. We prove that the latter agrees with the Fu-Kane index. In $3d$, instead, four $\Z_2$ invariants emerge from the construction, again related to the Fu-Kane-Mele indices. When no topological obstruction is present, we provide a constructive algorithm yielding explicitly a periodic and time-reversal invariant Bloch frame. The result is formulated in an abstract setting, so that it applies both to discrete models and to continuous ones.

\bigskip

\noindent \textsc{Keywords.} Topological insulators, time-reversal symmetry, Kane-Mele model, $\Z_2$ invariants, Bloch frames.
\end{abstract}

\maketitle

\tableofcontents


\section{Introduction}

In the recent past, the solid state physics community has developed an increasing interest in phenomena having topological and geometric origin. The first occurrence of systems displaying different quantum phases which can be labelled by topological indices can be traced back at least to the seminal paper by Thouless, Kohmoto, Nightingale and den Nijs \cite{TKNN}, in the context of the Integer Quantum Hall Effect. The first topological invariants to make their appearance in the condensed matter literature were thus \emph{Chern numbers}: two distinct insulating quantum phases, which cannot be deformed one into the other by means of continuous (adiabatic) transformations without closing the gap between energy bands, are indexed by different \emph{integers} (see \cite{Graf review} and references therein). These topological invariants are related to an observable quantity, namely to the transverse (Hall) conductivity of the system under consideration \cite{TKNN,Graf review}; the fact that the topological invariant is an integer explains why the observable is quantized. Beyond the realm of Quantum Hall systems, similar non-trivial topological phases appear whenever time-reversal symmetry is broken, as early foreseen by Haldane \cite{Haldane88}. Since this pioneering observation, the field of \emph{Chern insulators} flourished \cite{Fuchs,Experiment,FruchartCarpentier2013}.

More recently, a new class of materials has been first theorized and then experimentally realized, where instead interesting topological quantum phases arise while preserving time-reversal symmetry: these materials are the so-called \emph{time-reversal symmetric} (TRS) \emph{topological insulators} (see \cite{Ando, HasanKane} for recent reviews). The peculiarity of these materials is that different quantum phases are labelled by \emph{integers modulo 2}; from a phenomenological point of view, these indices are connected to the presence of spin edge currents responsible for the Quantum Spin Hall Effect \cite{KaneMele2005, KaneMele_graphene}. It is crucial for the display of these currents that time-reversal symmetry is of \emph{fermionic} (or \emph{odd}) type, that is, the time-reversal operator $\Theta$ is such that $\Theta^2 = - \Id$. 


In a milestone paper \cite{KaneMele2005}, Kane and Mele consider a tight-binding model governing the dynamics of an electron in a $2$-dimensional honeycomb lattice subject to nearest- and next-to-nearest-neighbour hoppings, similarly to what happens in the Haldane model \cite{Haldane88}, with the addition of further terms, including time-reversal invariant spin-orbit interaction. This prototype model is used to propose a $\Z_2$ index to label the topological phases of $2d$ TRS topological insulators, and to predict the presence of observable currents in Quantum Spin Hall systems. An alternative formulation for this $\Z_2$ index is then provided by Fu and Kane in \cite{FuKa}, where the authors also argue that this index measures the obstruction to the existence of a continuous periodic Bloch frame, which is moreover compatible with time-reversal symmetry. Similar indices appear also in $3$-dimensional systems \cite{FuKaneMele}.

Since the proposals by Fu, Kane and Mele, there has been an intense activity in the community aimed at the explicit construction of smooth symmetric Bloch frames, in order to connect the possible topological obstructions to the $\Z_2$ indices \cite{Vanderbilt3}, and to study the localization of Wannier functions in TRS topological insulators \cite{Vanderbilt1,Vanderbilt2}. However, while the geometric origin of the integer-valued topological invariants is well-established (as was mentioned above, they represent Chern numbers of the \emph{Bloch bundle}, in the terminology of \cite{Panati}), the situation is less  clear for the $\Z_2$-valued indices of TRS topological insulators. Many interpretations of the $\Z_2$ indices have been given, using homotopic or $K$-theoretic classifications \cite{Altland Zirnbauer, Moore Balents, Kitaev, RyuSchnyder2010, Kennedy, KennedyZirnbauer14, KennedyZirnbauer14b}, $C^*$-algebraic and functional-analytic approaches \cite{Prodan1, Prodan2, Schulz-BaldesCMP, Schulz-Baldes13}, the bulk-edge correspondence \cite{Schulz-Baldes12,GrafPorta}, monodromy arguments \cite{Prodan3}, or gauge-theoretic methods \cite{Frohlich, Lyon, Lyon15}. Recently, a more geometric approach, using techniques from equivariant cohomology and elaborating on \cite{FKMM}, has been proposed in \cite{deNittisGomi}. However, we believe that a clear and simple topological explanation of how they arise from the symmetries of the system is still in a initiatory stage.

In this paper, we provide a geometric characterization of these $\Z_2$ indices as topological obstructions to the existence of continuous periodic and time-reversal symmetric Bloch frames, thus substantiating the claim in \cite{FuKa} on mathematical grounds. We consider a gapped periodic quantum system in the presence of fermionic time-reversal symmetry (compare Assumption \ref{proj}), and we investigate whether there exists a global continuous Bloch frame that is both periodic and time-reversal symmetric. While in $1d$ this always exists, a topological obstruction may arise in $2d$. We show in Section \ref{sec:Algorithm} that such an obstruction is encoded in a $\Z_2$ index $\delta$, which is moreover a \emph{topological invariant} of the system, with respect to those continuous deformations which preserve the symmetries. We prove that $\delta \in \Z_2$ agrees with the Fu-Kane index $\Delta \in \Z_2$ \cite{FuKa}, thus providing a proof that the latter is a topological invariant (Sections \ref{sec:Fu-Kane} and \ref{sec:EvalVertices}). In Section \ref{sec:3d} we investigate the same problem in $3d$, yielding to the definition of four $\Z_2$-valued topological obstructions, which are compared with the indices proposed by Fu, Kane and Mele in \cite{FuKaneMele}. In all cases where there is no topological obstruction (\ie the $\Z_2$ topological invariants vanish), we also provide an \emph{explicit algorithm} to construct a global smooth Bloch frame which is periodic and time-reversal symmetric (see also Appendix \ref{app:Smoothing}). Lastly, in Section \ref{sec:completeness} we prove that these $\Z_2$ topological invariants, defined by obstruction theory, characterize completely the topological class of the quantum system at hand, in all dimensions $d \le 3$.



A similar obstruction-theoretic approach to the invariants of $2$-dimensional topological insulators was adopted in \cite{GrafPorta}. In particular, aiming at a proof of the bulk-edge correspondence, there the authors associate a $\Z_2$ index to the \emph{time-reversal invariant bundle} associated to the bulk Hamiltonian for a semi-infinite crystal, enjoying a time-reversal symmetry of fermionic type.

Even though our starting assumptions on the family of projectors, to which we associate $\Z_2$-valued topological invariants, are modeled on the properties of the spectral projectors of a time-reversal symmetric Hamiltonian retaining \emph{full} periodicity in dimension $d \le 3$ (compare Assumption \ref{proj}), the setting of \cite{GrafPorta} is also covered by our method: our results can be applied to the family of projectors associated to the bulk time-reversal invariant bundle, up to an identification of the coordinates on the basis torus.

Indeed, the main advantage of our method is that, being geometric in nature, it is based only on the \emph{fundamental symmetries} of the system, namely invariance by (lattice) translations (\ie periodicity) and fermionic time-reversal symmetry. This makes our approach \emph{model-independent}; in particular, it applies both to continuous and to tight-binding models, and both to the $2$-dimensional and $3$-dimensional setting. To the best of our knowledge, our method appears to be the first obstruction-theoretic characterization of the $\Z_2$ invariants in the pioneering field of \emph{$3$-dimensional} TRS topological insultators. The method proposed here encompasses all models studied by the community, in particular the Fu-Kane-Mele models in $2d$ and $3d$ \cite{KaneMele2005,FuKa,FuKaneMele}, as well as more general tight-binding models in $2$ dimensions like the ones considered \eg in 
\cite{Schulz-Baldes12} and, as already mentioned, in \cite{GrafPorta}.

Another strong point in our approach is that the construction is algorithmic in nature, and gives also a way to \emph{compute} the $\Z_2$ invariants in a given system (see formulae \eqref{delta} and \eqref{easierdelta}). This makes our proposal well-suited for numerical implementation.


\noindent \textbf{Acknowledgments.} We thank H.\ Schulz-Baldes, G.M.\ Graf and M.\ Porta for interesting discussions, as well as the Reviewers for the careful reading of the manuscript and the stimulating remarks. We are grateful to the {\it Institut Henri Poincar\'e}  for the kind hospitality in the framework of the trimester ``Variational and Spectral Methods in Quantum Mechanics'', organized by M.\,J.\ Esteban and M.\ Lewin, and to the {\it Erwin Schr\"odinger Institute} for the kind hospitality in the framework of the thematic programme ``Topological Phases of Quantum Matter'', organized by N.\ Read, J.\ Yngvason, and M.\ Zirnbauer. 

\noindent The project was supported by INdAM-GNFM (Progetto Giovane Ricercatore) and by MIUR (Progetto PRIN 2012).


\newpage


\section{Setting and main results} \label{sec:problem}

\subsection{Statement of the problem and main results} \label{sec:state}

We consider a gapped periodic quantum system with fermionic time-reversal symmetry, and we focus on the family of spectral eigenprojectors up to the gap, in Bloch-Floquet representation. In most of the applications, these projectors read
\begin{equation} \label{SpectralProj} 
P(k) = \sum_{n \in \mathcal{I}\sub{occ}} \ket{u_n(k)} \bra{u_n(k)}, \quad k \in \R^d,
\end{equation}
where $u_n(k)$ are the periodic parts of the Bloch functions, and the sum runs over all occupied bands.

Abstracting from specific models, we let $\Hi$ be a separable Hilbert space with scalar product $\scal{\cdot}{\cdot}$, $\BH$ denote the algebra of bounded linear operators on $\Hi$, and $\U(\Hi)$ the group of unitary operators on $\Hi$. We also consider a maximal lattice $\Lambda = \Span_\Z \set{e_1, \ldots, e_d} \simeq \Z^d \subset \R^d$: in applications, $\Lambda$ is the dual lattice to the periodicity Bravais lattice $\Gamma$ in position space.
The object of our study will be a family of orthogonal projectors $\set{P(k)}_{k \in \R^d} \subset \BH$, satisfying the following
\begin{assumption} \label{proj}
The family of orthogonal projectors $\set{P(k)}_{k \in \R^d}$ enjoys the following properties:
\begin{enumerate}
\item[\namedlabel{item:smooth}{(P$_1$)}] \emph{smoothness}: the map $\R^d \ni k \mapsto P(k) \in \BH$ is $C^\infty$-smooth;
\item[\namedlabel{item:tau}{(P$_2$)}] \emph{$\tau$-covariance}: the map $k \mapsto P(k)$ is covariant with respect to a unitary representation%
\footnote{This means that $\tau(0) = \Id_{\Hi}$ and $\tau(\lambda_1 + \lambda_2) = \tau(\lambda_1) \tau(\lambda_2)$ for all $\lambda_1, \lambda_2 \in \Lambda$. It follows in particular that $\tau(\lambda)^{-1} = \tau(\lambda)^* = \tau(-\lambda)$ for all $\lambda \in \Lambda$.} %
$\tau \colon \Lambda \to \U(\Hi)$ of the lattice $\Lambda$ on the Hilbert space $\Hi$, \ie
\[ P(k+\lambda) = \tau(\lambda) P(k) \tau(\lambda)^{-1}, \quad \text{for all } k \in \R^d, \text{ for all } \lambda \in \Lambda; \]
\item[\namedlabel{item:TRS}{(P$_{3,-}$)}] \emph{time-reversal symmetry}: the map $k \mapsto P(k)$ is time-reversal symmetric, \ie there exists an antiunitary
operator%
\footnote{Recall that a surjective antilinear operator $\Theta \colon \Hi \to \Hi$ is called \emph{antiunitary} if $\scal{\Theta \psi_1}{\Theta \psi_2} = \scal{\psi_2}{\psi_1}$ for all $\psi_1, \psi_2 \in \Hi$.} %
$\Theta \colon \Hi \to \Hi$, called the \emph{time-reversal operator}, such that 
\[ \Theta^2 = - \Id_{\Hi} \quad \text{and} \quad P(-k) = \Theta P(k) \Theta^{-1}. \] 
\end{enumerate}

Moreover, the unitary representation $\tau \colon \Lambda \to \U(\Hi)$ and the time-reversal operator $\Theta \colon \Hi \to \Hi$ satisfy
\begin{equation} \label{item:TRtau} \tag{P$_4$} 
\Theta \tau(\lambda) = \tau(\lambda)^{-1} \Theta \quad \text{for all } \lambda \in \Lambda. \qedhere
\end{equation}
\end{assumption}

Assumption \ref{proj} is satisfied by the spectral eigenprojectors of most Hamiltonians modelling gapped periodic quantum systems, in presence of fermionic time-reversal symmetry. Provided the Fermi energy lies in a spectral gap, the map $k \mapsto P(k)$ defined in \eqref{SpectralProj} will be smooth (compare \ref{item:smooth}), while $\tau$-covariance and (fermionic) time-reversal symmetry (properties \ref{item:tau} and \ref{item:TRS}) are inherited from the corresponding symmetries of the Hamiltonian. In particular, several well-established models satisfy the previous Assumption, including the eigenprojectors for the tight-binding Hamiltonians proposed by Fu, Kane and Mele in \cite{KaneMele2005,FuKa,FuKaneMele}, as well as in many \emph{continuous} models. Finally, Assumption \ref{proj} is satisfied also in the tight-binding models studied in \cite{GrafPorta}: the family of projectors is associated to the vector bundle used by Graf and Porta to define a bulk index, under a suitable identification of the variables%
\footnote{The coordinates $(k_1, \ldots, k_d)$ are expressed in terms of a basis $\set{e_1, \ldots, e_d} \subset \R^d$ generating the lattice $\Lambda$ as $\Lambda = \Span_\Z \set{e_1,\ldots,e_d}$.} %
$(k_1, k_2)$ with the variables $(k,z)$ appearing in \cite{GrafPorta}.

For a family of projectors satisfying Assumption \ref{proj}, it follows from \ref{item:smooth} that the rank $m$ of the projectors $P(k)$ is constant in $k$. We will assume that $m < + \infty$; property \ref{item:TRS} then gives that $m$ must be even. Indeed, the formula
\[ (\phi, \psi) := \scal{\Theta \phi}{\psi} \quad \text{for} \quad \phi, \psi \in \Hi \]
defines a bilinear, skew-symmetric, non-degenerate form on $\Hi$; its restriction to $\Ran P(0) \subset \Hi$ (which is an invariant subspace for the action of $\Theta$ in view of \ref{item:TRS}) is then a \emph{symplectic form}, and a symplectic vector space is necessarily even-dimensional.

The goal of our analysis will be to characterize the possible obstructions to the existence of a \emph{continuous symmetric Bloch frame} for the family $\set{P(k)}_{k \in \R^d}$, which we define now.

\begin{dfn}[(Symmetric) Bloch frame] \label{dfn:Bloch}
Let $\set{P(k)}_{k \in \R^d}$ be a family of projectors satisfying Assumption \ref{proj}, and let also $\Omega$ be a region in $\R^d$. A \textbf{Bloch frame} for $\set{P(k)}_{k \in \R^d}$ on $\Omega$ is a collection of maps $\Omega \ni k \mapsto \phi_{a}(k) \in \Hi$, $a \in \set{1, \ldots, m}$, such that for all $k \in \Omega$ the set $\Phi(k) := \set{\phi_1(k), \ldots, \phi_m(k)}$ is an orthonormal basis spanning $\Ran P(k)$. When $\Omega = \R^d$, the Bloch frame is said to be \emph{global}. A Bloch frame is called
\begin{enumerate}[label=$(\mathrm{F}_\arabic*)$,ref=$(\mathrm{F}_\arabic*)$]
\setcounter{enumi}{-1}
\item \label{item:F0} \emph{continuous} if all functions $\phi_a \colon \Omega \to \Hi$, $a \in \set{1, \ldots, m}$, are continous;
\item \label{item:F1} \emph{smooth} if all functions $\phi_a \colon \Omega \to \Hi$, $a \in \set{1, \ldots, m}$, are $C^\infty$-smooth.
\end{enumerate}

We also say that a global Bloch frame is 
\begin{enumerate}[resume*]
\item \label{tau-cov} \emph{$\tau$-equivariant} if 
\[ \phi_a(k + \lambda) = \tau(\lambda) \phi_a(k) \quad \text{for all } k \in \R^d, \: \lambda \in \Lambda, \: a \in \set{1, \ldots, m}; \]
\item \label{tr} \emph{time-reversal invariant} if
\[ \phi_b(-k) = \sum_{a = 1}^{m} \Theta \phi_a(k) \eps_{ab} \quad \text{for all } k \in \R^d, \: b \in \set{1, \ldots, m} \]
for some unitary and skew-symmetric matrix $\eps = (\eps_{ab})_{1 \le a,b \le m} \in \U(\C^m)$, $\eps_{ab} = - \eps_{ba}$.
\end{enumerate}

A global Bloch frame which is both $\tau$-equivariant and time-reversal invariant is called \textbf{symmetric}.
\end{dfn}

We are now in position to state our goal: we seek the answer to the following

\medskip

\noindent {\bf Question (Q$_d$).} {\it Let $d \le 3$. Given a family of projectors $\set{P(k)}_{k \in \R^d}$ satisfying Assumption \ref{proj} above, is it possible to find a global \emph{symmetric} Bloch frame for $\set{P(k)}_{k \in \R^d}$, which varies \emph{continuously} in $k$, \ie a global Bloch frame satisfying \ref{item:F0}, \ref{tau-cov} and \ref{tr}?}

\medskip

We will address this issue via an algorithmic approach. We will show that the existence of such a global continuous symmetric Bloch frame is in general \emph{topologically obstructed}. Explicitly, the main results of this paper are the following.

\begin{thm}[Answer to (Q$_1$)] \label{thm:Q1}
Let $d=1$, and let $\set{P(k)}_{k \in \R}$ be a family of projectors satisfying Assumption \ref{proj}. Then there exists a global continuous symmetric Bloch frame for $\set{P(k)}_{k \in \R}$. Moreover, such a Bloch frame can be explicitly constructed.
\end{thm}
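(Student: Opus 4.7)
The plan is to build $\phi_a$ on the fundamental domain $\Omega := [-e_1/2, e_1/2]$ and then extend to $\R$ by $\tau$-equivariance. Inside $\Omega$ there are exactly two fixed points of $k \mapsto -k$ modulo $\Lambda$, namely $k = 0$ and $k = e_1/2 \equiv -e_1/2$; my strategy is first to fix $\phi_a$ at these two distinguished points by choosing a Kramers basis for the relevant antiunitary operator, then to interpolate continuously on $[0, e_1/2]$ between these prescribed values, and finally to propagate to $[-e_1/2, 0]$ using the time-reversal formula in \ref{tr}.

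At $k = 0$, property \ref{item:TRS} gives $\Theta P(0) \Theta^{-1} = P(0)$, so $\Theta$ restricts to an antiunitary on the finite-dimensional space $\Ran P(0)$, still squaring to $-\Id$. A standard inductive argument (a unit vector $\chi \in \Ran P(0)$ is automatically orthogonal to $\Theta \chi$ by antiunitarity, and the complement $\set{\chi, \Theta \chi}^{\perp}$ is $\Theta$-invariant) produces a Kramers-pair orthonormal basis, which I label $\phi_{2j-1}(0) := \chi_j$ and $\phi_{2j}(0) := \Theta \chi_j$; by construction it satisfies \ref{tr} at $k=0$ with $\eps$ the standard block-diagonal symplectic matrix. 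At $k = e_1/2$ I would run the same procedure with respect to the modified antiunitary $\Theta' := \tau(e_1) \Theta$: using \eqref{item:TRtau} one computes $(\Theta')^2 = \tau(e_1) \tau(e_1)^{-1} \Theta^2 = - \Id$, and combining \ref{item:TRS} with \ref{item:tau} shows that $\Theta'$ preserves $\Ran P(e_1/2)$. This yields a Kramers basis $\set{\phi_a(e_1/2)}$ with respect to $\Theta'$, with the same symplectic matrix $\eps$.

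The interval $[0, e_1/2]$ is contractible and $P$ is smooth, so the Stiefel bundle of orthonormal $m$-frames of $\Ran P(k)$ over it is trivial, and some continuous (even smooth) Bloch frame $\set{\psi_a(k)}$ exists on this interval (for instance via Kato's parallel transport from $\phi_a(0)$). There are unitaries $U_0, U_{1/2} \in \U(\C^m)$ carrying $\psi(\,\cdot\,)$ into the prescribed Kramers frames at the two endpoints; since $\U(\C^m)$ is path-connected, a continuous path $k \mapsto U(k)$ in $\U(\C^m)$ with $U(0) = U_0$ and $U(e_1/2) = U_{1/2}$ exists, and $\phi_a(k) := \sum_b \psi_b(k) U(k)_{ba}$ provides the desired interpolation on $[0, e_1/2]$. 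On $[-e_1/2, 0]$ I would simply \emph{define} $\phi_b(k) := \sum_a \Theta \phi_a(-k) \eps_{ab}$; continuity at $k = 0$ is automatic from the Kramers structure there. Finally, extending to $\R$ by $\tau$-equivariance requires matching at $k = e_1/2$: one checks that $\tau(e_1) \phi_a(-e_1/2)$ coincides with $\phi_a(e_1/2)$, which via \eqref{item:TRtau} and the TRS definition of $\phi_a(-e_1/2)$ reduces exactly to the identity $\sum_b \Theta' \phi_b(e_1/2) \eps_{ba} = \phi_a(e_1/2)$, valid by the Kramers choice w.r.t. $\Theta'$.

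The only subtle point in $d=1$ is precisely this matching at $k = e_1/2$, and the correct trick is to generate the Kramers basis there using $\Theta'$ rather than $\Theta$; no topological obstruction arises because path-connectedness of $\U(\C^m)$ allows free interpolation across the one-dimensional fundamental domain. The construction is explicit once one fixes the vectors $\chi_j$ and the path $U(k)$, so the frame produced is also algorithmically described. The $\Z_2$ obstructions studied in the rest of the paper will instead emerge in higher dimension, when the analogous extension must be performed over $2$- and $3$-cells where higher homotopy of the unitary group, combined with the symplectic constraints on TR-invariant strata, can prevent consistent gluing.
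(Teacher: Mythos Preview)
Your proof is correct and follows essentially the same approach as the paper (see Remark~\ref{rmk:Q1}): solve the vertex conditions at the two time-reversal-invariant momenta, interpolate on half the fundamental domain via path-connectedness of $\U(\C^m)$, reflect by time-reversal, and extend by $\tau$-equivariance. The only cosmetic difference is that you solve the vertex conditions concretely by exhibiting Kramers pairs (with respect to $\Theta$ at $k=0$ and $\Theta' = \tau(e_1)\Theta$ at $k=e_1/2$), whereas the paper phrases this step abstractly through the obstruction unitary and Lemma~\ref{V->U}.
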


The proof of Theorem \ref{thm:Q1} is contained in Section \ref{sec:EdgeExtension} (see Remark \ref{rmk:Q1}).

\begin{thm}[Answer to (Q$_2$)] \label{thm:Q2}
Let $d=2$, and let $\set{P(k)}_{k \in \R^2}$ be a family of projectors satisfying Assumption \ref{proj}. Then there exists a global continuous symmetric Bloch frame for $\set{P(k)}_{k \in \R^2}$ if and only if 
\begin{equation} \label{delta=0,2d} 
\delta(P) = 0 \in \Z_2, 
\end{equation}
where $\delta(P)$ is defined in \eqref{delta}. Moreover, if \eqref{delta=0,2d} holds, then such a Bloch frame can be explicitly constructed.
\end{thm}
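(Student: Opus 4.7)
The strategy is to reduce the existence of a symmetric Bloch frame to an obstruction problem on the boundary of a fundamental domain, and to show that the only obstruction is $\Z_2$-valued.

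First, I would exploit the two symmetries to cut down the data: by $\tau$-equivariance \ref{tau-cov}, a frame on $\R^2$ is determined by its restriction to one unit cell $\B = [-\tfrac12,\tfrac12]^2$ of $\Lambda$, and then \ref{tr} lets me further restrict to an \textbf{effective unit cell} $\Bred = [-\tfrac12,\tfrac12] \times [0,\tfrac12]$, since the frame on the complement can be reconstructed by propagation. The problem becomes that of producing a continuous frame on $\Bred$ whose restriction to $\partial \Bred$ is compatible with both symmetries (\ref{tau-cov} on the vertical edges, \ref{tr} on the horizontal edges, which are mapped to themselves by $k \mapsto -k$ modulo $\Lambda$). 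The four corners of $\Bred$ are time-reversal invariant momenta (TRIMs); at each such $k_*$ the bilinear form $(\phi,\psi) := \scal{\Theta\phi}{\psi}$ restricts to a symplectic form on $\Ran P(k_*)$, and a Darboux basis provides a frame satisfying \ref{tr} at $k_*$ with the canonical skew-symmetric matrix $\eps$.

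Next, I would construct a frame on $\partial \Bred$ interpolating the Darboux frames at the corners. Each horizontal edge is a one-dimensional problem with the time-reversal constraint, while the two vertical edges together form a one-dimensional problem with $\tau$-equivariance; both can be handled by (appropriate variants of) the one-dimensional result Theorem~\ref{thm:Q1}. This yields a continuous frame on $\partial \Bred$ satisfying \ref{tau-cov} and \ref{tr}.

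The decisive step is the extension to the two-dimensional interior. Since $\Bred$ is contractible, the bundle of orthonormal frames of $\Ran P$ is trivial over it, so \emph{any} continuous boundary frame admits some continuous extension to $\Bred$ ignoring the symmetries. Comparing the symmetric boundary frame of the previous paragraph with the boundary values of such an extension produces a loop $U \colon \partial \Bred \to \U(\C^m)$ whose class in $\pi_1(\U(\C^m)) \simeq \Z$ is the obstruction. The residual freedom in the symmetric choices made in the vertex and edge constructions acts on this integer only by even shifts, so that the genuine obstruction is an element of $\Z_2$. A direct inspection, matching the algorithm of Section~\ref{sec:Algorithm}, then identifies this element with $\delta(P)$ as defined by \eqref{delta}.

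If $\delta(P) = 0$, the boundary frame can be algorithmically modified within the symmetric class so that it extends continuously to $\Bred$, and then unfolded by $\tau$ and $\Theta$ to a global continuous symmetric Bloch frame on $\R^2$; conversely, the existence of such a global frame yields, by restriction and comparison, a null-homotopy of the obstruction loop, forcing $\delta(P) = 0$. The hardest part will be isolating the $\Z_2$ quotient explicitly and verifying that the resulting invariant coincides with the Fu--Kane-type quantity $\delta(P)$ in \eqref{delta}, independently of the many choices made during the construction.
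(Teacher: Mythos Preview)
Your proposal is correct and follows essentially the same route as the paper (Proposition~\ref{global} for the reduction to $\Bred$, Lemma~\ref{V->U} for the vertex conditions, then edge extension, then Proposition~\ref{degXieven} and Theorem~\ref{thm:Q2'} for the $\Z_2$ obstruction). One minor geometric slip: with your choice $\Bred = [-\tfrac12,\tfrac12]\times[0,\tfrac12]$ two of the four inequivalent TRIMs, namely $(0,0)$ and $(0,\tfrac12)$, sit at edge midpoints rather than corners, so the ``four corners are TRIMs'' bookkeeping needs adjusting; the paper takes $\Bred = \{k_1 \ge 0\}$ instead and works with six vertices (four inequivalent TRIMs plus two $\tau$-translates), which makes the edge symmetries \eqref{EdgeSymmetries} cleaner to state.
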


The proof of Theorem \ref{thm:Q2}, leading to the definition of the $\Z_2$ index $\delta(P)$, is the object of Section \ref{sec:Algorithm}. Moreover, in Section \ref{sec:TopInv} we prove that $\delta(P)$ is actually a \emph{topological invariant} of the family of projectors (Proposition \ref{deltaTopInv}), which agrees with the Fu-Kane index (Theorem \ref{thm:delta=Delta}).

\begin{thm}[Answer to (Q$_3$)] \label{thm:Q3}
Let $d=3$, and let $\set{P(k)}_{k \in \R^3}$ be a family of projectors satisfying Assumption \ref{proj}. Then there exists a global continuous symmetric Bloch frame for $\set{P(k)}_{k \in \R^3}$ if and only if 
\begin{equation} \label{delta=0,3d} 
\delta_{1,0}(P) = \delta_{1,+}(P) = \delta_{2,+}(P) = \delta_{3,+}(P) = 0 \in \Z_2,
\end{equation}
where $\delta_{1,0}(P)$, $\delta_{1,+}(P)$, $\delta_{2,+}(P)$ and $\delta_{3,+}(P)$ are defined in \eqref{3Ddelta}. Moreover, if \eqref{delta=0,3d} holds, then such a Bloch frame can be explicitly constructed.
\end{thm}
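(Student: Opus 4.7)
The plan is to reduce the 3-dimensional question to a coupled system of 2d and 1d existence problems, using Theorems \ref{thm:Q1} and \ref{thm:Q2} as building blocks. As in the 2d case, I would exploit (P$_{3,-}$) and (P$_2$) to reduce to the construction of a frame on the half-cell
\[
B_+ := \set{k \in \R^3 : k_1 \in [0, 1/2],\ k_2, k_3 \in [-1/2, 1/2]},
\]
that is $\tau$-equivariant in $k_2, k_3$ and carries the correct TRS structure on the distinguished faces $\{k_1 = 0\}$ and $\{k_1 = 1/2\}$ (which are themselves invariant under $k \mapsto -k$ modulo $\Lambda$). Once such a frame exists on $B_+$, (P$_{3,-}$) gives the extension to the full cell and (P$_2$) to all of $\R^3$.

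The first step is to apply Theorem \ref{thm:Q2} to the two 2d restrictions. On $\{k_1 = 0\}$, the family $(k_2, k_3) \mapsto P(0, k_2, k_3)$ satisfies the 2d analogue of Assumption \ref{proj} for the sublattice $\Span_\Z\set{e_2, e_3}$ and the antiunitary $\Theta$; its 2d obstruction should define $\delta_{1,0}(P)$. Likewise, on $\{k_1 = 1/2\}$, the family is $\tau$-covariant and, after exchanging $\Theta$ with the shifted operator $\tau(e_1)\Theta$ (which still squares to $-\Id$ in view of (P$_4$)), time-reversal symmetric in the 2d sense; its obstruction should define $\delta_{1,+}(P)$. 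Necessity of the vanishing of these two is then immediate, since restriction of a global symmetric frame to either face is symmetric in the 2d sense.

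Assuming both vanish, Theorem \ref{thm:Q2} produces continuous symmetric frames $\Phi^0$ on $\{k_1 = 0\}$ and $\Phi^+$ on $\{k_1 = 1/2\}$. I would then invoke the 1d construction of Theorem \ref{thm:Q1} along each segment $k_1 \in [0, 1/2]$ at fixed $(k_2, k_3)$ to interpolate between $\Phi^0(k_2, k_3)$ and $\Phi^+(k_2, k_3)$; the constructive nature of the 1d algorithm ensures that the interpolated frame depends continuously on the parameters. This produces a continuous frame on $B_+$ already matched on the $k_1$-faces. The remaining condition is $\tau$-equivariance in $k_2, k_3$, i.e.\ matching on the faces $k_i = \pm 1/2$ for $i = 2, 3$, which after gluing by TRS reduces to two further 2d obstructions on the slices $\{k_2 = 1/2\}$ and $\{k_3 = 1/2\}$; I expect these to coincide with $\delta_{2,+}(P)$ and $\delta_{3,+}(P)$.

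The main obstacle will be this final matching step: identifying exactly which four $\Z_2$ invariants arise, verifying that their combined vanishing is both necessary and sufficient, and checking that the 2d invariants on the apparently additional planes $\{k_2 = 0\}$ and $\{k_3 = 0\}$ are either automatically trivial or determined by the chosen four. This should mirror the familiar reduction, in the Fu-Kane-Mele analysis, of the eight TRIM-based indices in 3d to four independent $\Z_2$ numbers. A further step, analogous to Appendix \ref{app:Smoothing}, would upgrade the resulting continuous frame to a smooth one.
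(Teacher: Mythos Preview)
Your outline has a genuine gap at the interpolation step and at the final matching step. First, Theorem~\ref{thm:Q1} does not apply to the segment $k_1 \in [0,1/2]$ at fixed generic $(k_2,k_3)$: that theorem requires a $1$-dimensional time-reversal symmetric family, but time-reversal sends $(k_1,k_2,k_3)$ to $(-k_1,-k_2,-k_3)$, which lies on a \emph{different} segment unless $(k_2,k_3)$ is itself a time-reversal invariant point. What you actually need is a continuous-in-parameters interpolation between $\Phi^0$ and $\Phi^+$ over the $2$-torus in $(k_2,k_3)$; this is a homotopy problem in $[\T^2,\U(\C^m)]$ and may itself be obstructed (the winding numbers of the determinant of the transition unitary need not vanish \emph{a priori}). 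Second, even granting such an interpolation, your claim that enforcing $\tau$-periodicity in $k_2,k_3$ ``reduces to $\delta_{2,+}$ and $\delta_{3,+}$'' is asserted without argument: the periodicity mismatch after interpolation is a constrained extension problem (constrained by the TRS already imposed on the $k_1$-faces), and you have not shown that its obstruction coincides with the intrinsic $2$-dimensional invariants of the restrictions to $\{k_i = 1/2\}$.

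The paper avoids both difficulties by proceeding skeleton by skeleton on $\Bred^{(3)}$: solve the vertex conditions, extend along edges, then extend to each $2$-face one at a time using the $2$-dimensional algorithm (Lemma~\ref{1d->2d}), taking care that adjacent faces agree along shared edges (Steps~3--6 of the proof). The four $\Z_2$ invariants then appear directly as the face obstructions, with the symmetry $\delta_{i,+} = \delta_{i,-}$ for $i=2,3$ reducing six faces to four independent obstructions. Once the frame is continuous on all of $\partial \Bred^{(3)} \simeq S^2$, the extension to the interior is unobstructed because $\pi_2(\U(\C^m)) = 0$ --- a key topological input that your slab-filling strategy appears to sidestep, but only at the cost of pushing the real difficulty into the unanalyzed matching step.
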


The proof of Theorem \ref{thm:Q3}, leading to the definition of the four $\Z_2$ invariants $\delta_{1,0}(P), \delta_{1,+}(P), \delta_{2,+}(P)$ and $\delta_{3,+}(P)$, is the object of Section \ref{sec:3d}.

\begin{rmk}[Smooth Bloch frames]
Since the family of projectors $\set{P(k)}_{k \in \R^d}$ satisfies the smoothness assumption \ref{item:smooth}, one may ask whether global \emph{smooth} symmetric Bloch frames exist for $\set{P(k)}_{k \in \R^d}$, \ie global Bloch frames satisfying \ref{item:F1}, \ref{tau-cov} and \ref{tr}. We show in Appendix \ref{app:Smoothing} that, whenever a global \emph{continuous} symmetric Bloch frame exists, then one can also find an arbitrarily close symmetric Bloch frame which is also \emph{smooth}.
\end{rmk}

\subsection{\texorpdfstring{Properties of the reshuffling matrix $\eps$}{Properties of the reshuffling matrix epsilon}}

We introduce some further notation. Let $\Fr(m, \Hi)$ denote the set of \emph{$m$-frames}, namely $m$-tuples of orthonormal vectors in $\Hi$. If $\Phi = \set{\phi_1, \ldots, \phi_m}$ is an $m$-frame, then we can obtain a new frame in $\Fr(m,\Hi)$ by means of a unitary matrix $M \in \U(\C^m)$, setting
\[ (\Phi \act M)_b := \sum_{a = 1}^{m} \phi_a M_{ab}. \]
This defines a free right action of $\U(\C^m)$ on $\Fr(m,\Hi)$.

Moreover, we can extend the action of the unitary $\tau(\lambda) \in \U(\Hi)$, $\lambda \in \Lambda$, and of the time-reversal operator $\Theta \colon \Hi \to \Hi$ to $m$-frames, by setting
\[ \left( \tau_\lambda \Phi \right)_a := \tau(\lambda) \phi_a \quad \text{and} \quad (\Theta \Phi)_a := \Theta \phi_a \qquad \text{for } \Phi = \set{\phi_1, \ldots, \phi_m} \in \Fr(m,\Hi). \]
The unitary $\tau_\lambda$ commutes with the $\U(\C^m)$-action, \ie
\[ \tau_\lambda \left( \Phi \act M \right) = \left(\tau_\lambda \Phi \right) \act M, \quad \text{for all } \Phi \in \Fr(m,\Hi), \: M \in \U(\C^m), \]
because $\tau(\lambda)$ is a linear operator on $\Hi$. Notice instead that, by the antilinearity of $\Theta$, one has
\[ \Theta (\Phi \act M) = (\Theta \Phi) \act \overline{M}, \quad \text{for all } \Phi \in \Fr(m,\Hi), \: M \in \U(\C^m). \]

We can recast properties \ref{tau-cov} and \ref{tr} for a global Bloch frame in this notation as
\begin{equation} \label{Tau-Cov} \tag{$\mathrm{F}_3'$}
\Phi(k+\lambda) = \tau_\lambda \Phi(k), \quad \text{for all } k \in \R^d
\end{equation}
and
\begin{equation} \label{TR} \tag{$\mathrm{F}_4'$}
\Phi(-k) = \Theta \Phi(k) \act \eps, \quad \text{for all } k \in \R^d.
\end{equation}

\begin{rmk}[Compatibility conditions on $\eps$] \label{rmk:eps}
Observe that, by antiunitarity of $\Theta$, we have that for all $\phi \in \Hi$
\begin{equation} \label{phiThetaphi}
\scal{\Theta \phi}{\phi} = \scal{\Theta \phi}{\Theta^2 \phi} = - \scal{\Theta \phi}{\phi}
\end{equation}
and hence $\scal{\Theta \phi}{\phi} = 0$; the vectors $\phi$ and $\Theta \phi$ are always orthogonal. This motivates the presence of the ``reshuffling'' unitary matrix $\eps$ in \ref{tr}: the na\"{i}ve definition of time-reversal symmetric Bloch frame, namely $\phi_a(-k) = \Theta \phi_a(k)$, would be incompatible with the fact that the vectors $\set{\phi_a(k)}_{a=1, \ldots, m}$ form a basis for $\Ran P(k)$ for example at $k=0$. Notice, however, that if property \ref{item:TRS} is replaced by
\begin{equation} \tag{$\mathrm{P}_{3,+}$}
\Theta^2 = \Id_{\Hi} \quad \text{and} \quad P(-k) = \Theta P(k) \Theta^{-1},
\end{equation}
then \eqref{phiThetaphi} does not hold anymore, and one can indeed impose the compatibility of a Bloch frame $\Phi$ with the time-reversal operator by requiring that $\Phi(-k) = \Theta \Phi(k)$. Indeed, one can show \cite{FiPaPi} that under this modified assumption there is \emph{no topological obstruction} to the existence of a global smooth symmetric Bloch frame for all $d \le 3$.

We have thus argued why the presence of the reshuffling matrix $\eps$ in condition \ref{tr} is necessary. The further assumption of skew-symmetry on $\eps$ is motivated as follows. Assume that $\Phi = \set{\Phi(k)}_{k \in \R^d}$ is a time-reversal invariant Bloch frame. Consider Equation \eqref{TR} with $k$ and $-k$ exchanged, and act on the right with $\eps^{-1}$ to both sides, to obtain
\[ \Phi(k) \act \eps^{-1} = \Theta \Phi(-k). \]
Substituting again the expression in \eqref{TR} for $\Phi(-k)$ on the right-hand side, one gets
\[ \Phi(k) \act \eps^{-1} = \left( \Theta^2 \Phi(k) \right) \act \overline{\eps} = \Phi(k) \act (- \overline{\eps}) \]
and hence we deduce that $\eps^{-1} = - \overline{\eps}$. On the other hand, by unitarity $\eps^{-1} = \overline{\eps}\Tr$ and hence $\eps\Tr = - \eps$. So $\eps$ must be not only unitary, but also skew-symmetric. In particular $\eps \overline{\eps} = - \Id$.

Notice that, according to \cite[Theorem 7]{Hua}, the matrix $\eps$, being unitary and skew-symmetric, can be put in the form
\begin{equation} \label{eps=J}
\begin{pmatrix} 0 & 1 \\ -1 & 0 \end{pmatrix} \oplus \cdots \oplus \begin{pmatrix} 0 & 1 \\ -1 & 0 \end{pmatrix}
\end{equation}
in a suitable orthonormal basis. Hence, up to a reordering of this basis, there is no loss of generality in assuming that $\eps$ is in the \emph{standard symplectic form}%
\footnote{The presence of a ``symplectic'' condition may seem unnatural in the context of complex Hilbert spaces. However, as was already remarked, the time-reversal symmetry operator induces naturally a symplectic structure on the invariant subspaces $\Ran P(k_\lambda)$.}
\begin{equation} \label{symplectic}
\eps = \begin{pmatrix} 0 & \Id_n \\ - \Id_n & 0 \end{pmatrix}
\end{equation}
where $n = m/2$ (remember that $m$ is even). We will make use of this fact later on.
\end{rmk}

\soloarXiv{
\begin{rmk}[Geometric reinterpretation] \label{rmk:Geometry}
Let us recast the above definitions in a more geometric language. Given a smooth and $\tau$-covariant family of projectors $\set{P(k)}_{k \in \R^d}$ one can construct a vector bundle $\PB \to \T^d$, called the \emph{Bloch bundle}, having the (Brillouin) $d$-torus $\T^d := \R^d / \Lambda$ as base space, and whose fibre over the point $k \in \T^d$ is the vector space $\Ran P(k)$ (see \cite[Section 2.1]{Panati} for details). The main result in \cite{Panati} (see also \cite{PaMo}) is that, if $d \le 3$ and if $\set{P(k)}_{k \in \R^d}$ is also time-reversal symmetric, then the Bloch bundle $\PB \to \T^d$ is trivial, in the category of $C^\infty$-smooth vector bundles. This is equivalent to the existence of a global $\tau$-equivariant Bloch frame: this can be seen as a section of the \emph{frame bundle} $\Fr(\PB) \to \T^d$, which is the principal $\U(\C^m)$-bundle whose fibre over the point $k \in \T^d$ is the set of orthonormal frames in $\Ran P(k)$.

The time-reversal operator $\Theta$ induces by restriction a (non-vertical) automorphism of $\PB$, \ie a morphism
\[ \xymatrix{ \PB \ar[r]^{\widehat{\Theta}} \ar[d] & \PB \ar[d] \\ \T^d \ar[r]^{\theta} & \T^d} \]
where $\theta \colon \T^d \to \T^d$ denotes the involution $\theta(k) = -k$. This means that a vector in the fibre $\Ran P(k)$ is mapped via $\widehat{\Theta}$ into a vector in the fibre $\Ran P(-k)$. The morphism $\widehat{\Theta} \colon \PB \to \PB$ still satisfies $\widehat{\Theta}^2 = - \Id$, \ie it squares to the vertical automorphism of $\PB$ acting fibrewise by multiplication by $-1$.
\end{rmk}
}

\newpage


\section{\texorpdfstring{Construction of a symmetric Bloch frame in $2d$}{Construction of a symmetric Bloch frame in 2d}} \label{sec:Algorithm}

In this Section, we tackle Question (Q$_d$) stated in Section \ref{sec:state} for $d=2$.

\subsection{Effective unit cell, vertices and edges} \label{sec:Bred}

Consider the unit cell
\[ \B := \set{k = \sum_{j=1}^{2} k_j e_j \in \R^2: -\frac{1}{2} \le k_i \le \frac{1}{2}, \: i = 1, 2}.  \] 
Points in $\B$ give representatives for the quotient Brillouin torus $\T^2 = \R^2 / \Lambda$, \ie any point $k \in \R^2$ can be written (in an a.e.-unique way) as $k = k' + \lambda$, with $k' \in \B$ and $\lambda \in \Lambda$.

Properties \ref{item:tau} and \ref{item:TRS} for a family of projections reflect the relevant symmetries of $\R^2$: the already mentioned \emph{inversion symmetry} $\theta(k) = -k$ and the \emph{translation symmetries} $t_\lambda(k) = k+\lambda$, for $\lambda \in \Lambda$. These transformations satisfy the commutation relation $\theta t_\lambda = t_{-\lambda} \theta$. Consequently, they form a subgroup of the affine group $\mathrm{Aut}(\R^2)$, consisting of the set $\set{t_{\lambda}, \theta t_\lambda}_{\lambda \in \Lambda}$. Periodicity (or rather, $\tau$-covariance) for families of projectors and, correspondingly, Bloch frames allows one to focus one's attention to points $k \in \B$. Implementing also the inversion or time-reversal symmetry restricts further the set of points to be considered to the \emph{\effective unit cell}
\[ \Bred := \set{k = (k_1, k_2) \in \B: k_1 \ge 0}.  \]

A more precise statement is contained in Proposition \ref{global}. Let us first introduce some further terminology. We define the \emph{vertices} of the \effective unit cell to be the points $k_\lambda \in \Bred$ which are fixed by the transformation $t_\lambda \theta$. One immediately realizes that
\[ t_\lambda \theta(k_\lambda) = k_\lambda \quad \Longleftrightarrow \quad k_\lambda = \frac{1}{2} \lambda, \]
\ie vertices have half-integer components in the basis $\set{e_1, e_2}$. Thus, the \effective unit cell contains exactly six vertices, namely
\begin{equation} \label{vertices}
\begin{aligned}
v_1 = (0,0), \quad v_2 = \left( 0, - \frac{1}{2} \right), \quad v_3 = \left( \frac{1}{2}, - \frac{1}{2} \right), \\
v_4 = \left(\frac{1}{2} ,0 \right), \quad v_5 = \left( \frac{1}{2}, \frac{1}{2} \right), \quad v_6 = \left( 0, \frac{1}{2} \right).
\end{aligned}
\end{equation}
We also introduce the oriented \emph{edges} $E_i$, joining two consecutive vertices $v_i$ and $v_{i+1}$ (the index $i$ must be taken modulo $6$).

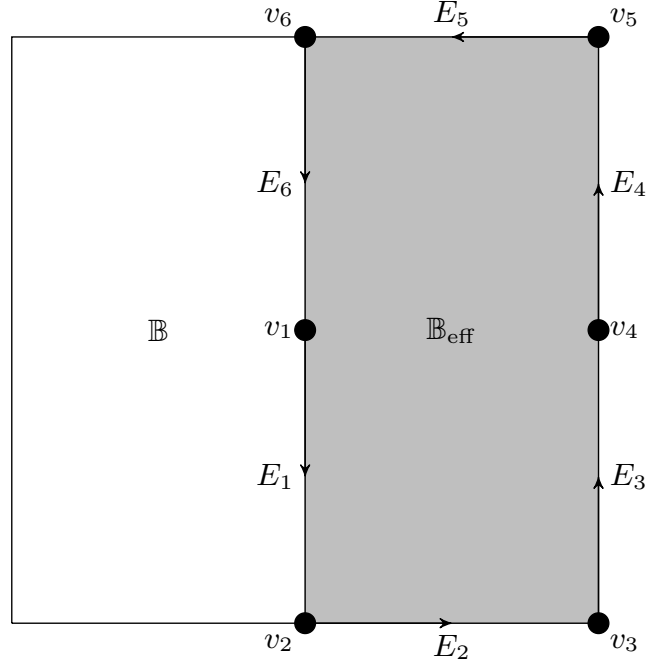
\begin{figure}[ht]
\begin{scriptsize}
\centerline{%
\scalebox{1.3}{%
\begin{tikzpicture}[>=stealth']
\filldraw [lightgray] (0,-3) -- (3,-3) -- (3,3) -- (0,3) -- (0,-3);
\draw (-3,-3) -- (-3,3) -- (3,3) -- (3,-3) -- (-3,-3);
\draw (0,-3) -- (0,3);
\draw [->] (0,0) -- (0,-1.5) node [anchor=east] {$E_1$};
\draw [->] (0,-3) -- (1.5,-3) node [anchor=north] {$E_2$};
\draw [->] (3,-3) -- (3,-1.5) node [anchor=west] {$E_3$};
\draw [->] (3,0) -- (3,1.5) node [anchor=west] {$E_4$};
\draw [->] (3,3) -- (1.5,3) node [anchor=south] {$E_5$};
\draw [->] (0,3) -- (0,1.5) node [anchor=east] {$E_6$};
\filldraw [black] (0,0) circle (3pt) node [anchor=east] {$v_1$}
					(0,-3) circle (3pt) node [anchor=north east] {$v_2$}
					(3,-3) circle (3pt) node [anchor=north west] {$v_3$}
					(3,0) circle (3pt) node [anchor=west] {$v_4$}
					(3,3) circle (3pt) node [anchor=south west] {$v_5$}
					(0,3) circle (3pt) node [anchor=south east] {$v_6$};
\draw (1.5,0) node {$\Bred$}
      (-1.5,0) node {$\B$};
\end{tikzpicture}
} 
} 
\end{scriptsize}
\caption{The \effective unit cell (shaded area), its vertices and its edges. We use adapted coordinates $(k_1,k_2)$ such that $k = k_1 e_1 + k_2 e_2$.}
\label{fig:Bred}
\end{figure}

We start with an auxiliary \emph{extension} result, which reduces the problem of the existence of a \emph{global} continuous symmetric Bloch frame to that of a Bloch frame defined only on the \effective unit cell $\Bred$, satisfying further conditions on its boundary.

\begin{prop} \label{global}
Let $\set{P(k)}_{k \in \R^2}$ be a family of orthogonal projectors satisfying Assumption \ref{proj}. Assume that there exists a global continuous symmetric Bloch frame $\Phi = \set{\Phi(k)}_{k \in \R^2}$ for $\set{P(k)}_{k \in \R^2}$. Then $\Phi$ satisfies the \emph{vertex conditions}
\begin{equation} \label{VertexCondition} \tag{$\mathrm{V}$}
\Phi(k_\lambda) = \tau_\lambda \Theta \Phi(k_\lambda) \act \eps, \quad k_\lambda \in \set{ v_1, \ldots, v_6 }
\end{equation}
and the \emph{edge symmetries}
\begin{equation} \label{EdgeSymmetries} \tag{$\mathrm{E}$}
\begin{aligned}
&\Phi(\theta(k)) = \Theta \Phi(k) \act \eps && \text{for } k \in E_1 \cup E_6, \\
&\Phi(t_{e_2}(k)) = \tau_{e_2} \Phi(k) && \text{for } k \in E_2, \\
&\Phi(\theta t_{-e_1}(k)) = \tau_{e_1} \Theta \Phi(k) \act \eps \quad && \text{for } k \in E_3 \cup E_4, \\
&\Phi(t_{-e_2}(k)) = \tau_{-e_2} \Phi(k) && \text{for } k \in E_5.
\end{aligned}
\end{equation}

Conversely, let $\eff{\Phi} = \set{\eff{\Phi}(k)}_{k \in \Bred}$ be a continuous Bloch frame for $\set{P(k)}_{k \in \R^2}$, defined on the \effective unit cell $\Bred$ and satisfying the vertex conditions \eqref{VertexCondition} and the edge symmetries \eqref{EdgeSymmetries}. Then there exists a \emph{global continuous symmetric} Bloch frame $\Phi$ whose restriction to $\Bred$ coincides with $\eff{\Phi}$.
\end{prop}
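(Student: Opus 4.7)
The necessity of the vertex and edge conditions follows from a direct computation. Each vertex $k_\lambda = \lambda/2$ is a fixed point of $t_\lambda \theta$, so applying \eqref{Tau-Cov} and then \eqref{TR} to the identity $\Phi(k_\lambda) = \Phi(t_\lambda \theta(k_\lambda))$ yields \eqref{VertexCondition}. The four relations in \eqref{EdgeSymmetries} arise analogously by composing $\tau$-equivariance and time-reversal invariance on the image of each edge under the corresponding affine symmetry: on $E_1 \cup E_6 \subset \{k_1 = 0\}$, which is $\theta$-invariant, the condition reduces to pure time-reversal invariance; on $E_2$ and $E_5$ only translation is needed; on $E_3 \cup E_4 \subset \{k_1 = 1/2\}$ both operations enter, with combined symmetry $\theta t_{-e_1}$.

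For the converse, the plan is a two-step extension. Since $\B = \Bred \cup \theta(\Bred)$ with $\Bred \cap \theta(\Bred) = E_1 \cup E_6$, one first extends $\eff{\Phi}$ to all of $\B$ by keeping it on $\Bred$ and setting
\[
\Phi(k) := \Theta \, \eff{\Phi}(-k) \act \eps \qquad \text{for } k \in \B, \ k_1 \le 0.
\]
Modulo the identities $\Theta^2 = -\Id$ and $\eps \overline{\eps} = -\Id$ (cf.\ Remark \ref{rmk:eps}), this is equivalent to the prescribed form of time-reversal invariance, and the edge symmetry on $E_1 \cup E_6$ is precisely the condition needed for the two pieces to agree on the overlap $\{k_1 = 0\}$; this yields a continuous Bloch frame on $\B$. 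Then $\Phi$ is propagated to $\R^2$ by $\tau$-equivariance: writing $k = k' + \lambda$ with $k' \in \B$ and $\lambda \in \Lambda$, set $\Phi(k) := \tau_\lambda \Phi(k')$.

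The principal technical step is to verify that this second extension is well-defined and continuous across the identifications on $\partial \B$. The paired boundary segments of $\B$ are $\{k_2 = \pm 1/2\}$ and $\{k_1 = \pm 1/2\}$, related respectively by $\tau_{e_2}$ and $\tau_{e_1}$; each such segment decomposes according to the sign of $k_1$, and on each half the required matching follows either directly from one of the edge symmetries on $E_2$, $E_3 \cup E_4$, $E_5$, or from the same edge symmetry evaluated at $-k$, after using \ref{item:TRtau} to commute $\tau_{\pm e_i}$ past $\Theta$ in the formula defining $\Phi$ on $\theta(\Bred)$. Compatibility at the four $\Lambda$-equivalent corners of $\B$ is in turn guaranteed by the vertex conditions \eqref{VertexCondition}. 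This edge-by-edge bookkeeping, together with the interplay of $\Theta$ and $\tau$, is the main obstacle of the proof. Once continuity of $\Phi$ on $\R^2$ is secured, $\tau$-equivariance holds by construction, and global time-reversal invariance reduces to its validity on $\B$: for $k = k' + \lambda$ one has $-k = (-k') + (-\lambda)$ with $-k' \in \B$, whence
\[
\Phi(-k) = \tau_{-\lambda} \Phi(-k') = \tau_{-\lambda}(\Theta \Phi(k') \act \eps) = \Theta (\tau_\lambda \Phi(k')) \act \eps = \Theta \Phi(k) \act \eps,
\]
using \ref{item:TRtau} in the form $\tau_{-\lambda} \Theta = \Theta \tau_\lambda$. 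The resulting $\Phi$ is the desired global continuous symmetric Bloch frame restricting to $\eff{\Phi}$ on $\Bred$.
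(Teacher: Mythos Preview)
Your proof is correct and follows exactly the same two-step extension as the paper: first reflect $\eff{\Phi}$ across $\{k_1=0\}$ via $\Theta(\,\cdot\,)\act\eps$ to cover $\B$, then propagate by $\tau$-equivariance to $\R^2$. The paper is considerably terser---it simply asserts that the vertex conditions and edge symmetries ``ensure'' continuity and that symmetry holds ``by construction''---whereas you spell out the edge-by-edge matching on $\partial\B$ and the verification of global time-reversal invariance via \eqref{item:TRtau}; but the argument is the same.
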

\begin{proof}
Let $\Phi$ be a global Bloch frame as in the statement of the Proposition. Then
conditions \eqref{Tau-Cov} and \eqref{TR} imply that at the six vertices
\[ \Phi(k_\lambda) = \Phi(t_\lambda \theta(k_\lambda)) =  \tau_\lambda \Phi(\theta(k_\lambda)) = \tau_\lambda \Theta \Phi(k_\lambda) \act \eps, \]
that is, $\Phi$ satisfies the vertex conditions \eqref{VertexCondition}. The edge symmetries \eqref{EdgeSymmetries} can be checked similarly, again by making use of \eqref{Tau-Cov} and \eqref{TR}.

Conversely, assume that a continuous Bloch frame $\eff{\Phi}$ is given on $\Bred$, and satisfies \eqref{VertexCondition} and \eqref{EdgeSymmetries}. We extend the definition of $\eff{\Phi}$ to the unit cell $\B$ by setting
\[ \uc{\Phi}(k) := \begin{cases}
\eff{\Phi}(k) & \text{if } k \in \Bred, \\
\Theta \eff{\Phi}(\theta(k)) \act \eps & \text{if } k \in \B \setminus \Bred.
\end{cases} \]
The definition of $\uc{\Phi}$ can in turn be extended to $\R^2$ by setting
\[ \Phi(k) := \tau_\lambda \uc{\Phi}(k') \quad \text{if } k = k' + \lambda \text{ with } k' \in \B, \: \lambda \in \Lambda. \]
The vertex conditions and the edge symmetries ensure that the above defines a global \emph{continuous} Bloch frame; moreover, by construction $\Phi$ is also \emph{symmetric}, in the sense of Definition \ref{dfn:Bloch}.
\end{proof}

In view of Proposition \ref{global}, our strategy to examine Question (Q$_2$) will be to consider a continuous Bloch frame $\Psi$ defined over the \effective unit cell $\Bred$ (whose existence is guaranteed by the fact that $\Bred$ is contractible and no further symmetry is required), and try to modify it in order to obtain a new continuous Bloch frame $\Phi$, which is defined on the effective unit cell and satisfies also the vertex conditions and the edge symmetries; by the above extension procedure one obtains a global continuous \emph{symmetric} Bloch frame. Notice that, since both are orthonormal frames in $\Ran P(k)$, the given Bloch frame $\Psi(k)$ and the unknown symmetric Bloch frame $\Phi(k)$ differ by the action of a unitary matrix $U(k)$:
\begin{equation} \label{Healer}
\Phi(k) = \Psi(k) \act U(k), \quad U(k) \in \U(\C^m).
\end{equation}
Thus, we can equivalently treat the family $\Bred \ni k \mapsto U(k) \in \U(\C^m)$ as our unknown.

\subsection{Solving the vertex conditions} \label{sec:VertexConditions}

Let $k_\lambda$ be one of the six vertices in \eqref{vertices}. If $\Phi$ is a symmetric Bloch frame, then, by Proposition \ref{global}, $\Phi(k_\lambda)$ satisfies the vertex condition \eqref{VertexCondition}, stating the equality between the two frames $\Phi(k_\lambda)$ and $\tau_\lambda \Theta \Phi(k_\lambda) \act \eps$. For a general Bloch frame $\Psi$, instead, $\Psi(k_\lambda)$ and $\tau_\lambda \Theta \Psi(k_\lambda) \act \eps$ may very well be different. Nonetheless, they are both orthonormal frames in $\Ran P(k_\lambda)$, so there exists a unique unitary matrix $U\sub{obs}(k_\lambda) \in \U(\C^m)$ such that
\begin{equation} \label{UobsVertices}
\Psi(k_\lambda) \act U\sub{obs}(k_\lambda) = \tau_\lambda \Theta \Psi(k_\lambda) \act \eps.
\end{equation}
The \emph{obstruction unitary} $U\sub{obs}(k_\lambda)$ must satisfy a compatibility condition. In fact, by applying $\tau_\lambda \Theta$ to both sides of \eqref{UobsVertices} we obtain that
\begin{align*}
\tau_\lambda \Theta \left( \Psi(k_\lambda) \act U\sub{obs}(k_\lambda) \right) & = \tau_\lambda \Theta \left( \tau_\lambda \Theta \Psi(k_\lambda) \act \eps \right) = \\
& = \tau_\lambda \Theta \tau_\lambda \Theta \Psi(k_\lambda) \act \overline{\eps} = \\
& = \tau_\lambda \tau_{-\lambda} \Theta^2 \Psi(k_\lambda) \act \overline{\eps} = \\
& = \Psi(k_\lambda) \act (-\overline{\eps})
\end{align*}
where in the second-to-last equality we used the commutation relation \eqref{item:TRtau}. On the other hand, the left-hand side of this equality is also given by
\begin{align*}
\tau_\lambda \Theta \left( \Psi(k_\lambda) \act U\sub{obs}(k_\lambda) \right) & = \tau_\lambda \Theta \Psi(k_\lambda) \act \overline{U\sub{obs}}(k_\lambda) = \\
& = \left( \Psi(k_\lambda) \act \left(U\sub{obs}(k_\lambda) \eps^{-1} \right) \right) \act \overline{U\sub{obs}}(k_\lambda) = \\
& = \Psi(k_\lambda) \act \left( U\sub{obs}(k_\lambda) \eps^{-1} \overline{U\sub{obs}}(k_\lambda) \right).
\end{align*}
By the freeness of the action of $\U(\C^m)$ on frames and the fact that $\eps^{-1} = - \overline{\eps}$ by Remark \ref{rmk:eps}, we deduce that
\begin{equation} \label{Compatibility}
U\sub{obs}(k_\lambda) \, \overline{\eps} \, \overline{U\sub{obs}}(k_\lambda) = \overline{\eps}, \quad \text{\ie} \quad U\sub{obs}(k_\lambda)\Tr \eps = \eps \, U\sub{obs}(k_\lambda).
\end{equation}

Now, notice that the given Bloch frame $\Psi(k)$ and the unknown symmetric Bloch frame $\Phi(k)$, satisfying the vertex condition \eqref{VertexCondition}, differ by the action of a unitary matrix $U(k)$, as in \eqref{Healer}. We want to relate the obstruction unitary $U\sub{obs}(k_\lambda)$ to the unknown $U(k_\lambda)$. In order to do so, we rewrite \eqref{VertexCondition} as
\begin{align*}
\Psi(k_\lambda) \act U(k_\lambda) & = \Phi(k_\lambda) = \tau_\lambda \Theta \Phi(k_\lambda) \act \eps = \\
& = \tau_\lambda \Theta \left( \Psi(k_\lambda) \act U(k_\lambda) \right) \act \eps = \\
& = \tau_\lambda \Theta \Psi(k_\lambda) \act \left( \overline{U}(k_\lambda) \eps \right) = \\
& = \left( \Psi(k_\lambda) \act \left( U\sub{obs}(k_\lambda) \eps^{-1} \right) \right) \act \left( \overline{U}(k_\lambda) \eps \right) = \\
& = \Psi(k_\lambda) \act \left( U\sub{obs}(k_\lambda) \eps^{-1} \overline{U}(k_\lambda) \eps \right).
\end{align*}
Again by the freeness of the $\U(\C^m)$-action, we conclude that
\begin{equation} \label{Uobs-U}
U(k_\lambda) = U\sub{obs}(k_\lambda) \eps^{-1} \overline{U}(k_\lambda) \eps, \quad \text{\ie} \quad U\sub{obs}(k_\lambda) = U(k_\lambda) \eps^{-1} U(k_\lambda)\Tr \eps.
\end{equation}

The next Lemma establishes the equivalence between the two conditions \eqref{Compatibility} and \eqref{Uobs-U}.

\begin{lemma} \label{V->U}
Let $\eps \in \U(\C^m) \cap {\bigwedge}^2 \C^m$ be a unitary and skew-symmetric matrix. The following conditions on a unitary matrix $V \in \U(\C^m)$ are equivalent:
\begin{enumerate}[label=$(\alph*)$]
\item $V$ is such that $V\Tr \eps = \eps V$;
\item there exists a matrix $U \in \U(\C^m)$ such that $V = U \eps^{-1} U\Tr \eps$.
\end{enumerate}
\end{lemma}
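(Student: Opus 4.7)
The plan is to prove the two directions separately. The direction (b) $\Rightarrow$ (a) is a direct calculation: starting from $V = U\eps^{-1}U^T\eps$ and using the identities $\eps^T = -\eps$ and $(\eps^{-1})^T = -\eps^{-1}$ (both immediate consequences of the skew-symmetry of $\eps$), I would compute $V^T\eps$ and verify, in a few lines, that the result simplifies to $\eps V$.

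The reverse implication (a) $\Rightarrow$ (b) is the substantive one. The key observation is that introducing the auxiliary matrix $S := V\eps^{-1}$ recasts both conditions into a symmetric form. First, $S$ is automatically unitary, as a product of unitaries. Second, since $S^T = (\eps^{-1})^T V^T = -\eps^{-1}V^T$, the skew-symmetry $S^T = -S$ is equivalent to $\eps^{-1}V^T = V\eps^{-1}$, i.e.\ to condition (a). On the other hand, condition (b), rewritten as $V\eps^{-1} = U\eps^{-1}U^T$, is equivalent to the existence of $U \in \U(\C^m)$ such that $S = U\eps^{-1}U^T$. Thus the lemma reduces to the following purely linear-algebraic claim: any two unitary skew-symmetric matrices of the same size are mutually \emph{unitarily congruent}, via a congruence of the form $M \mapsto W M W^T$ with $W \in \U(\C^m)$.

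This final claim is an immediate consequence of Hua's normal form theorem (already invoked in Remark \ref{rmk:eps}): each of the unitary skew-symmetric matrices $S$ and $\eps^{-1}$ can be brought to the standard symplectic form \eqref{eps=J} by a unitary congruence, so composing the two relations yields a unitary $U$ with $S = U\eps^{-1}U^T$, and hence $V = U\eps^{-1}U^T\eps$. The only mild technical subtlety I expect is tracking whether Hua's theorem is most naturally phrased as $W^T M W$ or $W M W^T$, but this is purely cosmetic and is reconciled by the fact that the transpose of a unitary matrix is again unitary, so $U$ can always be taken inside $\U(\C^m)$.
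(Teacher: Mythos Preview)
Your proposal is correct but follows a genuinely different route from the paper's own proof. You recast the problem via $S := V\eps^{-1}$ and reduce (a)$\Rightarrow$(b) to the fact that any two unitary skew-symmetric matrices of the same size are unitarily congruent, then appeal directly to Hua's normal-form theorem. The paper instead works with the spectral decomposition of $V$: writing $V = W\eu^{\iu\Lambda}W^*$, it observes that condition (a) forces $A := W\Tr\eps W$ to commute with $\eu^{\iu\Lambda}$, and then exhibits the explicit square root $U := W\eu^{\iu\Lambda/2}W^*$ (with the arguments $\lambda_j$ normalized to $[0,2\pi)$), verifying by a short calculation that $U\eps^{-1}U\Tr\eps = V$.

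Your argument is shorter and more structural; the paper's is more hands-on but has a concrete payoff elsewhere in the text. In Section~\ref{sec:EvalVertices} the authors use precisely this square-root construction to write $\det U(v_i) = \exp\bigl(\iu(\lambda_1^{(i)}/2+\cdots+\lambda_m^{(i)}/2)\bigr)$, which feeds directly into the algorithmic formula \eqref{easierdelta} for the invariant $\delta$. Your normal-form approach proves the lemma cleanly but does not single out this particular $U$, so if you adopt it you would need to supply a separate argument (or revert to the spectral picture) when the explicit determinant relation is invoked later.
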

\begin{proof}
The implication $(b) \Rightarrow (a)$ is obvious. We prove the implication $(a) \Rightarrow (b)$.

Every unitary matrix can be diagonalized by means of a unitary transformation. Hence there exist a unitary matrix $W \in \U(\C^m)$ and a diagonal matrix $\Lambda = \diag(\lambda_1, \ldots, \lambda_m)$ such that
\[ V = W \eu^{\iu \Lambda} W^* \]
where the collection $\set{\eu^{\iu \lambda_1}, \ldots, \eu^{\iu \lambda_m}}$ forms the spectrum of $V$. The condition $V\Tr \eps = \eps V$ is then equivalent to
\[ \overline{W} \eu^{\iu \Lambda} W\Tr \eps = \eps W \eu^{\iu \Lambda} W^* \quad \Longleftrightarrow \quad \eu^{\iu \Lambda} W\Tr \eps W = W\Tr \eps W \eu^{\iu \Lambda}, \]
\ie the matrix $A := W\Tr \eps W$ commutes with the diagonal matrix $\eu^{\iu \Lambda}$. 

According to \cite[Lemma in \S 6]{Hua}, for every unitary matrix $Z$ there exists a unitary matrix $Y$ such that $Y^2 = Z$ and that $Y B = B Y$ whenever $Z B = B Z$. We can apply this fact in the case where $Z = \eu^{\iu \Lambda}$ is diagonal, and give an explicit form of $Y$: Normalize the arguments $\lambda_i$ of the eigenvalues of $V$ so that $\lambda_i \in [0,2\pi)$, and define $Y := \eu^{\iu \Lambda/2} = \diag\left(\eu^{\iu \lambda_1/2}, \ldots,\eu^{\iu \lambda_m/2}\right)$.

We claim now that the matrix
\[ U := W \eu^{\iu \Lambda/2} W^*, \]
which is clearly unitary, satisfies condition $(b)$ in the statement. Indeed, upon multiplying by $W W^* = \Id$, we get that
\[ U \eps^{-1} U\Tr \eps = W \eu^{\iu \Lambda/2} W^* \eps^{-1} \overline{W} \eu^{\iu \Lambda/2} W\Tr \eps W W^* = W \eu^{\iu \Lambda/2} A^{-1} \eu^{\iu \Lambda/2} A W^* = W \eu^{\iu \Lambda} W^* = V. \]
This concludes the proof of the Lemma.
\end{proof}

The above result allows us to solve the vertex condition, or equivalently the equation \eqref{Uobs-U} for $U(k_\lambda)$, by applying Lemma \ref{V->U} to $V = U\sub{obs}(k_\lambda)$ and $U = U(k_\lambda)$.

\subsection{Extending to the edges} \label{sec:EdgeExtension}

To extend the definition of the symmetric Bloch frame $\Phi(k)$ (or equivalently of the matrix $U(k)$ appearing in \eqref{Healer}) also for $k$ on the edges $E_i$ which constitute the boundary $\partial \Bred$, we use the path-connectedness of the group $\U(\C^m)$. Indeed, we can choose a continuous path%
\soloarXiv{
\footnote{Explicitly, a continuous path of unitaries $W \colon [0,1/2] \to \U(\C^m)$ connecting two unitary matrices $U_1$ and $U_2$ can be constructed as follows. Diagonalize $U_1^{-1} U_2 = P D P^*$, where $D = \diag\left(\eu^{\iu \mu_1}, \ldots, \eu^{\iu \mu_m}\right)$ and $P \in \U(\C^m)$. For $t \in [0,1/2]$ set
\[ W(t) = U_1 P D_t P^*, \quad \text{where} \quad D_t := \diag\left(\eu^{\iu 2 t \mu_1}, \ldots, \eu^{\iu 2t \mu_m}\right). \]
One easily realizes that $W(0) = U_1$, $W(1/2) = U_2$ and $W(t)$ depends continuously on $t$, as required.}} %
$W_i \colon [0, 1/2] \to \U(\C^m)$ such that $W_i(0) = U(v_i)$ and $W_i(1/2) = U(v_{i+1})$, where $v_i$ and $v_{i+1}$ are the end-points of the edge $E_i$. Now set
\[ \widetilde{U}(k) := \begin{cases}
W_1(-k_2) & \text{if } k \in E_1, \\
W_2(k_1) & \text{if } k \in E_2, \\
W_3(k_2+1/2) & \text{if } k \in E_3.
\end{cases} \]
In this way we obtain a continuous map $\widetilde{U} \colon E_1 \cup E_2 \cup E_3 \to \U(\C^m)$. Let $\widetilde{\Phi}(k) := \Psi(k) \act \widetilde{U}(k)$ for $k \in E_1 \cup E_2 \cup E_3$; we extend this frame to a $\tau$-equivariant, time-reversal invariant frame on $\partial \Bred$ by setting
\begin{equation} \label{Psihat}
\widehat{\Phi}(k) := \begin{cases}
\widetilde{\Phi}(k) & \text{if } k \in E_1 \cup E_2 \cup E_3, \\
\tau_{e_1} \Theta \widetilde{\Phi}(\theta t_{-e_1}(k)) \act \eps & \text{if } k \in E_4, \\
\tau_{e_2} \widetilde{\Phi}(t_{-e_2}(k)) & \text{if } k \in E_5, \\
\Theta \widetilde{\Phi}(\theta(k)) \act \eps & \text{if } k \in E_6.
\end{cases}
\end{equation}

By construction, $\widehat{\Phi}(k)$ satisfies all the edge symmetries for a symmetric Bloch frame $\Phi$ listed in \eqref{EdgeSymmetries}, as one can immediately check.

\begin{rmk}[Proof of Theorem \ref{thm:Q1}] \label{rmk:Q1}
The above argument also shows that, when $d=1$, global continuous symmetric Bloch frames for a family of projectors $\set{P(k)}_{k \in \R}$ satisfying Assumption \ref{proj} can always be constructed. Indeed, the edge $E_1 \cup E_6$ can be regarded as a $1$-dimensional unit cell $\B^{(1)}$, and the edge symmetries on it coincide exactly with properties \ref{tau-cov} and \ref{tr}. Thus, by forcing $\tau$-equivariance, one can extend the definition of the frame continuously on the whole $\R$, as in the proof of Proposition \ref{global}. Hence, this proves Theorem \ref{thm:Q1}.
\end{rmk}

\subsection{Extending to the face: a $\Z_2$ obstruction} \label{sec:FaceExtension}

In order to see whether it is possible to extend the frame $\widehat{\Phi}$ to a continuous symmetric Bloch frame $\Phi$ defined on the whole \effective unit cell $\Bred$, we first introduce the unitary map $\widehat{U}(k)$ which maps the input frame $\Psi(k)$ to the frame $\widehat{\Phi}(k)$, \ie such that \begin{equation} \label{hatU}
\widehat{\Phi}(k) = \Psi(k) \act \widehat{U}(k), \quad k \in \partial \Bred
\end{equation}
(compare \eqref{Healer}). This defines a continuous map $\widehat{U} \colon \partial \Bred \to \U(\C^m)$; we are interested in finding a continuous extension $U \colon \Bred \to \U(\C^m)$ of $\widehat{U}$ to the \effective unit cell. 

From a topological viewpoint, $\partial \Bred$ is homeomorphic to a circle $S^1$. It is well-known \cite[Thm. 17.3.1]{Dubrovin} that, if $X$ is a topological space, then a continuous map $f \colon S^1 \to X$ defines an element in the fundamental group $\pi_1(X)$ by taking its homotopy class $[f]$. Moreover, $f$ extends to a continuous map $F \colon D^2 \to X$, where $D^2$ is the $2$-dimensional disc enclosed by the circle $S^1$, if and only if $[f] \in \pi_1(X)$ is the trivial element. In our case, the space $X$ is the group $\U(\C^m)$, and it is also well-known \cite[Ch. 8, Sec. 12]{Husemoller} that the exact sequence of groups
\[ 1 \longrightarrow \mathcal{S}\U(\C^m) \longrightarrow \U(\C^m) \xrightarrow{\det} U(1) \longrightarrow 1 \]
induces an isomorphism $\pi_1(\U(\C^m)) \simeq \pi_1(U(1))$. On the other hand, the degree homomorphism \cite[\S 13.4(b)]{Dubrovin}
\begin{equation} \label{deg} 
\deg \colon \pi_1(U(1)) \stackrel{\sim}{\longrightarrow} \Z, \quad [\varphi \colon S^1 \to U(1)] \mapsto \frac{1}{2\pi \iu} \oint_{S^1} \di z \, \partial_z \log \varphi(z)
\end{equation}
establishes an isomorphism of groups $\pi_1(U(1)) \simeq \Z$. We conclude that a continuous map $f \colon \partial \Bred \to \U(\C^m)$ can be continuously extended%
\footnote{Explicit formulas for such a continuous extension can be found in \cite[Remark 4.5]{FiPaPi}.} %
to $F \colon \Bred \to \U(\C^m)$ if and only if $\deg([\det f]) \in \Z$ is zero.

In our case, we want to extend the continuous map $\widehat{U} \colon \partial \Bred \to \U(\C^m)$ to the whole \effective unit cell $\Bred$. However, rather than checking whether $\deg([\det \widehat{U}])$ vanishes, it is sufficient to find a unitary-matrix-valued map that ``unwinds'' the determinant of $\widehat{U}(k)$, while preserving the relevant symmetries on Bloch frames. More precisely, the following result holds.

\begin{prop} \label{Xi}
Let $\widehat{\Phi}$ be the Bloch frame defined on $\partial \Bred$ that appears in \eqref{hatU}, satisfying the vertex conditions \eqref{VertexCondition} and the edge symmetries \eqref{EdgeSymmetries}. Assume that there exists a continuous map $X \colon \partial \Bred \to \U(\C^m)$ such that
\begin{enumerate}[label=$(\mathrm{X}_{\arabic*})$,ref=$(\mathrm{X}_{\arabic*})$]
\item \label{XiDeg} $\deg([\det X]) = - \deg([\det \widehat{U}])$, and
\item \label{XiSymm} also the frame $\widehat{\Phi} \act X$ satisfies \eqref{VertexCondition} and \eqref{EdgeSymmetries}.
\end{enumerate}
Then there exists a global continuous symmetric Bloch frame $\Phi$ that extends $\widehat{\Phi} \act X$ to the whole $\R^2$.

Conversely, if $\Phi$ is a global continuous symmetric Bloch frame, then its restriction to $\partial \Bred$ differs from $\widehat{\Phi}$ by the action of a unitary-matrix-valued continuous map $X$, satisfying \ref{XiDeg} and \ref{XiSymm} above.
\end{prop}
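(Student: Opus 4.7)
The plan is to reduce both implications to the topological extension principle recalled just above the statement: a continuous map $f \colon S^1 \to \U(\C^m)$ extends continuously to the $2$-disc bounded by $S^1$ if and only if $\deg([\det f]) = 0 \in \Z$. We will apply this to the unitary-matrix-valued map $\widehat{U} \cdot X$ on $\partial \Bred$.

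For the forward direction, I would start from a map $X$ satisfying \ref{XiDeg} and \ref{XiSymm} and set $U'(k) := \widehat{U}(k) X(k)$ on $\partial \Bred$, so that by construction $\widehat{\Phi}(k) \act X(k) = \Psi(k) \act U'(k)$. Since $\det U' = \det \widehat{U} \cdot \det X$ and the degree is a group homomorphism, condition \ref{XiDeg} gives
\[ \deg([\det U']) = \deg([\det \widehat{U}]) + \deg([\det X]) = 0. \]
By the topological principle above, $U'$ extends to a continuous map $U^{\mathrm{ext}} \colon \Bred \to \U(\C^m)$. Setting $\eff{\Phi}(k) := \Psi(k) \act U^{\mathrm{ext}}(k)$ defines a continuous Bloch frame on $\Bred$ whose restriction to $\partial \Bred$ coincides with $\widehat{\Phi} \act X$; by \ref{XiSymm}, this restriction satisfies the vertex conditions \eqref{VertexCondition} and the edge symmetries \eqref{EdgeSymmetries}. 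An application of the second part of Proposition \ref{global} then produces the desired global continuous symmetric Bloch frame $\Phi$ extending $\eff{\Phi}$.

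For the converse direction, I would start from a global continuous symmetric Bloch frame $\Phi$. At each $k \in \partial \Bred$, both $\Phi(k)$ and $\widehat{\Phi}(k)$ are orthonormal bases of $\Ran P(k)$, and hence differ by the action of a unique unitary matrix $X(k) \in \U(\C^m)$ via $\Phi(k) = \widehat{\Phi}(k) \act X(k)$; continuity of $X$ follows from continuity of the two frames and of $P$. Condition \ref{XiSymm} is then a restatement of the fact that $\Phi$, being symmetric in the sense of Definition \ref{dfn:Bloch}, satisfies \eqref{VertexCondition} and \eqref{EdgeSymmetries} (the direct half of Proposition \ref{global}). For \ref{XiDeg}, I would write $\Phi$ on $\Bred$ as $\Psi \act U_\Phi$ for a continuous $U_\Phi \colon \Bred \to \U(\C^m)$; the freeness of the $\U(\C^m)$-action forces $U_\Phi|_{\partial \Bred} = \widehat{U} \cdot X$. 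Since $U_\Phi$ extends continuously over the contractible region $\Bred$, so does $\det U_\Phi$, which forces $[\det(\widehat{U} \cdot X)] = 0$ in $\pi_1(U(1))$ and hence $\deg([\det X]) = -\deg([\det \widehat{U}])$.

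The only substantive (and classical) step is the extension of $U'$ from $\partial \Bred$ to $\Bred$, resting on the isomorphism $\pi_1(\U(\C^m)) \simeq \pi_1(U(1)) \simeq \Z$ induced by the determinant and on the contractibility of the closed disc; all other steps amount to bookkeeping between a Bloch frame and the unitary encoding its difference from $\Psi$, through the correspondence \eqref{Healer}.
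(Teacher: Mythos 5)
Your argument is correct and coincides with the paper's own proof: the forward direction extends $\widehat{U}X$ over $\Bred$ using the vanishing of $\deg([\det(\widehat{U}X)])$ and then invokes Proposition \ref{global}, while the converse reads off $X$ as the unitary relating $\widehat{\Phi}$ to the restriction of $\Phi$, with \ref{XiDeg} following from the fact that the unitary encoding $\Phi$ over $\Bred$ extends to the interior. No substantive differences from the paper's argument.
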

\begin{proof}
If a map $X$ as in the statement of the Proposition exists, then the map $U := \widehat{U} X \colon \partial \Bred \to \U(\C^m)$ satisfies $\deg([\det U]) = 0$ (because $\deg$ is a group homomorphism), and hence extends continuously to $\eff{U} \colon \Bred \to \U(\C^m)$. This allows to define a continuous symmetric Bloch frame $\eff{\Phi}(k) := \Psi(k) \act \eff{U}(k)$ on the whole \effective unit cell $\Bred$, and by Proposition \ref{global} this definition can be then extended continuously to $\R^2$ to obtain the desired global continuous symmetric Bloch frame $\Phi$.

Conversely, if a global continuous symmetric Bloch frame $\Phi$ exists, then its restriction $\eff{\Phi}$ to the boundary of the \effective unit cell satisfies $\eff{\Phi}(k) = \Psi(k) \act \eff{U}(k)$ for some unitary matrix $\eff{U}(k) \in \U(\C^m)$, and moreover $\deg([\det \eff{U}]) = 0$ because $\eff{U} \colon \partial \Bred \to \U(\C^m)$ extends to the whole \effective unit cell. From \eqref{hatU} we deduce that $\eff{\Phi}(k) = \widehat{\Phi}(k) \act \left(  \widehat{U}(k)^{-1} \eff{U}(k) \right)$; the unitary matrix $X(k) :=  \widehat{U}(k)^{-1} \eff{U}(k)$ then satisfies $\deg([\det X]) = - \deg([\det \widehat{U}])$, and, when restricted to $\partial \Bred$, both $\eff{\Phi}$ and $\widehat{\Phi}$ have the same symmetries, namely \eqref{VertexCondition} and \eqref{EdgeSymmetries}.
\end{proof}

Proposition \ref{Xi} reduces the question of existence of a global continuous symmetric Bloch frame to that of existence of a continuous map $X \colon \partial \Bred \to \U(\C^m)$ satisfying conditions \ref{XiDeg} and \ref{XiSymm}. We begin by imposing condition \ref{XiSymm} on $X$, and then check its compatibility with \ref{XiDeg}. 

We spell out explicitly what it means for the Bloch frame $\widehat{\Phi} \act X$ to satisfy the edge symmetries \eqref{EdgeSymmetries}, provided that $\widehat{\Phi}$ satisfies them. For $k=(0,k_2) \in E_1 \cup E_6$, we obtain that
\begin{align*}
\widehat{\Phi}(0,-k_2) \act X(0,-k_2) & = \Theta \left( \widehat{\Phi}(0,k_2) \act X(0,k_2) \right) \act \eps \\
& \Updownarrow \\
\Theta \widehat{\Phi}(0,k_2) \act \left( \eps X(0,-k_2) \right) & = \Theta \widehat{\Phi}(0,k_2) \act \left( \overline{X}(0,k_2) \eps \right),
\end{align*}
by which we deduce that
\begin{equation} \label{E1E6}
\eps X(0,-k_2) = \overline{X}(0,k_2) \, \eps, \quad k_2 \in \left[ -1/2, 1/2 \right].
\end{equation}

Similarly, for $k=(1/2,k_2) \in E_3 \cup E_4$, we obtain
\begin{equation} \label{E3E4}
\eps X(1/2,-k_2) = \overline{X}(1/2,k_2) \, \eps, \quad k_2 \in \left[ -1/2, 1/2 \right].
\end{equation}

Finally, the conditions \eqref{EdgeSymmetries} for $k \in E_2$ and $k \in E_5$ are clearly the inverse one of each other, so we can treat both at once. For $k=(k_1,1/2) \in E_5$, we obtain that
\begin{equation} \label{E2E5}
X(k_1,-1/2) = X(k_1,1/2), \quad k_1 \in \left[ 0, 1/2 \right].
\end{equation} 

Thus we have shown that condition \ref{XiSymm} on $X$ is equivalent to the relations \eqref{E1E6}, \eqref{E3E4} and \eqref{E2E5}. Notice that these contain also the relations satisfied by $X(k)$ at the vertices $k = k_\lambda$, which could be obtained by imposing that the frame $\widehat{\Phi} \act X$ satifies the vertex conditions \eqref{VertexCondition} whenever $\widehat{\Phi}$ does. Explicitly, these relations on $X(k_\lambda)$ read
\begin{equation} \label{XiSympl}
\eps X(k_\lambda) = \overline{X}(k_\lambda)\, \eps, \quad \text{\ie} \quad X(k_\lambda)\Tr \eps \, X(k_\lambda) = \eps.
\end{equation}
This relation has interesting consequences. Indeed, in view of Remark \ref{rmk:eps}, we may assume that $\eps$ is in the standard symplectic form \eqref{symplectic}. Then \eqref{XiSympl} implies that the matrices $X(k_\lambda)$ belong to the \emph{symplectic group} $\mathrm{Sp}(2n, \C)$. As such, they must be unimodular \cite{Mackey}, \ie
\begin{equation} \label{unimodular} \det X(k_\lambda) = 1. \end{equation}

We now proceed in establishing how the properties on $X$ we have deduced from \ref{XiSymm} influence the possible values that the degree of the map $\xi := \det X \colon \partial \Bred \to U(1)$ can attain. The integral on the boundary $\partial \Bred$ of the \effective unit cell splits as the sum of the integrals over the oriented edges $E_1, \ldots, E_6$:
\[ \deg([\xi]) = \frac{1}{2 \pi \iu} \oint_{\partial \Bred} \di z \, \partial_z \log \det X(z) = \sum_{i=1}^{6} \frac{1}{2 \pi \iu} \int_{E_i} \di z \, \partial_z \log \det X(z). \]

Our first observation is that all the summands on the right-hand side of the above equality are integers. Indeed, from \eqref{unimodular} we deduce that all maps $\xi_i := \det X \big|_{E_i} \colon E_i \to U(1)$, $i=1,\ldots,6$, are indeed periodic, and hence have well-defined degrees: these are evaluated exactly by the integrals appearing in the above sum. We will denote by $S^1_i$ the edge $E_i$ with its endpoints identified: we have thus established that
\begin{equation} \label{DegSum}
\deg([\xi]) = \sum_{i=1}^{6} \frac{1}{2 \pi \iu} \oint_{S^1_i} \di z \, \partial_z \log \det X(z) = \sum_{i=1}^{6} \deg([\xi_i]).
\end{equation}

From Equation \eqref{E2E5}, the integrals over $E_2$ and $E_5$ compensate each other, because the integrands are the same but the orientations of the two edges are opposite. We thus focus our attention on the integrals over $S^1_1$ and $S^1_6$ (respectively on $S^1_3$ and $S^1_4$). From Equations \eqref{E1E6} and \eqref{E3E4}, we deduce that if $k_* \in \set{0, 1/2}$ then
\[ \overline{X}(k_*,k_2) = \eps X(k_*, -k_2) \eps^{-1} \]
which implies in particular that
\[ {\left( \det X(k_*,k_2) \right)}^{-1} = \overline{\det X(k_*,k_2)} = \det X(k_*, -k_2). \]
Thus, calling $z = -k_2$ the coordinate on $S^1_1$ and $S^1_6$ (so that the two circles are oriented positively with respect to $z$), we can rewrite the above equality for $k_* = 0$ as
\[ \xi_6(z) = \overline{\xi_1(-z)} \]
so that
\begin{align*}
\deg([\xi_6]) & = - \deg([\overline{\xi_1}]) && \text{(because evaluation at $(-z)$ changes the orientation of $S^1_1$)} \\
& = \deg([\xi_1]) && \text{(because if $\varphi \colon S^1 \to U(1)$ then $\deg([\overline{\varphi}]) = - \deg([\varphi])$)}.
\end{align*}
Similarly, for $k_* = 1/2$ we get (using this time the coordinate $z = k_2$ on $S^1_3$ and $S^1_4$)
\[ \deg([\xi_4]) = \deg([\xi_3]). \]
Plugging both the equalities that we just obtained in \eqref{DegSum}, we conclude that
\begin{equation} \label{degdetXi}
\begin{aligned}
\deg([\xi]) = 2 \left( \deg([\xi_1]) + \deg[\xi_3] \right) \in 2 \cdot \Z.
\end{aligned}
\end{equation}

We have thus proved the following

\begin{prop} \label{degXieven}
Let $X \colon \partial \Bred \to \U(\C^m)$ satisfy condition \ref{XiSymm}, as in the statement of Proposition \ref{Xi}. Then the degree of its determinant is even:
\[ \deg([\det X]) \in 2 \cdot \Z. \]
\end{prop}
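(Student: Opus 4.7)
The plan is to compute $\deg([\det X])$ by splitting the boundary integral \eqref{deg} over the six oriented edges $E_1, \ldots, E_6$ and then exploiting the edge relations \eqref{E1E6}, \eqref{E3E4}, \eqref{E2E5} (together with the vertex relation \eqref{XiSympl}) to pair the contributions two by two.

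First I would verify that each edge integral
\[
\deg([\xi_i]) := \frac{1}{2\pi\iu}\int_{E_i} \di z\,\partial_z\log\det X(z), \qquad \xi_i := \det X\big|_{E_i},
\]
is itself an integer, so that \eqref{DegSum} makes sense termwise. This is where the vertex condition \eqref{XiSympl} enters: condition \ref{XiSymm} applied at the six corners $v_1,\ldots,v_6$ forces $X(v_i)\Tr\eps\,X(v_i) = \eps$, hence $X(v_i)$ lies in $\mathrm{Sp}(2n,\C)$ and is unimodular by \eqref{unimodular}. Thus $\det X$ takes the same value (namely $1$) at both endpoints of every edge $E_i$, so $\xi_i$ descends to a continuous map $S^1_i \to U(1)$ with a well-defined integer degree.

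Next I would use the edge symmetries to pair up the contributions. The relation \eqref{E2E5}, $X(k_1,-1/2) = X(k_1,1/2)$, identifies the integrands along $E_2$ and $E_5$, which are traversed with opposite orientations; hence $\deg([\xi_2]) + \deg([\xi_5]) = 0$. For the remaining four edges I would parametrise the pairs $(E_1, E_6)$ and $(E_3, E_4)$ by a common coordinate and use \eqref{E1E6} (respectively \eqref{E3E4}) to rewrite $\det X$ on the lower edge as the complex conjugate of $\det X$ on the upper one with the parameter reversed: taking determinants of $\eps X(k_*,-k_2) = \overline{X}(k_*,k_2)\,\eps$ gives $\det X(k_*,-k_2) = \overline{\det X(k_*,k_2)}$ for $k_* \in \{0,1/2\}$. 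Since conjugation reverses the degree of a circle-valued map and the reparametrisation $k_2 \mapsto -k_2$ reverses orientation, the two sign flips cancel and one obtains $\deg([\xi_6]) = \deg([\xi_1])$ and $\deg([\xi_4]) = \deg([\xi_3])$.

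Plugging these equalities into \eqref{DegSum} yields $\deg([\det X]) = 2(\deg([\xi_1]) + \deg([\xi_3])) \in 2\Z$, as claimed. The only non-routine point is the bookkeeping in the third step: one has to be careful that the orientation-reversing reparametrisation and the complex conjugation each contribute a factor $-1$ to the degree, so their composition preserves it; once this sign analysis is in place, everything else is a direct consequence of the symmetries imposed by \ref{XiSymm}.
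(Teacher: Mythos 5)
Your proposal is correct and follows essentially the same route as the paper: splitting the degree over the six edges, using the vertex unimodularity \eqref{unimodular} to make each edge contribution an integer degree, cancelling $E_2$ against $E_5$ via \eqref{E2E5}, and pairing $E_1$ with $E_6$ and $E_3$ with $E_4$ through the conjugation-plus-reparametrisation identity from \eqref{E1E6} and \eqref{E3E4}, where the two sign reversals cancel. The sign bookkeeping you flag is exactly the point the paper's argument also hinges on, and you handle it correctly.
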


By the above Proposition, we deduce the following $\Z_2$ classification for symmetric families of projectors in $2$ dimensions.

\begin{thm} \label{thm:Q2'}
Let $\set{P(k)}_{k \in \R^2}$ be a family of orthogonal projectors satisfying Assumption \ref{proj}. Let $\widehat{U} \colon \partial \Bred \to \U(\C^m)$ be defined as in \eqref{hatU}. Then there exists a global continuous symmetric Bloch frame $\Phi$ for $\set{P(k)}_{k \in \R^2}$ if and only if 
\[ \deg([\det \widehat{U}]) \equiv 0 \bmod{2}. \]
\end{thm}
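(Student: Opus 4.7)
The plan is to deduce Theorem \ref{thm:Q2'} by combining the two previous Propositions with an explicit construction that realizes any prescribed even degree as $\deg([\det X])$ for a map $X$ satisfying the symmetry condition \ref{XiSymm}.

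\emph{Necessity.} Suppose a global continuous symmetric Bloch frame $\Phi$ exists. By the converse part of Proposition \ref{Xi}, one obtains a continuous $X \colon \partial \Bred \to \U(\C^m)$ satisfying both \ref{XiDeg} and \ref{XiSymm}. Proposition \ref{degXieven} then forces $\deg([\det X]) \in 2\Z$, so by \ref{XiDeg} we conclude $\deg([\det \widehat{U}]) = -\deg([\det X]) \equiv 0 \pmod 2$.

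\emph{Sufficiency.} Suppose $\deg([\det \widehat{U}]) = 2M$ for some $M \in \Z$. By Proposition \ref{Xi}, it suffices to exhibit a continuous $X \colon \partial \Bred \to \U(\C^m)$ satisfying \ref{XiSymm} and having $\deg([\det X]) = -2M$. The computation leading to \eqref{degdetXi} in the proof of Proposition \ref{degXieven} shows that for any $X$ satisfying \ref{XiSymm} one has $\deg([\det X]) = 2(\deg([\xi_1]) + \deg([\xi_3]))$; this motivates the construction, in which we arrange $\deg([\xi_1]) = -M$ and $\deg([\xi_3]) = 0$.

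\emph{Explicit construction of $X$.} By Remark \ref{rmk:eps}, we may assume that $\eps$ is in the standard symplectic form \eqref{symplectic}. A direct block computation shows that if $X(k) = \diag(d_1(k), \ldots, d_m(k))$ is diagonal, then condition \eqref{E1E6} (and analogously \eqref{E3E4}) becomes $d_{n+j}(0, -k_2) = \overline{d_j(0, k_2)}$ for $j = 1, \ldots, n$, and similarly on the right edge. On the left edge $E_1 \cup E_6$ set
\[
X(0, k_2) := \diag\bigl(\eu^{-2\pi \iu M k_2}, 1, \ldots, 1,\, \eu^{-2\pi \iu M k_2}, 1, \ldots, 1\bigr),
\]
with the two nontrivial entries in positions $1$ and $n+1$; on the right edge $E_3 \cup E_4$ set $X(1/2, k_2) := \Id$. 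One checks that the pairing relation holds, that $X(0, -1/2) = X(0, 1/2)$ (so the map descends to each $S^1_i$), and that $\det X(0, k_2) = \eu^{-4\pi \iu M k_2}$, which has degree $-M$ on $S^1_1$, while $\det X(1/2, k_2) \equiv 1$. Finally, on $E_2$ and $E_5$ interpolate continuously between the prescribed boundary values at the four vertices, using that $\U(\C^m)$ is path-connected; condition \eqref{E2E5} is imposed by definition by setting $X(\,\cdot\,, -1/2) = X(\,\cdot\,, 1/2)$. The resulting $X$ is continuous, satisfies \ref{XiSymm}, and has $\deg([\det X]) = 2(-M + 0) = -2M$, completing the proof via Proposition \ref{Xi}.

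The main conceptual content has already been absorbed into Proposition \ref{degXieven} (showing the degree must be even) and Proposition \ref{Xi} (reducing frame existence to the degree-matching problem), so the theorem essentially packages them together. The only step that requires genuine work is the explicit construction of $X$ with prescribed even degree and all the correct boundary symmetries; the subtlety to watch is ensuring continuity at the four vertices on the horizontal edges, which is handled by choosing the exponent so that the endpoint values at $k_2 = \pm 1/2$ on the vertical edges agree, and by exploiting path-connectedness of $\U(\C^m)$ to interpolate on the horizontal edges.
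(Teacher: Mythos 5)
Your overall route is the same as the paper's: necessity via the converse part of Proposition \ref{Xi} combined with Proposition \ref{degXieven}, and sufficiency by exhibiting an explicit (essentially diagonal) map $X$ satisfying \eqref{E1E6}, \eqref{E3E4}, \eqref{E2E5} with prescribed even degree, then invoking Proposition \ref{Xi}. The cosmetic differences (winding placed on $E_1\cup E_6$ instead of $E_3\cup E_4$, $\eps$ taken in the standard symplectic form \eqref{symplectic} rather than \eqref{eps=J}, and an interpolation on $E_2$, $E_5$ instead of the paper's trick of using the phase $\eu^{-2\pi\iu s(k_2+1/2)}$, which equals $1$ at the corners so that $X\equiv\Id$ works on the horizontal edges) are all legitimate, and your reduction of \eqref{E1E6} to the pairing $d_{n+j}(0,-k_2)=\overline{d_j(0,k_2)}$ is correct.

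There is, however, a sign error in the one computation that carries the weight of the sufficiency step. With the orientations used throughout Section \ref{sec:FaceExtension} (the edges $E_i$ oriented as in Figure \ref{fig:Bred}, so that $S^1_1$ and $S^1_6$ are parametrized by $z=-k_2$), your choice $\det X(0,k_2)=\eu^{-4\pi\iu M k_2}$ gives $\xi_1(z)=\eu^{4\pi\iu M z}$, hence $\deg([\xi_1])=+M$, not $-M$; consequently $\deg([\det X])=+2M$ by \eqref{degdetXi}. (As a consistency check, the paper's own choice $\eu^{-4\pi\iu s(k_2+1/2)}$ on the upward-oriented right edge does give $-2s$.) With your $X$ as written, $\deg([\det(\widehat{U}X)])=4M$, so the extension to $\Bred$ fails for $M\neq 0$ and condition \ref{XiDeg} is violated. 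The repair is immediate: replace the exponent by $+2\pi\iu M k_2$ (equivalently, run your construction with $-M$ in place of $M$); the pairing relation, the periodicity $X(0,-1/2)=X(0,1/2)$, and the unimodularity at the vertices all still hold, and then $\deg([\det X])=-2M$ as required. With that one sign fixed, your proof is complete and coincides in substance with the paper's.
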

\begin{proof}
By Proposition \ref{Xi}, we know that the existence of a global continuous symmetric Bloch frame is equivalent to that of a continuous map $X \colon \partial \Bred \to \U(\C^m)$ satisfying conditions \ref{XiDeg} and \ref{XiSymm}, so that in particular it should have $\deg([\det X]) = - \deg([\det \widehat{U}])$. In view of Proposition \ref{degXieven}, condition \ref{XiSymm} cannot hold in the case where $\deg([\det \widehat{U}])$ is odd.

In the case in which $\deg([\det \widehat{U}])$ is even, instead, it just remains to exhibit a map $X \colon \partial \Bred \to \U(\C^m)$ satisfying \eqref{E1E6}, \eqref{E3E4} and \eqref{E2E5}, and such that
\[ \deg([\det X]) = - \deg([\det \widehat{U}]) = - 2s, \quad s \in \Z. \]
In the basis where $\eps$ is of the form \eqref{eps=J}, define
\[ X(k) := \begin{cases}
\diag \left( e^{-2 \pi \iu s (k_2+1/2)} , e^{-2 \pi \iu s (k_2+1/2)} , 1 , \ldots, 1 \right) & \text{if } k \in E_3 \cup E_4, \\
\Id & \text{otherwise}.
\end{cases} \]
One checks at once that $X(k)$ satisfies \eqref{E1E6}, \eqref{E3E4} and \eqref{E2E5} -- which are equivalent to \ref{XiSymm}, as shown before --, and defines a continuous map $X \colon \partial \Bred \to \U(\C^m)$. Since $X$ is constant on $E_1$, formula \eqref{degdetXi} for the degree of the determinant of $X(k)$ simplifies to
\[ \deg([\det X]) = 2 \left( \frac{1}{2 \pi \iu} \oint_{S^1_3} \di z \, \partial_z \log \det X(z) \right) = 2 \left( \frac{1}{2 \pi \iu} \int_{-1/2}^{0} \di k_2 \, \partial_{k_2} \log \det X(1/2,k_2) \right). \]
One immediately computes $\deg([\det X]) = - 2 s$, as wanted.
\end{proof}

\goodbreak

The index
\begin{equation} \label{delta}
\delta(P) := \deg([\det \widehat{U}]) \bmod{2}
\end{equation}
is thus the \textbf{$\Z_2$ topological invariant} (see Section \ref{sec:TopInv} below) of the family of projectors $\set{P(k)}_{k \in \R^2}$, satisfying Assumption \ref{proj}, which encodes the obstruction to the existence of a global continuous symmetric Bloch frame. One of our main results, Theorem \ref{thm:Q2}, is then reduced to Theorem \ref{thm:Q2'}. 

\subsection{\texorpdfstring{Well-posedness of the definition of $\delta$}{Well-posedness of the definition of delta}} \label{app:Independence}

In the construction of the previous Subsection, leading to the definition \eqref{delta} of the $\Z_2$ index $\delta$, a number of choices has to be performed, namely the input frame $\Psi$ and the interpolation $\widetilde{U}$ on $E_1 \cup E_2 \cup E_3$. This Subsection is devoted to showing that the value of the index $\delta(P) \in \Z_2$ is \emph{independent} of such choices, and thus is really associated with the bare family of projectors $\set{P(k)}_{k \in \R^2}$. Moreover, the index $\delta$ is also independent of the choice of a basis $\set{e_1, e_2}$ for the lattice $\Lambda$, as will be manifest from the equivalent formulation \eqref{easydelta} of the invariant we will provide in Section \ref{sec:EvalVertices}.

\subsubsection{\texorpdfstring{\textbf{Gauge independence of $\delta$}}{Gauge independence of delta}} \label{sec:Gauge}

As a first step, we will prove that the $\Z_2$ index $\delta$ is independent of the choice of the input Bloch frame $\Psi$.

Indeed, assume that another input Bloch frame $\primo{\Psi}$ is chosen: the two frames will be related by a continuous unitary gauge transformation, say
\begin{equation} \label{gauge}
\primo{\Psi}(k) = \Psi(k) \act G(k), \quad G(k) \in \U(\C^m), \quad k \in \Bred.
\end{equation}
These two frames will produce, via the procedure illustrated above, two symmetric Bloch frames defined on $\partial \Bred$, namely
\[ \widehat{\Phi}(k) = \Psi(k) \act \widehat{U}(k) \quad \text{and} \quad \primo{\widehat{\Phi}}(k) = \primo{\Psi}(k) \act \primo{\widehat{U}}(k), \qquad k \in \partial \Bred. \]
From \eqref{gauge} we can rewrite the second equality as
\[ \primo{\widehat{\Phi}}(k) = \Psi(k) \act \left(G(k) \primo{\widehat{U}}(k)\right), \quad k \in \partial \Bred. \]
The two frames $\widehat{\Phi}$ and $\primo{\widehat{\Phi}}$ both satisfy the vertex conditions \eqref{VertexCondition} and the edge symmetries \eqref{EdgeSymmetries}, hence the matrix $X(k) := \left(G(k) \primo{\widehat{U}}(k)\right)^{-1} \widehat{U}(k)$, which transforms $\primo{\widehat{\Phi}}$ into $\widehat{\Phi}$, enjoys condition \ref{XiSymm} from Proposition \ref{Xi}. Applying Proposition \ref{degXieven} we deduce that
\begin{equation} \label{U=GU'} 
\deg([\det \widehat{U}]) \equiv \deg\left(\left[\det \left(G \primo{\widehat{U}} \right)\right]\right) \mod 2. 
\end{equation}

Observe that, since the degree is a group homomorphism,
\[ \deg\left(\left[\det \left(G \primo{\widehat{U}}\right)\right]\right) = \deg([\det \primo{\widehat{U}}]) + \deg([\det G]). \]
The matrix $G(k)$ is by hypothesis defined and continuous on the whole effective unit cell: this implies that the degree of its determinant along the boundary of $\Bred$ vanishes, because its restriction to $\partial \Bred$ extends continuously to the interior of $\Bred$. Thus we conclude that $\deg([\det (G \primo{\widehat{U}})]) = \deg([\det \primo{\widehat{U}}])$; plugging this in \eqref{U=GU'} we conclude that
\[ \delta = \deg([\det \widehat{U}]) \equiv \deg([\det \primo{\widehat{U}}]) = \primo{\delta} \mod 2, \]
as we wanted.

\subsubsection{\textbf{Invariance under edge extension}} \label{app:EdgeInvariance}

Recall that, after solving the vertex conditions and finding the value $U(k_\lambda) = \widehat{U}(k_\lambda)$ at the vertices $k_\lambda$, we interpolated those -- using the path-connectedness of the group $\U(\C^m)$ -- to obtain the definition of $\widehat{U}(k)$ first for $k \in E_1 \cup E_2 \cup E_3$, and then, imposing the edge symmetries, extended it to the whole $\partial \Bred$ (see Sections \ref{sec:VertexConditions} and \ref{sec:EdgeExtension}). We now study how a change in this interpolation affects the value of $\deg([\det \widehat{U}])$.

Assume that a different interpolation $\primo{\widetilde{U}}(k)$ has been chosen on $E_1 \cup E_2 \cup E_3$, leading to a different unitary-matrix-valued map $\primo{\widehat{U}} \colon \partial \Bred \to \U(\C^m)$. Starting from the input frame $\Psi$, we thus obtain two different Bloch frames:
\[ \widehat{\Phi}(k) = \Psi(k) \act \widehat{U}(k) \quad \text{and} \quad \primo{\widehat{\Phi}}(k) = \Psi(k) \act \primo{\widehat{U}}(k), \qquad k \in \partial \Bred. \]
From the above equalities, we deduce at once that $\widehat{\Phi}(k) = \primo{\widehat{\Phi}}(k) \act X(k)$, where $X(k) := \primo{\widehat{U}}(k)^{-1} \widehat{U}(k)$. We now follow the same line of argument as in the previous Subsection. Since both $\widehat{\Phi}$ and $\primo{\widehat{\Phi}}$ satisfy the vertex conditions \eqref{VertexCondition} and the edge symmetries \eqref{EdgeSymmetries} by construction, the matrix-valued map $X \colon \partial \Bred \to \U(\C^m)$ just defined enjoys condition \ref{XiSymm}, as in the statement of Proposition \ref{Xi}. It follows now by Proposition \ref{degXieven} that the degree $\deg([\det X])$ is even. Since the degree defines a group homomorphism, this means that
\[ \deg([\det X]) = \deg([\det \widehat{U}]) - \deg([\det \primo{\widehat{U}}]) \equiv 0 \bmod 2 \]
or equivalently
\[ \delta = \deg([\det \widehat{U}]) \equiv \deg([\det \primo{\widehat{U}}]) = \primo{\delta} \bmod 2, \]
as claimed.

\subsection{\texorpdfstring{Topological invariance of $\delta$}{Topological invariance of delta}} \label{sec:TopInv}

The aim of this Subsection is to prove that the definition \eqref{delta} of the $\Z_2$ index $\delta(P)$ actually provides a \emph{topological invariant} of the family of projectors $\set{P(k)}_{k \in \R^2}$, with respect to those continuous deformation preserving the relevant symmetries specified in Assumption \ref{proj}. More formally, the following result holds.

\begin{prop} \label{deltaTopInv}
Let $\set{P_0(k)}_{k \in \R^2}$ and $\set{P_1(k)}_{k \in \R^2}$ be two families of projectors satisfying Assumption \ref{proj}. Assume that there exists a homotopy $\set{P_t(k)}_{k \in \R^2}$, $t \in [0,1]$, between $\set{P_0(k)}_{k \in \R^2}$ and $\set{P_1(k)}_{k \in \R^2}$, such that $\set{P_t(k)}_{k \in \R^2}$ satisfies Assumption \ref{proj} for all $t \in [0,1]$. Then
\[ \delta\left(P_0\right) = \delta\left(P_1\right) \in \Z_2. \]
\end{prop}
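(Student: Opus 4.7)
The plan is to prove that the function $[0,1] \ni t \mapsto \delta(P_t) \in \Z_2$ is locally constant, and then conclude by connectedness of $[0,1]$ that $\delta(P_0) = \delta(P_1)$. Since the construction of $\delta$ is manifestly covariant with respect to continuous deformations of the auxiliary choices involved, the task reduces to carrying out the entire construction of Section \ref{sec:Algorithm} with the parameter $t$ as a silent bystander, checking continuity at each step.

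Fix $t_0 \in [0,1]$. First I would construct a family of input Bloch frames $\Psi_t = \set{\Psi_t(k)}_{k \in \Bred}$ for $\set{P_t(k)}_{k \in \R^2}$, for $t$ in some open neighborhood $I \subset [0,1]$ of $t_0$, depending jointly continuously on $(t,k) \in I \times \Bred$. This is possible because $\Bred$ is contractible and $(t,k) \mapsto P_t(k)$ is jointly continuous, so one may start from a chosen frame $\Psi_{t_0}$ on $\Bred$ and transport it to nearby values of $t$ via the Kato parallel-transport unitary associated to $P_t(k)$ in the $t$-direction.

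Next, at each of the six vertices $k_\lambda$, the obstruction unitary $U\sub{obs}^t(k_\lambda)$ defined in terms of $\Psi_t$ by \eqref{UobsVertices} depends continuously on $t$. Following the constructive proof of Lemma \ref{V->U}, I would produce a continuous solution $U_t(k_\lambda) \in \U(\C^m)$ of the vertex equation \eqref{Uobs-U}, possibly after shrinking $I$: this uses standard perturbation theory for the spectrum of a unitary matrix, which allows one to select locally a continuous family of diagonalizing unitaries $W_t$ and eigenangles $\lambda_i(t)$ that remain away from the branch cut $\set{0,2\pi}$. I would then perform the edge interpolation of Section \ref{sec:EdgeExtension} in a jointly continuous way in $(t,k)$, using once more a local continuous diagonalization of $U_t(v_i)^{-1} U_t(v_{i+1})$, and propagate by the edge symmetries \eqref{Psihat} to obtain a continuous map $\widehat{U}_t \colon \partial \Bred \to \U(\C^m)$ depending jointly continuously on $(t,z) \in I \times \partial \Bred$. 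At this point the composition $(t,z) \mapsto \det \widehat{U}_t(z)$ is a continuous homotopy of maps $S^1 \to U(1)$, so the integer $\deg([\det \widehat{U}_t])$ is constant on $I$ by homotopy invariance of the winding number; in particular $\delta(P_t) = \delta(P_{t_0})$ for all $t \in I$, which proves local constancy.

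The hard part will be ensuring the continuity of the vertex and edge-interpolation steps, since the recipe from Lemma \ref{V->U} involves a square root of a unitary, which is not globally single-valued on $\U(\C^m)$, and analogously for the interpolating paths along $E_1 \cup E_2 \cup E_3$. The saving feature is that only local continuity in $t$ is required; moreover, even if the branch choice had to jump at some intermediate $t_* \in I$, the gauge-independence and edge-interpolation-independence results of Sections \ref{sec:Gauge} and \ref{app:EdgeInvariance} would guarantee that such a jump changes $\deg([\det \widehat{U}_t])$ only by an even amount, so its reduction $\bmod\, 2$ is unaffected and the conclusion still holds.
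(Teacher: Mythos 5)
Your strategy is sound and reaches the right conclusion, but it is genuinely different from the paper's argument, and two points in it need repair. The paper never tries to run the whole construction of Section \ref{sec:Algorithm} continuously in $t$: by uniform continuity on the compact set $[0,1]\times\Bred$ it compares only finitely many parameter values $t_0 < t_1 < \dots$, and for two nearby values it uses the Kato--Nagy intertwiner $W(k)$ (well defined as soon as $\norm{P_t(k)-P_{t_0}(k)}<1$), which inherits the symmetries \ref{item:Wsmooth}--\ref{item:WTRS}; transporting the boundary frame $\widehat\Phi_{t_0}$ by $W$ produces a frame for the nearby family still satisfying \eqref{VertexCondition} and \eqref{EdgeSymmetries}, so the transition matrix to $\widehat\Phi_{t_1}$ obeys \ref{XiSymm} and Proposition \ref{degXieven} gives equality of the degrees mod $2$ directly --- no homotopy of winding numbers and no branch bookkeeping at all. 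By contrast, your route buys the intuitive statement ``the winding number cannot jump along a continuous family'' at the cost of making the vertex and edge steps jointly continuous in $(t,k)$. Two caveats there: first, the homotopy is only assumed continuous in $t$, so the ODE-based Kato parallel transport ``in the $t$-direction'' you invoke is not available; you should instead use the Kato--Nagy unitary for nearby projections (exactly as the paper does), which needs only $\norm{P_t(k)-P_{t_0}(k)}<1$, supplied by uniform continuity. Second, the normalization $\lambda_i\in[0,2\pi)$ in Lemma \ref{V->U} is indeed discontinuous when an eigenvalue of $U\sub{obs}^t(k_\lambda)$ crosses $1$; the clean fix is to choose, near each $t_*$, a branch cut avoiding the (finite) union of the spectra of the six obstruction unitaries, and to build the edge interpolations by concatenation with short connecting paths rather than by diagonalizing $U_t(v_i)^{-1}U_t(v_{i+1})$ (continuous diagonalization in $t$ is itself problematic at eigenvalue crossings). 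With these repairs your local-constancy argument works; and your fallback is correct as stated, because the choice-independence results of Sections \ref{sec:Gauge} and \ref{app:EdgeInvariance} show that for each fixed $t$ any two admissible constructions give degrees agreeing mod $2$, so constancy on overlapping subintervals patches, by compactness of $[0,1]$, to $\delta(P_0)=\delta(P_1)$. Note, however, that those well-posedness results rest on Proposition \ref{degXieven}, so in the end your argument leans on the same parity lemma as the paper's, reached through a longer detour.
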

\begin{proof}
The function $(t, k) \mapsto P_t(k)$ is a continuous function on the compact set $[0,1] \times \Bred$, and hence is uniformly continuous. Thus, there exists $\mu > 0$ such that
\[ \norm{P_t(k) - P_{\tilde{t}}(\tilde{k})}_{\mathcal{B}(\Hi)} < 1 \quad \text{if } \max \set{\left|k-\tilde{k}\right|, \left|t-\tilde{t}\right|} < \mu. \]
In particular, choosing $t_0 < \mu$, we have that
\begin{equation} \label{norm<1}
\norm{P_t(k) - P_0(k)}_{\mathcal{B}(\Hi)} < 1 \quad \text{if } t \in [0,t_0], \text{ uniformly in } k \in \Bred.
\end{equation}
We will show that $\delta(P_0) = \delta(P_{t_0})$. Iterating this construction a finite number of times will prove that $\delta(P_0) = \delta(P_1)$.

In view of \eqref{norm<1}, the Kato-Nagy unitary \cite[Sec. I.6.8]{Kato}
\[ W(k) := \left( \Id - (P_0(k) - P_{t_0}(k))^2 \right)^{-1/2} \big( P_{t_0}(k) P_0(k) + (\Id - P_{t_0}(k))(\Id - P_0(k))\big) \: \in \U(\Hi)\]
is well-defined and provides an intertwiner between $\Ran P_0(k)$ and $\Ran P_{t_0}(k)$, namely
\[ P_{t_0}(k) = W(k) P_0(k) W(k)^{-1}. \]
Moreover, one immediately realizes that the map $k \mapsto W(k)$ inherits from the two families of projectors the following properties:
\begin{enumerate}[label=$(\mathrm{W}_\arabic*)$,ref=$(\mathrm{W}_\arabic*)$]
\item \label{item:Wsmooth} the map $\Bred \ni k \mapsto W(k) \in \U(\Hi)$ is $C^\infty$-smooth;
\item \label{item:Wtau} the map $k \mapsto W(k)$ is $\tau$-covariant, in the sense that whenever $k$ and $k + \lambda$ are both in $\Bred$ for $\lambda \in \Lambda$, then
\[ W(k+\lambda) = \tau(\lambda) W(k) \tau(\lambda)^{-1}; \]
\item \label{item:WTRS} the map $k \mapsto W(k)$ is time-reversal symmetric, in the sense that whenever $k$ and $-k$ are both in $\Bred$, then
\[ W(-k) = \Theta W(k) \Theta^{-1}. \]
\end{enumerate}

Let now $\set{\Psi_0(k)}_{k \in \Bred}$ be a continuous Bloch frame for $\set{P_0(k)}_{k \in \R^2}$. Extending the action of the unitary $W(k) \in \U(\Hi)$ to $m$-frames component-wise, we can define $\Psi_{t_0}(k) = W(k) \Psi_0(k)$ for $k \in \Bred$, and obtain a continuous Bloch frame $\set{\Psi_{t_0}(k)}_{k \in \Bred}$ for $\set{P_{t_0}(k)}_{k \in \R^2}$. Following the procedure illustrated in Sections \ref{sec:VertexConditions} and \ref{sec:EdgeExtension}, we can produce two symmetric Bloch frames for $\set{P_0(k)}_{k \in \R^2}$ and $\set{P_{t_0}(k)}_{k \in \R^2}$, namely
\[ \widehat{\Phi}_0(k) = \Psi_0(k) \act \widehat{U}_0(k) \quad \text{and} \quad \widehat{\Phi}_{t_0}(k) = \Psi_{t_0}(k) \act \widehat{U}_{t_0}(k), \quad k \in \partial \Bred. \]
From the above equalities, we deduce that
\begin{align*}
\widehat{\Phi}_{t_0}(k) & = \Psi_{t_0}(k) \act \widehat{U}_{t_0}(k) = \left( W(k) \Psi_0(k) \right) \act \widehat{U}_{t_0}(k) = \\
& = W(k) \left( \Psi_0(k) \act \widehat{U}_{t_0}(k) \right) = W(k) \left( \widehat{\Phi}_0(k) \act \left( \widehat{U}_0(k)^{-1} \widehat{U}_{t_0}(k) \right) \right) = \\
& = \left( W(k) \widehat{\Phi}_0(k) \right) \act \left( \widehat{U}_0(k)^{-1} \widehat{U}_{t_0}(k) \right)
\end{align*}
where we have used that the action of a linear operator in $\mathcal{B}(\Hi)$, extended component-wise to $m$-frames, commutes with the right-action of $\U(\C^m)$. It is easy to verify that, in view of properties \ref{item:Wsmooth}, \ref{item:Wtau} and \ref{item:WTRS}, the frame $\set{W(k) \widehat{\Phi}_0(k)}_{k \in \partial \Bred}$ gives a continuous Bloch frame for $\set{P_{t_0}(k)}_{k \in \R^2}$ which still satisfies the vertex conditions \eqref{VertexCondition} and the edge symmetries \eqref{EdgeSymmetries}, since $\widehat{\Phi}_0$ does. Hence the matrix $X(k) := \widehat{U}_0(k)^{-1} \widehat{U}_{t_0}(k)$ satisfies hypothesis \ref{XiSymm}, and in view of Proposition \ref{degXieven} its determinant has even degree. We conclude that
\[ \delta(P_0) = \deg([\det \widehat{U}_0]) \equiv \deg([\det \widehat{U}_{t_0}]) = \delta(P_{t_0}) \bmod 2. \qedhere \]
\end{proof}

\newpage


\section{Comparison with the Fu-Kane index} \label{sec:Fu-Kane}

In this Section, we will compare our $\Z_2$ invariant $\delta$ with the $\Z_2$ invariant $\Delta$ proposed by Fu and Kane \cite{FuKa}, and show that they are equal. For the reader's convenience, we recall the definition of $\Delta$, rephrasing it in our terminology.

\medskip

Let $\Psi(k) = \set{\psi_1(k), \ldots, \psi_m(k)}$ be a global \emph{$\tau$-equivariant}%
\footnote{This assumption is crucial in the definition of the Fu-Kane index, whereas the definition of our $\Z_2$ invariant does not need $\Psi$ to be already $\tau$-equivariant. Nonetheless, the existence of a global $\tau$-equivariant Bloch frame is guaranteed by a straightforward modification of the proof in \cite{Panati}, as detailed in \cite{PaMo}.} %
Bloch frame. Define the unitary matrix $w(k) \in \U(\C^m)$ by
\[ w(k)_{ab} := \scal{\psi_a(-k)}{\Theta \psi_b(k)}. \]
In terms of the right action of matrices on frames, one can equivalently say that $w(k)$ is the matrix such that%
\soloarXiv{
\footnote{Indeed, spelling out \eqref{w(k)} one obtains
\[ \Theta \psi_b(k) = \sum_{c=1}^{m} \psi_c(-k) w(k)_{cb} \]
and taking the scalar product of both sides with $\psi_a(-k)$ yields
\[ \scal{\psi_a(-k)}{\Theta \psi_b(k)} = \sum_{c=1}^{m} \scal{\psi_a(-k)}{\psi_c(-k)} w(k)_{cb} = \sum_{c=1}^{m} \delta_{ac} w(k)_{cb} = w(k)_{ab} \]
because $\set{\psi_a(-k)}_{a=1, \ldots, m}$ is an \emph{orthonormal} frame in $\Ran P(-k)$.}}%
\begin{equation} \label{w(k)}
\Theta \Psi(k) = \Psi(-k) \act w(k).
\end{equation}
Comparing \eqref{w(k)} with \eqref{TR}, we see that if $\Psi$ were already symmetric then $w(k) = \eps^{-1}$ for all $k \in \R^d$.

One immediately checks how $w(k)$ changes when the inversion or translation symmetries are applied to $k$. One has that
\begin{align*}
w(\theta(k))_{ab} & = \scal{\psi_a(k)}{\Theta \psi_b(-k)} = \scal{\Theta^2 \psi_b(-k)}{\Theta \psi_a(k)} = \\ 
& = - \scal{\psi_b(-k)}{\Theta \psi_a(k)} = - w(k)_{ba}
\end{align*}
or in matrix form $w(\theta(k)) = - w(k)\Tr$. Moreover, by using the $\tau$-equivariance of the frame $\Psi$ one obtains
\begin{align*}
w(t_\lambda(k))_{ab} & = \scal{\psi_a(\theta t_\lambda(k))}{\Theta \psi_b(t_\lambda(k))} = \scal{\psi_a(t_{-\lambda} \theta(k))}{\Theta \tau(\lambda) \psi_b(k)} = \\
& = \scal{\tau(-\lambda) \psi_a(-k)}{\tau(-\lambda) \Theta \psi_b(k)} = \scal{\psi_a(-k)}{\Theta \psi_b(k)} = w(k)_{ab}
\end{align*}
because $\tau(-\lambda)$ is unitary; in matrix form we can thus write $w(t_{\lambda}(k)) = w(k)$.

Combining both these facts, we see that at the vertices $k_\lambda$ of the \effective unit cell we get
\[ - w(k_\lambda)\Tr = w(\theta(k_\lambda)) = w(t_{-\lambda}(k_\lambda)) = w(k_\lambda). \]
In other words, the matrix $w(k_\lambda)$ is skew-symmetric, and hence it has a well-defined Pfaffian, satisfying $\left(\Pf w(k_\lambda)\right)^2 = \det w(k_\lambda)$. The Fu-Kane index $\Delta$ is then defined as \cite[Equations (3.22) and (3.25)]{FuKa}
\begin{equation} \label{Delta}
\Delta := P_\theta(1/2) - P_\theta(0) \bmod 2,
\end{equation}
where for $k_* \in \set{0, 1/2}$
\begin{equation} \label{Ptheta}
P_\theta(k_*) := \frac{1}{2\pi \iu} \left( \int_{0}^{1/2} \di k_2 \, \tr\left(w(k_*,k_2)^* \partial_{k_2} w(k_*,k_2) \right) - 2 \log \frac{\Pf w(k_*,1/2)}{\Pf w(k_*,0)} \right).
\end{equation}

\medskip

We are now in position to prove the above-mentioned equality between our $\Z_2$ invariant $\delta$ and the Fu-Kane index $\Delta$.

\begin{thm} \label{thm:delta=Delta}
Let $\set{P(k)}_{k \in \R^2}$ be a family of projectors satisfying Assumption \ref{proj}, and let $\delta = \delta(P) \in \Z_2$ be as in \eqref{delta}. Let $\Delta \in \Z_2$ be the Fu-Kane index defined in \eqref{Delta}. Then
\[ \delta = \Delta \in \Z_2. \]
\end{thm}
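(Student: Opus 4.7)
The plan is to compute $\delta(P) = \deg([\det\widehat U]) \bmod 2$ starting from a $\tau$-equivariant input Bloch frame $\Psi$, a choice permissible by the gauge invariance of $\delta$ proved in Section~\ref{sec:Gauge} and the existence result for such frames \cite{Panati, PaMo}. With this choice, both $w(k)$ as in \eqref{w(k)} and $\widehat U(k)$ are read off from the same frame, so matching $\delta$ to $\Delta$ reduces to an algebraic manipulation on $\partial\Bred$.

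The first step is to relate $\widehat U$ and $w$ on $\partial\Bred$. Substituting $\widehat\Phi(k) = \Psi(k) \act \widehat U(k)$ into the vertex condition \eqref{VertexCondition} and using both $\Theta\Psi(k) = \Psi(-k) \act w(k)$ and the $\tau$-equivariance identity $\tau_\lambda \Psi(-k_\lambda) = \Psi(k_\lambda)$ (valid since $-k_\lambda + \lambda = k_\lambda$), one obtains
\[
w(k_\lambda) \;=\; \widehat U(k_\lambda)\,\eps^{-1}\,\widehat U(k_\lambda)^{T}.
\]
The Pfaffian identity $\Pf(AXA^T) = \det(A)\,\Pf(X)$ then yields $\det\widehat U(k_\lambda) = \sigma\cdot\Pf w(k_\lambda)$ with $\sigma := 1/\Pf(\eps^{-1}) \in \{\pm 1\}$ \emph{constant and independent of the vertex}. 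A parallel computation on the vertical edges $k_1 = k_* \in \{0, 1/2\}$, based on the edge symmetries \eqref{EdgeSymmetries}, produces $\widehat U(k_*, -k_2) = w(k_*, k_2)\,\overline{\widehat U(k_*, k_2)}\,\eps$, whose determinant (using $\det\eps = 1$) gives
\[
\xi(k_*, k_2)\,\xi(k_*, -k_2) = \omega(k_*, k_2), \qquad \xi := \det\widehat U, \quad \omega := \det w.
\]
On the horizontal edges, $\tau$-equivariance of $\Psi$ forces $\widehat U(k_1, 1/2) = \widehat U(k_1, -1/2)$, so the contributions of $E_2$ and $E_5$ to $\deg([\xi])$ cancel.

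To compute the remaining contributions I set $M_{k_*} := \frac{1}{2\pi\iu}\bigl[\log_c\xi(k_*, 1/2) - \log_c\xi(k_*, -1/2)\bigr]$ for a continuous branch of $\log\xi$ along $k_2 \in [-1/2, 1/2]$; orientation of the edges (cf.\ Figure~\ref{fig:Bred}) gives $L_{E_1}+L_{E_6} = -M_0$ and $L_{E_3}+L_{E_4} = +M_{1/2}$, so $\deg([\xi]) = M_{1/2} - M_0$. Combining the continuous pinning $\log_c\omega(k_*, 0) = 2\log_c\xi(k_*, 0)$ with $\log_c\omega(k_*, 1/2) = \log_c\xi(k_*, 1/2) + \log_c\xi(k_*, -1/2)$ (both coming from the displayed identity above by continuous extension) yields
\[
M_{k_*} \;=\; \frac{1}{\pi\iu}\bigl[\log_c\xi(k_*, 1/2) - \log_c\xi(k_*, 0)\bigr] \;-\; \frac{1}{2\pi\iu}\int_0^{1/2}\partial_{k_2}\log\omega(k_*, k_2)\,dk_2.
\]
Substituting the vertex relation $\log_c\xi(k_*, k_2) = \log\Pf w(k_*, k_2) + \log\sigma \pmod{2\pi\iu\Z}$ at both $k_2 = 0$ and $k_2 = 1/2$, the $\log\sigma$ contributions cancel (precisely because $\sigma$ is the same at both endpoints), and a routine computation gives $M_{k_*} \equiv -P_\theta(k_*) \pmod 2$.

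Combining the above, $\deg([\xi]) \equiv M_{1/2} - M_0 \equiv P_\theta(0) - P_\theta(1/2) = -\Delta \pmod 2$, and reducing modulo $2$ (with $-1 \equiv 1$) produces $\delta(P) = \Delta \in \Z_2$. The subtle step is the cancellation of $\log\sigma$ in the final computation: this is enabled by the constancy of $\sigma = \det\widehat U(k_\lambda)/\Pf w(k_\lambda)$ across TRIMs, which is itself a clean algebraic consequence of the Pfaffian identity at the vertices; without it, a residual sign $\prod_{k_\lambda} \sigma(k_\lambda)$ would obstruct the identification $\delta \equiv \Delta \pmod 2$.
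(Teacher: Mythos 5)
Your argument is correct and follows essentially the same route as the paper's proof: the key relation $w(k_*,k_2) = \widehat{U}(k_*,-k_2)\,\eps^{-1}\,\widehat{U}(k_*,k_2)\Tr$ on the vertical edges (Equation \eqref{hatUE4E6}), the Pfaffian identity $\Pf w(k_\lambda) = \det\widehat{U}(k_\lambda)\Pf\eps^{-1}$ at the vertices, the cancellation of the $E_2$ and $E_5$ contributions, and the edge-by-edge comparison of the winding of $\det\widehat{U}$ with $P_\theta(k_*)$ modulo $2$ are exactly the paper's steps. The only differences are cosmetic: you derive the edge relation directly from the symmetries \eqref{VertexCondition}--\eqref{EdgeSymmetries} of $\widehat{\Phi}$ rather than from the explicit extension formulas \eqref{hatUE4E6}, and you track scalar logarithms of determinants instead of the trace integrals $\tr(\widehat{U}^*\partial\widehat{U})$, which amounts to the same computation.
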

\begin{proof}
First we will rewrite our $\Z_2$ invariant $\delta$, in order to make the comparison with $\Delta$ more accessible. Recall that $\delta$ is defined as the degree of the determinant of the matrix $\widehat{U}(k)$, for $k \in \partial \Bred$, satisfying $\widehat{\Phi}(k) = \Psi(k) \act \widehat{U}(k)$, where $\widehat{\Phi}(k)$ is as in \eqref{Psihat}. The unitary $\widehat{U}(k)$ coincides with the unitary $\widetilde{U}(k)$ for $k \in E_1 \cup E_2 \cup E_3$, where $\widetilde{U}(k)$ is a continuous path interpolating the unitary matrices $U(v_i)$, $i = 1, 2, 3, 4$, \ie the solutions to the vertex conditions (compare Sections \ref{sec:VertexConditions} and \ref{sec:EdgeExtension}). The definition of $\widehat{U}(k)$ for $k \in E_4 \cup E_5 \cup E_6$ is then obtained by imposing the edge symmetries on the corresponding frame $\widehat{\Phi}(k)$, as in Section \ref{sec:EdgeExtension}.

We first compute explicitly the extension of $\widehat{U}$ to $k \in E_4 \cup E_5 \cup E_6$. For $k \in E_4$, say $k = (1/2, k_2)$ with $k_2 \ge 0$, we obtain
\begin{align*}
\widehat{\Phi}(k) & = \tau_{e_1} \Theta \widetilde{\Phi} \left( t_{e_1} \theta(k) \right) \act \eps = \\
& = \tau_{e_1} \Theta \left( \Psi \left( t_{e_1} \theta(k) \right) \act \widetilde{U}\left(t_{e_1} \theta(k)\right) \right) \act \eps = \\
& = \left( \tau_{e_1} \tau_{-e_1} \Theta \Psi(-k) \right) \act \left( \overline{\widetilde{U}}(1/2,-k_2) \eps \right) = \\
& = \left( \Psi(k) \act w(-k) \right) \act \left( \overline{\widetilde{U}}(1/2,-k_2) \eps \right) = \\
& = \Psi(k) \act \left( w(-k) \overline{\widetilde{U}}(1/2,-k_2) \eps \right).
\end{align*}
This means that
\begin{equation} \label{hatUE4}
\widehat{U}(k) = w(-k) \overline{\widetilde{U}}(1/2,-k_2) \eps, \quad k = (1/2, k_2) \in E_4.
\end{equation}

Analogously, for $k \in E_5$, say $k = (k_1,1/2)$, we obtain
\begin{equation} \label{hatUE5}
\widehat{U}(k) = \widetilde{U}(k_1,-1/2), \quad k = (k_1,1/2) \in E_5.
\end{equation}

Finally, for $k \in E_6$, say $k = (0, k_2)$ with $k_2 \ge 0$, we obtain in analogy with \eqref{hatUE4}
\begin{equation} \label{hatUE6}
\widehat{U}(k) = w(-k) \overline{\widetilde{U}}(0,-k_2) \eps, \quad k = (0, k_2) \in E_6.
\end{equation}
Notice that, as $w(-1/2,k_2) = w(t_{-e_1}(1/2,k_2)) = w(1/2,k_2)$, we can actually summarize \eqref{hatUE4} and \eqref{hatUE6} as
\[ \widehat{U}(k_*,k_2) = w(k_*,-k_2) \overline{\widetilde{U}}(k_*,-k_2) \eps, \quad k_* \in \set{0, 1/2}, \: k_2 \in [0,1/2]. \]
Since $\widehat{U}$ and $\widetilde{U}$ coincide on $E_1$ and $E_3$, we can further rewrite this relation as
\begin{equation} \label{hatUE4E6}
w(k_*,k_2) = \widehat{U}(k_*,-k_2) \eps^{-1} \widehat{U}(k_*,k_2)\Tr, \quad (k_*,k_2) \in S^1_*,
\end{equation}
where $S^1_*$ denotes the edge $E_1 \cup E_6$ for $k_* = 0$ (respectively $E_3 \cup E_4$ for $k_* = 1/2$), with the edge-points identified: the identification is allowed in view of \eqref{hatUE5} and the fact that $w(t_\lambda(k)) = w(k)$.

\medskip

We are now able to compute the degree of the determinant of $\widehat{U}$. Firstly, an easy computation%
\soloarXiv{
\footnote{If $\widehat{U}(z) \in \U(\C^m)$ has spectrum $\set{\eu^{\iu \lambda_j(z)}}_{j=1,\ldots,m}$, then
\[ \partial_z \log \det \widehat{U}(z) = \sum_{j=1}^{m} \eu^{-\iu \lambda_j(z)} \partial_z \eu^{\iu \lambda_j(z)}. \]
On the other hand, writing the spectral decomposition of $\widehat{U}(z)$ as
\[ \widehat{U}(z) = \sum_{j=1}^{m} \eu^{\iu \lambda_j(z)} P_j(z), \quad P_j(z)^* = P_j(z) = P_j(z)^2, \quad \tr(P_j(z)) = 1, \]
one immediately deduces that
\[ \widehat{U}(z)^* = \sum_{j=1}^{m} \eu^{-\iu \lambda_j(z)} P_j(z) \quad \text{and} \quad \partial_z \widehat{U}(z) = \sum_{j=1}^{m} \left( \partial_z \eu^{\iu \lambda_j(z)} \right) P_j(z) + \eu^{\iu \lambda_j(z)} (\partial_z P_j(z)). \]

Notice now that, taking the derivative with respect to $z$ of the equality $P_j(z)^2 = P_j(z)$, one obtains
\[ P_j(z) \partial_z P_j(z) = \left(\partial_z P_j(z) \right) \left( \Id - P_j(z) \right) \]
so that
\[ \partial_z P_j(z) = P_j(z) \left( \partial_z P_j(z) \right) \left( \Id - P_j(z) \right) + \left( \Id - P_j(z) \right) \left( \partial_z P_j(z) \right) P_j(z). \]
From this, it follows by the ciclicity of the trace and the relation $P_j(z) P_\ell(z) = \delta_{j,\ell} P_j(z) = P_\ell(z) P_j(z)$ that
\begin{align*}
\tr \left( P_\ell(z) \left(\partial_z P_j(z) \right) \right) = & \tr \left( P_j(z) P_\ell(z) \left( \partial_z P_j(z) \right) \left( \Id - P_j(z) \right) \right) + \\
& + \tr \left( \left( \Id - P_j(z) \right) P_\ell(z) \left( \partial_z P_j(z) \right) P_j(z) \right) = 0.
\end{align*}

We are now able to compute
\begin{align*}
\tr \left( \widehat{U}(z)^* \partial_z \widehat{U}(z) \right) & = \sum_{j, \ell=1}^{m}  \tr \left( \eu^{-\iu \lambda_\ell(z)} P_\ell(z) \left( \partial_z \eu^{\iu \lambda_j(z)} \right) P_j(z)\right) + \\
& \quad + \eu^{-\iu \left(\lambda_\ell(z) - \lambda_j(z)\right)} \tr \left(P_\ell(z)\left( \partial_j P(z) \right) \right) = \\
& = \sum_{j,\ell=1}^{m} \eu^{-\iu \lambda_\ell(z)} \left( \partial_z \eu^{\iu \lambda_j(z)} \right) \delta_{j,\ell} \tr \left( P_j(z) \right) = \sum_{j=1}^{m} \eu^{-\iu \lambda_j(z)} \partial_z \eu^{\iu \lambda_j(z)}.
\end{align*}
}} %
shows that
\begin{align*}
\deg([\det \widehat{U}]) & = \frac{1}{2 \pi \iu} \oint_{\partial \Bred} \di z \, \partial_z \log \det \widehat{U}(z) = \\
& = \frac{1}{2 \pi \iu} \oint_{\partial \Bred} \di z \, \tr \left( \widehat{U}(z)^* \partial_z \widehat{U}(z) \right) = \sum_{i=1}^{6} \frac{1}{2 \pi \iu} \int_{E_i} \di z \, \tr \left( \widehat{U}(z)^* \partial_z \widehat{U}(z) \right).
\end{align*}
Clearly we have
\begin{equation} \label{E1+E2+E3}
\int_{E_i} \di z \, \tr \left( \widehat{U}(z)^* \partial_z \widehat{U}(z) \right) = \int_{E_i} \di z \, \tr \left( \widetilde{U}(z)^* \partial_z \widetilde{U}(z) \right) \quad \text{for } i = 1, 2, 3,
\end{equation}
because $\widehat{U}$ and $\widetilde{U}$ coincide on $E_1 \cup E_2 \cup E_3$. Using now Equation \eqref{hatUE5} we compute
\begin{align*}
\int_{E_5} \di z \, \tr \left( \widehat{U}(z)^* \partial_z \widehat{U}(z) \right) & = \int_{1/2}^{0} \di k_1 \, \tr \left( \widehat{U}(k_1,1/2)^* \partial_{k_1} \widehat{U}(k_1,1/2) \right) = \\
& = - \int_{0}^{1/2} \di k_1 \, \tr \left( \widetilde{U}(k_1,-1/2)^* \partial_{k_1} \widetilde{U}(k_1,-1/2) \right) = \\
& = - \int_{E_2} \di z \, \tr \left( \widetilde{U}(z)^* \partial_z \widetilde{U}(z) \right)
\end{align*}
or equivalently, in view of \eqref{E1+E2+E3} for $i=2$,
\begin{equation} \label{E2+E5}
\int_{E_2 + E_5} \di z \, \tr \left( \widehat{U}(z)^* \partial_z \widehat{U}(z) \right) = 0.
\end{equation}

Making use of \eqref{hatUE4E6}, we proceed now to evaluate the integrals on $E_1, E_3, E_4$ and $E_6$, which give the non-trivial contributions to $\deg([\det \widehat{U}])$. Indeed, Equation \eqref{hatUE4E6} implies
\begin{align*}
w(k_*,k_2)^* & = \overline{\widehat{U}}(k_*,k_2) \, \eps \, \widehat{U}(k_*,-k_2)^*, \\
\partial_{k_2} w(k_*,k_2) & = - \partial_{k_2} \widehat{U}(k_*,-k_2) \eps^{-1} \widehat{U}(k_*,k_2)\Tr + \widehat{U}(k_*,-k_2) \eps^{-1} \partial_{k_2} \widehat{U}(k_*,k_2)\Tr,
\end{align*} 
from which one computes
\begin{align*}
\tr ( w(k_*,k_2)^* \partial_{k_2} w(k_*,k_2) ) = & - \tr \left( \widehat{U}(k_*,-k_2)^* \partial_{k_2} \widehat{U}(k_*,-k_2) \right) + \\ 
& + \tr \left( \widehat{U}(k_*,k_2)^* \partial_{k_2} \widehat{U}(k_*,k_2) \right).
\end{align*}
Integrating both sides of the above equality for $k_2 \in [0,1/2]$ yields to
\begin{equation} \label{quasiDelta}
\begin{aligned}
\int_{0}^{1/2} \di k_2 \, \tr ( w(k_*,k_2)^* \partial_{k_2} w(k_*,k_2) ) & = - \int_{-1/2}^{0} \di k_2 \, \tr \left( \widehat{U}(k_*,k_2)^* \partial_{k_2} \widehat{U}(k_*,k_2) \right) + \\
& \quad + \int_{0}^{1/2} \di k_2 \, \tr \left( \widehat{U}(k_*,k_2)^* \partial_{k_2} \widehat{U}(k_*,k_2) \right) = \\
& = (-1)^{1+2 k_*} \oint_{S^1_*} \di z \, \tr \left( \widehat{U}(z)^* \partial_{z} \widehat{U}(z) \right) + \\
& \quad + 2 \int_{0}^{1/2} \di k_2 \, \tr \left( \widehat{U}(k_*,k_2)^* \partial_{k_2} \widehat{U}(k_*,k_2) \right).
\end{aligned}
\end{equation}
The sign $s_* := (-1)^{1+2 k_*}$ appearing in front of the integral along $S^1_*$ depends on the different orientations of the two circles, for $k_* = 0$ and for $k_* = 1/2$, parametrized by the coordinate $z = k_2$.

Now, notice that
\begin{equation} \label{detU/detU}
\begin{aligned}
\int_{0}^{1/2} \di k_2 \, \tr \left( \widehat{U}(k_*,k_2)^* \partial_{k_2} \widehat{U}(k_*,k_2) \right) & = \int_{0}^{1/2} \di k_2 \, \partial_{k_2} \log \det \widehat{U}(k_*,k_2) \\
& \equiv \log \frac{\det \widehat{U}(k_*,1/2)}{\det \widehat{U}(k_*,0)} \bmod 2 \pi \iu.
\end{aligned}
\end{equation}
Furthermore, evaluating Equation \eqref{hatUE4E6} at the six vertices $k_\lambda$, we obtain
\[ w(k_\lambda) = \widehat{U}(k_\lambda) \eps^{-1} \widehat{U}(k_\lambda)\Tr \]
which implies that
\begin{equation} \label{Pfw=detU}
\Pf w(k_\lambda) = \det \widehat{U}(k_\lambda) \Pf \eps^{-1}
\end{equation}
by the well known property $\Pf(C A C\Tr) = \det(C) \Pf(A)$, for a skew-symmetric matrix $A$ and a matrix $C \in M_m(\C)$ \cite{Mackey}.

Substituting the latter equality in the right-hand side of Equation \eqref{detU/detU} allows us to rewrite \eqref{quasiDelta} as
\begin{multline*}
\frac{1}{2 \pi \iu} \left( \int_{0}^{1/2} \di k_2 \, \tr ( w(k_*,k_2)^* \partial_{k_2} w(k_*,k_2) ) - 2 \log \frac{\Pf w(k_*,1/2)}{\Pf w(k_*,0)} \right) \equiv \\ \equiv \frac{s_*}{2 \pi \iu} \int_{S^1_*} \di z \, \tr \left( \widehat{U}(z)^* \partial_{z} \widehat{U}(z) \right) \bmod 2.
\end{multline*}
On the left-hand side of this equality, we recognize $P_\theta(k_*)$, as defined in \eqref{Ptheta}. Taking care of the orientation of $S^1_*$, we conclude, also in view of \eqref{E2+E5}, that
\begin{align*}
\Delta & \equiv P_\theta(1/2) - P_\theta(0) \bmod 2 \\
& \equiv \frac{1}{2 \pi \iu} \int_{E_1 + E_3 + E_4 + E_6} \di z \, \tr \left( \widehat{U}(z)^* \partial_{z} \widehat{U}(z) \right) \bmod 2 \\
& \equiv \deg([\det\widehat{U}]) \equiv \delta \bmod 2.
\end{align*}
This concludes the proof.
\end{proof}

\begin{rmk}[On the r\^ole of Pfaffians]
Equation \eqref{hatUE4E6} is the crucial point in the above argument. Indeed, since $\det \eps = 1$ (compare Remark \ref{rmk:eps}), from \eqref{hatUE4E6} it follows that
\begin{equation} \label{w=U(k)U(-k)}
\det w(k_*,k_2) = \det \widehat{U}(k_*,-k_2) \det \widehat{U}(k_*,k_2), \quad (k_*, k_2) \in S^1_*.
\end{equation}
If we evaluate this equality at the six vertices $k_\lambda$, we obtain
\begin{equation} \label{detw=detU2} 
\det w(k_\lambda) = \left( \det \widehat{U}(k_\lambda) \right)^2 = \left( \Pf w(k_\lambda) \right)^2. 
\end{equation}
Looking at Equation \eqref{w=U(k)U(-k)}, we realize that the expression $\det \widehat{U}(k_*,k_2)$ serves as a ``continuous prolongation'' along the edges $E_1, E_3, E_4, E_6$ of the Pfaffian $\Pf w(k_\lambda)$, which is well-defined only at the six vertices $k_\lambda$ (where the matrix $w$ is skew-symmetric), in a way which is moreover compatible with time-reversal symmetry, since in \eqref{w=U(k)U(-k)} both $\det \widehat{U}(k_*,k_2)$ and $\det \widehat{U}(k_*,-k_2)$ appear. This justifies the rather misterious and apparently {\it ad hoc} presence of Pfaffians in the Fu-Kane formula for the $\Z_2$ index $\Delta$.
\end{rmk}

In view of the equality $\Delta = \delta \in \Z_2$, we have a clear intepretation of the Fu-Kane index as the obstruction to the existence of a global continuous symmetric Bloch frame, as claimed in \cite[App. A]{FuKa}.

\newpage


\section{\texorpdfstring{A simpler formula for the $\Z_2$ invariant}{A simpler formula for the Z_2-invariant}} \label{sec:EvalVertices}

The fact that the $\Z_2$ invariant $\delta$, as defined in \eqref{delta}, is well-defined and independent of the choice of an interpolation of the vertex unitaries $U(k_\lambda)$ (see Section \ref{app:EdgeInvariance}) shows that its value depends only on the value of $\widehat{U}$ at the vertices. In this Section, we will provide a way to compute $\delta \in \Z_2$ using data coming just from the four \emph{time-reversal invariant momenta} $v_1, \ldots, v_4$ which are inequivalent modulo translational and inversion symmetries of $\R^2$. This should be compared with \cite[Equation (3.26)]{FuKa}.

In the previous Section, we have rewritten our invariant as $\delta = \widehat{P}_\theta(1/2) - \widehat{P}_\theta(0) \bmod 2$, with
\begin{align*}
- \widehat{P}_\theta(k_*) := \frac{1}{2 \pi \iu} & \left(  \int_{-1/2}^{0} \di k_2 \, \tr \left( w(k_*,k_2)^* \partial_{k_2} w(k_*,k_2) \right) + \right. \\
& - \left. 2 \int_{-1/2}^{0} \di k_2 \, \tr \left( \widehat{U}(k_*,k_2)^* \partial_{k_2} \widehat{U}(k_*,k_2)\right) \right) = \\
= \frac{1}{2 \pi \iu} & \left( \log \frac{\det w(k_*, 0)}{\det w(k_*, -1/2)} - 2 \log \frac{\det \widehat{U}(k_*, 0)}{\det \widehat{U}(k_*, -1/2)} \right).
\end{align*}
In view of \eqref{detw=detU2}, the above expression can be rewritten as
\[ \delta = \sum_{i=1}^{4} \widehat{\eta}_{v_i} \bmod 2, \quad \text{where} \quad \widehat{\eta}_{v_i} := \frac{1}{2 \pi \iu} \left( \log \left(\det \widehat{U}(v_i)\right)^2 - 2 \log \det \widehat{U}(v_i) \right). \]
Notice that, since $\deg([\det \widehat{U}])$ is an integer, the value $(-1)^{\deg([\det \widehat{U}])}$ is independent of the choice of the determination of $\log(-1)$ and is determined by the parity of $\deg([\det \widehat{U}])$, \ie by $\delta = \deg([\det \widehat{U}]) \bmod 2$. Moreover, this implies also that
\[ (-1)^\delta = \prod_{i=1}^{4} (-1)^{\widehat{\eta}_{v_i}} \]
and each $(-1)^{\widehat{\eta}_{v_i}}$ can be computed with any determination of $\log(-1)$ (as long as one chooses the same for all $i \in \set{1,\ldots, 4}$). Noticing that, using the principal value of the complex logarithm,
\begin{equation} \label{log-1}
(-1)^{(\log \alpha)/(2\pi \iu)} = (\eu^{\iu \pi})^{(\log \alpha)/(2\pi \iu)} = \eu^{(\log \alpha)/2} = \sqrt{\alpha},
\end{equation} 
it follows that we can rewrite the above expression for $(-1)^\delta$ as
\begin{equation} \label{easydelta}
(-1)^{\delta} = \prod_{i=1}^{4} \frac{\sqrt{\left(\det \widehat{U}(v_i)\right)^2}}{\det \widehat{U}(v_i)}
\end{equation}
where the branch of the square root is chosen in order to evolve continuously from $v_1$ to $v_2$ along $E_1$, and from $v_3$ to $v_4$ along $E_3$. This formula is to be compared with \cite[Equation (3.26)]{FuKa} for the Fu-Kane index $\Delta$, namely
\begin{equation} \label{DeltaFromPoints}
(-1)^{\Delta} = \prod_{i=1}^{4} \frac{\sqrt{\det w(v_i)}}{\Pf w(v_i)}.
\end{equation}

Recall now that the value of $\widehat{U}$ at the vertices $k_\lambda$ is determined by solving the vertex conditions, as in Section \ref{sec:VertexConditions}: $\widehat{U}(k_\lambda) = U(k_\lambda)$ is related to the obstruction unitary $U\sub{obs}(k_\lambda)$ by the relation \eqref{Uobs-U}. In particular, from \eqref{Uobs-U} we deduce that
\[ \det U\sub{obs}(k_\lambda) = \left( \det U(k_\lambda) \right)^2 \]
so that \eqref{easydelta} may be rewritten as
\begin{equation} \label{easierdelta}
(-1)^{\delta} = \prod_{i=1}^{4} \frac{\sqrt{\det U\sub{obs}(v_i)}}{\det U(v_i)}.
\end{equation}

This reformulation shows that our $\Z_2$ invariant can be computed starting just from the ``input'' Bloch frame $\Psi$ (provided it is \emph{continuous} on $\Bred$), and more specifically from its values at the vertices $k_\lambda$. Indeed, the obstruction $U\sub{obs}(k_\lambda)$ at the vertices is defined by \eqref{UobsVertices} solely in terms of $\Psi(k_\lambda)$; moreover, $U(k_\lambda)$ is determined by $U\sub{obs}(k_\lambda)$ as explained in the proof of Lemma \ref{V->U}.

We have thus the following \textbf{algorithmic recipe} to compute $\delta$:
\begin{itemize}
\item given a continuous Bloch frame $\Psi(k) = \set{\psi_a(k)}_{a=1,\ldots,m}$, $k \in \Bred$, compute the unitary matrix
\[ U\sub{obs}(k_\lambda)_{a,b} = \sum_{c=1}^{m} \scal{\psi_a(k_\lambda)}{\tau(\lambda) \Theta \psi_c(k_\lambda)} \eps_{cb}, \]
defined as in \eqref{UobsVertices}, at the four inequivalent time-reversal invariant momenta $k_\lambda = v_1, \ldots, v_4$;
\item compute the spectrum $\set{ \eu^{\iu \lambda_1^{(i)}} , \ldots, \eu^{\iu \lambda_m^{(i)}} } \subset U(1)$ of $U\sub{obs}(v_i)$, so that in particular 
\[ \det U\sub{obs}(v_i) = \exp \left( \iu \left( \lambda_1^{(i)} + \cdots + \lambda_m^{(i)} \right) \right); \]
normalize the arguments of such phases so that $\lambda_j^{(i)} \in [0, 2 \pi)$ for all $j \in \set{1, \ldots, m}$;
\item compute $U(v_i)$ as in the proof of Lemma \ref{V->U}: in particular we obtain that
\[ \det U(v_i) = \exp \left( \iu \left( \frac{ \lambda_1^{(i)} }{2} + \cdots + \frac{ \lambda_m^{(i)} }{2} \right) \right); \]
\item finally, compute $\delta$ from the formula \eqref{easierdelta}, \ie
\[ (-1)^{\delta} = \prod_{i=1}^{4} \dfrac{ \sqrt{\exp \left( \iu \left( \lambda_1^{(i)} + \cdots + \lambda_m^{(i)} \right) \right)} }{\exp \left( \iu \left( \dfrac{ \lambda_1^{(i)} }{2} + \cdots + \dfrac{ \lambda_m^{(i)} }{2} \right) \right)} . \]
\end{itemize}

\newpage 


\section{\texorpdfstring{Construction of a symmetric Bloch frame in $3d$}{Construction of a symmetric Bloch frame in 3d}} \label{sec:3d}

In this Section, we investigate the existence of a global continuous symmetric Bloch frame in the $3$-dimensional setting. In particular, we will recover the four $\Z_2$ indices proposed by Fu, Kane and Mele \cite{FuKaneMele}. We will first focus on the topological obstructions that arise, and then provide the construction of a symmetric Bloch frame in $3d$ when the procedure is unobstructed.

\subsection{Vertex conditions and edge extension}

The $3$-dimensional unit cell is defined, in complete analogy with the $2$-dimensional case, as
\[ \B^{(3)} := \set{k = \sum_{j=1}^{3} k_j e_j \in \R^3: -\frac{1}{2} \le k_j \le \frac{1}{2}, \: j = 1, 2, 3}  \]
(the superscript $^{(3)}$ stands for ``$3$-dimensional''), where $\set{e_1, e_2, e_3}$ is a basis in $\R^3$ such that $\Lambda = \Span_\Z\set{e_1, e_2, e_3}$. If a continuous Bloch frame is given on $\B^{(3)}$, then it can be extended to a continuous $\tau$-equivariant Bloch frame on $\R^3$ by considering its $\tau$-translates, provided it satisfies the obvious compatibility conditions of $\tau$-periodicity on the faces of the unit cell (\ie on its boundary). Similarly, if one wants to study frames which are also time-reversal invariant, then one can restrict the attention to the \effective unit cell (see Figure \ref{fig:Bred3})
\[ \Bred^{(3)} := \set{k = (k_1, k_2, k_3) \in \B^{(3)}: k_1 \ge 0}.  \]

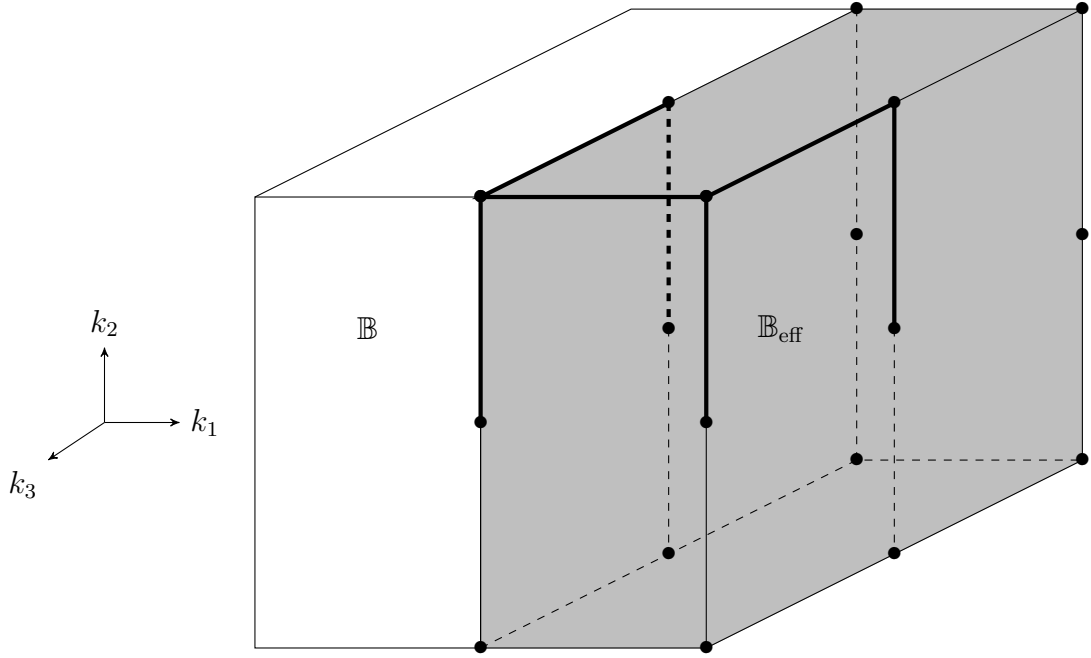
\begin{figure}[ht]
\centerline{%
\scalebox{1}{%
\begin{tikzpicture}[>=stealth']
\filldraw [lightgray] (0,3) -- (0,-3) -- (3,-3) -- (8,-0.5) -- (8,5.5) -- (5,5.5) -- (0,3);
\draw (0,3) -- (0,-3) -- (-3,-3) -- (-3,3) -- (0,3);
\draw (-3,3) -- (2,5.5) -- (5,5.5);
\draw (0,3) node {$\bullet$}
	   -- (0,-3) node {$\bullet$}
	   -- (3,-3) node {$\bullet$}
	   -- (3,3) node {$\bullet$}
	   -- (0,3) node {$\bullet$}
	   -- (5,5.5) node {$\bullet$}
	   -- (8,5.5) node {$\bullet$}
	   -- (3,3) node {$\bullet$};
\draw (3,-3)
	   -- (8,-0.5) node {$\bullet$}
	   -- (8,5.5);
\draw [dashed] (0,-3)
				 -- (5,-0.5) node {$\bullet$}
				 -- (8,-0.5);
\draw [dashed] (5,-0.5) -- (5,5.5);
\draw [ultra thick] (5.5,4.25) node {$\bullet$} 
					  -- (5.5,1.25) node {$\bullet$};
\draw [dashed] (5.5,1.25) -- (5.5,-1.75) node {$\bullet$};
\draw [dashed, ultra thick] (2.5,4.25) node {$\bullet$} 
						 -- (2.5,1.25) node {$\bullet$};
\draw [dashed] (2.5,1.25) -- (2.5,-1.75) node {$\bullet$};
\draw [ultra thick] (2.5,4.25) -- (0,3) -- (3,3) -- (5.5,4.25);
\draw [ultra thick] (0,3) -- (0,0) node {$\bullet$};
\draw [ultra thick] (3,3) -- (3,0) node {$\bullet$};
\draw (8,2.5) node {$\bullet$}
	  (5,2.5) node {$\bullet$}
	  (4,1.25) node {$\Bred$}
	   (-1.5,1.25) node {$\B$};
\draw [->] (-5,0) -- (-4,0) node [anchor=west] {$k_1$};
\draw [->] (-5,0) -- (-5.75,-.5) node [anchor=north east] {$k_3$};
\draw [->] (-5,0) -- (-5,1) node [anchor=south] {$k_2$};
\end{tikzpicture}
} 
} 
\caption{The $3$-dimensional \effective unit cell (shaded area).}
\label{fig:Bred3}
\end{figure}

\emph{Vertices} of the (\effective\!\!) unit cell are defined again as those points $k_\lambda$ which are invariant under the transformation $t_\lambda \theta$: these are the points with half-integer coordinates with respect to the lattice generators $\set{e_1, e_2, e_3}$. The \emph{vertex conditions} for a symmetric Bloch frame read exactly as in \eqref{VertexCondition}, and can be solved analogously to the $2$-dimensional case with the use of Lemma \ref{V->U}. This leads to the definition of a unitary matrix $U(k_\lambda)$ at each vertex. The definition of $U(\cdot)$ can be extended to the \emph{edges} joining vertices of $\Bred^{(3)}$ by using the path-connectedness of the group $\U(\C^m)$, as in Section \ref{sec:EdgeExtension}. Actually, we need to choose such an extension only on the edges which are plotted with a thick line in Figure \ref{fig:Bred3}: then we can obtain the definition of $U(\cdot)$ to all edges by imposing \emph{edge symmetries}, which are completely analogous to those in \eqref{EdgeSymmetries}.

\subsection{Extension to the faces: four $\Z_2$ obstructions} \label{sec:3dObs}

We now want to extend the definition of the matrix $U(k)$, mapping an input Bloch frame $\Psi(k)$ to a symmetric Bloch frame $\Phi(k)$, to $k \in \partial \Bred^{(3)}$. The boundary of $\Bred^{(3)}$ consists now of six faces, which we denote as follows:
\begin{align*}
F_{1, 0} & := \set{(k_1, k_2, k_3) \in \partial \Bred^{(3)} : k_1 = 0}, \\
F_{1, +} & := \set{(k_1, k_2, k_3) \in \partial \Bred^{(3)} : k_1 = \frac{1}{2}}, \\
F_{i, \pm} & := \set{(k_1, k_2, k_3) \in \partial \Bred^{(3)} : k_i = \pm \frac{1}{2}}, \quad i \in \set{2, 3}.
\end{align*}

Notice that $F_{1,0}$ and $F_{1,+}$ are both isomorphic to a \emph{full} unit cell $\B^{(2)}$ (the superscript $^{(2)}$ stands for ``$2$-dimensional''), while the faces $F_{i,\pm}$, $i \in \set{2, 3}$, are all isomorphic to a $2d$ \emph{\effective}\!\! unit cell $\Bred^{(2)}$. 

We start by considering the four faces $F_{i,\pm}$, $i \in \set{2, 3}$. From the $2$-dimensional algorithm, we know that to extend the definition of $U(\cdot)$ from the edges to one of these faces we need for the associated $\Z_2$ invariant $\delta_{i, \pm}$ to vanish. Not all these four invariants are independent, though: indeed, we have that $\delta_{i,+} = \delta_{i,-}$ for $i=2,3$. In fact, suppose for example that $\delta_{i,-}$ vanishes, so that we can extend the Bloch frame $\Phi$ to the face $F_{i,-}$. Then, by setting $\Phi(t_{e_i}(k)) := \tau_{e_i} \Phi(k)$ for $k \in F_{i,-}$, we get an extension of the frame $\Phi$ to $F_{i,+}$: it follows that also $\delta_{i,+}$ must vanish. Viceversa, exchanging the roles of the subscripts $+$ and $-$ one can argue that if $\delta_{i,+} =0$ then also $\delta_{i,-}=0$; in conclusion, $\delta_{i,+} = \delta_{i,-} \in \Z_2$, as claimed. We remain for now with only two independent $\Z_2$ invariants, namely $\delta_{2,+}$ and $\delta_{3,+}$.

\medskip

We now turn our attention to the faces $F_{1,0}$ and $F_{1,+}$. As we already noticed, these are $2d$ unit cells $\B^{(2)}$. Each of them contains three thick-line edges (as in Figure \ref{fig:Bred3}), on which we have already defined the Bloch frame $\Phi$: this allows us to test the possibility to extend it to the \effective unit cell which they enclose. If we are indeed able to construct such an extension to this \effective unit cell $\Bred^{(2)}$, then we can extend it to the whole face by using the time reversal operator $\Theta$ in the usual way: in fact, both $F_{1,0}$ and $F_{1,+}$ are such that if $k$ lies on it then also $-k$ lies on it (up to periodicity, or equivalently translational invariance). The obstruction to the extension of the Bloch frame to $F_{1,0}$ and $F_{1,+}$ is thus encoded again in two $\Z_2$ invariants $\delta_{1,0}$ and $\delta_{1,+}$; together with $\delta_{2,+}$ and $\delta_{3,+}$, they represent the obstruction to the continuous extension of the symmetric Bloch frame from the edges of $\Bred^{(3)}$ to the boundary $\partial \Bred^{(3)}$.

\medskip

Suppose now that we are indeed able to obtain such an extension to $\partial \Bred^{(3)}$, \ie that all four $\Z_2$ invariants vanish. Then we have a map $\widehat{U} \colon \partial \Bred^{(3)} \to \U(\C^m)$, such that at each $k \in \partial \Bred^{(3)}$ the unitary matrix $\widehat{U}(k)$ maps the reference Bloch frame $\Psi(k)$ to a symmetric Bloch frame $\widehat{\Phi}(k)$: $\widehat{\Phi}(k) = \Psi(k) \act \widehat{U}(k)$. In order to get a continuous symmetric Bloch frame defined on the whole $\Bred^{(3)}$, it is thus sufficient to extend the map $\widehat{U} \colon \partial \Bred^{(3)} \to \U(\C^m)$ to a continuous map $U \colon \Bred^{(3)} \to \U(\C^m)$. 

Topologically, the boundary of the \effective unit cell in $3$ dimensions is equivalent to a sphere: $\partial \Bred^{(3)} \simeq S^2$. Moreover, it is known that, if $X$ is any topological space, then a continuous map $f \colon S^2 \to X$ extends to a continuous map $F \colon D^3 \to X$, defined on the $3$-ball $D^3$ that the sphere encircles, if and only if its homotopy class $[f] \in [S^2; X] = \pi_2(X)$ is trivial. In our case, where $X = \U(\C^m)$, it is a well-known fact that $\pi_2(\U(\C^m)) = 0$ \cite[Ch.8, Sec. 12]{Husemoller}, so that actually \emph{any} continuous map from the boundary of the \effective unit cell to the unitary group extends continuously%
\footnote{Again, such a continuous extension can be explicitly provided, as shown in \cite[Remark 4.8]{FiPaPi}.} %
to a map defined on the whole $\Bred^{(3)}$. In other words, there is no topological obstruction to the continuous extension of a symmetric frame from $\partial \Bred^{(3)}$ to $\Bred^{(3)}$, and hence to $\R^3$: indeed, one can argue, as in Proposition \ref{global}, that a symmetric Bloch frame on $\Bred^{(3)}$ can always be extended to all of $\R^3$, by imposing $\tau$-equivariance and time-reversal symmetry.

In conclusion, we have that all the topological obstruction that can prevent the existence of a global continuous symmetric Bloch frame is encoded in the four $\Z_2$ invariants $\delta_{1,0}$ and $\delta_{i,+}$, for $i \in \set{1, 2, 3}$, given by
\begin{equation} \label{3Ddelta} 
\delta_{*}(P) := \delta\left( P \big|_{F_*} \right), \quad F_* \in \set{F_{1,0}, F_{1,+}, F_{2,+}, F_{3,+}}. 
\end{equation}

\subsection{\texorpdfstring{Proof of Theorem \ref{thm:Q3}}{Proof of Theorem 3}}

We have now understood how topological obstruction may arise in the construction of a global continuous symmetric Bloch frame, by sketching the $3$-dimensional analogue of the method developed in Section \ref{sec:Algorithm} for the $2d$ case. This Subsection is devoted to detailing a more precise constructive algorithm for such a Bloch frame in $3d$, whenever there is no topological obstruction to its existence. As in the $2d$ case, we will start from a ``na\"{i}ve'' choice of a continuous Bloch frame $\Psi(k)$, and symmetrize it to obtain the required symmetric Bloch frame $\Phi(k) = \Psi(k) \act U(k)$ on the \effective unit cell $\Bred^{(3)}$ (compare \eqref{Healer}). We will apply this scheme first on vertices, then on edges, then (whenever there is no topological obstruction) to faces, and then to the interior of $\Bred^{(3)}$.

The two main ``tools'' in this algorithm are provided by the following Lemmas, which are ``distilled'' from the procedure elaborated in Section \ref{sec:Algorithm}.

\begin{lemma}[From $0d$ to $1d$] \label{0d->1d}
Let $\set{P(k)}_{k \in \R}$ be a family of projectors satisfying Assumption \ref{proj}. Denote $\Bred^{(1)} = \set{k = k_1 e_1 : -1/2 \le k_1 \le 0} \simeq [-1/2,0]$. Let $\Phi(-1/2)$ be a frame in $\Ran P(-1/2)$ such that
\[ \Phi(-1/2) = \tau_{-e_1} \Theta \Phi(-1/2) \act \eps, \]
\ie $\Phi$ satisfies the vertex condition \eqref{VertexCondition} at $k_{-e_1} = -(1/2) e_1 \in \Bred^{(1)}$. Then one constructs a continuous Bloch frame $\set{\eff{\Phi}(k)}_{k \in \Bred^{(1)}}$ such that
\[ \eff{\Phi}(-1/2) = \Phi(-1/2) \quad \text{and} \quad \eff{\Phi}(0) = \Theta \eff{\Phi}(0) \act \eps. \]
\end{lemma}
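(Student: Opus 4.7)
\medskip

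The plan is to mimic, in this one--dimensional setting, the strategy used in Sections \ref{sec:VertexConditions} and \ref{sec:EdgeExtension}: start from an arbitrary continuous Bloch frame on the contractible interval $\Bred^{(1)}$, modify it at the two endpoints so that it matches the required boundary data, and then interpolate between the two endpoint modifications using path--connectedness of $\U(\C^m)$.

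First, since $\Bred^{(1)} \simeq [-1/2,0]$ is contractible and the family of projectors is smooth, there exists a continuous Bloch frame $\Psi = \set{\Psi(k)}_{k \in \Bred^{(1)}}$ for $\set{P(k)}$. The sought-after frame will be of the form $\eff{\Phi}(k) = \Psi(k) \act U(k)$ for some continuous $U \colon \Bred^{(1)} \to \U(\C^m)$ (compare \eqref{Healer}), and so the problem reduces to constructing $U$ with prescribed values at $k=-1/2$ and $k=0$.

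At $k=-1/2$, since both $\Psi(-1/2)$ and the given $\Phi(-1/2)$ are orthonormal frames for $\Ran P(-1/2)$, there exists a unique $U(-1/2) \in \U(\C^m)$ with $\Phi(-1/2) = \Psi(-1/2) \act U(-1/2)$. At $k=0$, we proceed as in Section \ref{sec:VertexConditions}: define the obstruction unitary $U\sub{obs}(0) \in \U(\C^m)$ by
\[ \Psi(0) \act U\sub{obs}(0) = \Theta \Psi(0) \act \eps, \]
\ie by evaluating \eqref{UobsVertices} at the vertex $k_0 = 0$ (with $\lambda = 0$). The same computation leading to \eqref{Compatibility} shows that $U\sub{obs}(0)\Tr \eps = \eps \, U\sub{obs}(0)$, hence by Lemma \ref{V->U} there exists $U(0) \in \U(\C^m)$ with $U\sub{obs}(0) = U(0) \eps^{-1} U(0)\Tr \eps$; the very same algebraic manipulation as in \eqref{Uobs-U} then guarantees that the frame $\Psi(0) \act U(0)$ fulfils the vertex condition at $0$.

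Finally, by path--connectedness of $\U(\C^m)$ (as used in Section \ref{sec:EdgeExtension}), we choose any continuous path $U \colon [-1/2, 0] \to \U(\C^m)$ joining $U(-1/2)$ to $U(0)$, and set $\eff{\Phi}(k) := \Psi(k) \act U(k)$. Then $\eff{\Phi}$ is a continuous Bloch frame on $\Bred^{(1)}$, $\eff{\Phi}(-1/2) = \Phi(-1/2)$ by construction, and $\eff{\Phi}(0) = \Theta \eff{\Phi}(0) \act \eps$ by the choice of $U(0)$, as required. There is no genuine obstacle here: the only non-algorithmic step is the invocation of Lemma \ref{V->U}, and the fact that \ref{item:TRS} is compatible with $\lambda = 0$ means no interplay with $\tau$ is needed at the endpoint $k=0$.
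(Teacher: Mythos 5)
Your proof is correct and follows essentially the same route as the paper, which simply says to solve the vertex condition at $k=0$ via Lemma \ref{V->U} and then extend to $\Bred^{(1)}$ by the interpolation argument of Section \ref{sec:EdgeExtension}. Your write-up merely makes explicit the choice of reference frame $\Psi$, the matching $U(-1/2)$ at the given endpoint, and the path-connectedness interpolation, all of which are implicit in the paper's one-line proof.
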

\begin{proof}
Solve the vertex condition at $k_\lambda = 0 \in \Bred^{(1)}$ using Lemma \ref{V->U}, and then extend the frame to $\Bred^{(1)}$ as in Section \ref{sec:EdgeExtension}.
\end{proof}

\begin{lemma}[From $1d$ to $2d$] \label{1d->2d}
Let $\set{P(k)}_{k \in \R^2}$ be a family of projectors satisfying Assumption \ref{proj}. Let $v_i$ and $E_i$ denote the edges of the \effective unit cell $\Bred^{(2)}$, as defined in Section \ref{sec:Bred}. Let $\set{\Phi(k)}_{k \in E_1 \cup E_2 \cup E_3}$ be a continuous Bloch frame for $\set{P(k)}_{k \in \R^2}$, satisfying the vertex conditions \eqref{VertexCondition} at $v_1, v_2, v_3$ and $v_4$. Assume that
\[ \delta(P) = 0 \in \Z_2. \]
Then one constructs a continuous Bloch frame $\set{\eff{\Phi}(k)}_{k \in \Bred^{(2)}}$, satisfying the vertex conditions \eqref{VertexCondition} and the edge symmetries \eqref{EdgeSymmetries}, and moreover such that
\begin{equation} \label{Don'tMove}
\eff{\Phi}(k) = \Phi(k) \quad \text{for } k \in E_1 \cup E_2.
\end{equation}
\end{lemma}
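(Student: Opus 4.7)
The plan is to combine the extension argument used to pass from edges to the full boundary (Section \ref{sec:EdgeExtension}) with the construction of Theorem \ref{thm:Q2'}, taking care to preserve the given frame on $E_1\cup E_2$. Explicitly, the given $\Phi$ on $E_1\cup E_2\cup E_3$ plays the role of $\widetilde{\Phi}$ in Section \ref{sec:EdgeExtension}. First, I would extend $\Phi$ to a continuous Bloch frame $\widehat{\Phi}$ defined on the whole boundary $\partial\Bred^{(2)}$, by imposing on $E_4\cup E_5\cup E_6$ the edge symmetries \eqref{EdgeSymmetries}, exactly as in formula \eqref{Psihat}. Because the input $\Phi$ satisfies the vertex conditions \eqref{VertexCondition} at $v_1,\dots,v_4$, the matching of the different pieces at the vertices $v_1,v_4,v_6$ (and the closing vertex $v_2\equiv v_6$ after a $\tau$-translation) is automatic, so that $\widehat{\Phi}$ is continuous on $\partial\Bred^{(2)}$ and satisfies, by construction, both \eqref{VertexCondition} at the six vertices and the edge symmetries \eqref{EdgeSymmetries}.

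Next, choose any continuous Bloch frame $\Psi$ defined on the contractible set $\Bred^{(2)}$ (its existence follows from the smoothness of $P$ and the contractibility of $\Bred^{(2)}$); write $\widehat{\Phi}(k)=\Psi(k)\act\widehat{U}(k)$ for a continuous $\widehat{U}\colon\partial\Bred^{(2)}\to\U(\C^m)$. The hypothesis $\delta(P)=0$ means, by definition \eqref{delta}, that $\deg([\det\widehat{U}])$ is even. I would now invoke the construction in the proof of Theorem \ref{thm:Q2'}: there exists a continuous map $X\colon\partial\Bred^{(2)}\to\U(\C^m)$ satisfying the conditions \ref{XiDeg} and \ref{XiSymm} of Proposition \ref{Xi}, and moreover given by the \emph{explicit} formula
\[
X(k)=\begin{cases}
\diag\!\left(\eu^{-2\pi\iu s(k_2+1/2)},\eu^{-2\pi\iu s(k_2+1/2)},1,\dots,1\right) & \text{if } k\in E_3\cup E_4,\\
\Id & \text{otherwise},
\end{cases}
\]
with $2s=\deg([\det\widehat{U}])$. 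The crucial feature of this particular choice is that $X\equiv\Id$ on $E_1\cup E_2$.

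Set $U\colon\partial\Bred^{(2)}\to\U(\C^m)$, $U(k):=\widehat{U}(k)X(k)$. By \ref{XiDeg} one has $\deg([\det U])=0$, so $U$ extends continuously to a map $\eff{U}\colon\Bred^{(2)}\to\U(\C^m)$ (this is the standard fact about $\pi_1(\U(\C^m))\simeq\Z$ already used in Section \ref{sec:FaceExtension}). Define
\[
\eff{\Phi}(k):=\Psi(k)\act\eff{U}(k),\qquad k\in\Bred^{(2)}.
\]
This is a continuous Bloch frame on $\Bred^{(2)}$; by \ref{XiSymm}, $\widehat{\Phi}\act X$ still satisfies the vertex conditions \eqref{VertexCondition} and the edge symmetries \eqref{EdgeSymmetries} on $\partial\Bred^{(2)}$, and hence so does $\eff{\Phi}|_{\partial\Bred^{(2)}}$. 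Finally, since $X\equiv\Id$ on $E_1\cup E_2$, the restriction of $\eff{\Phi}$ to $E_1\cup E_2$ equals $\widehat{\Phi}$, which in turn coincides with the given $\Phi$ on those edges by construction; this yields \eqref{Don'tMove}.

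The only potentially delicate point is the preservation of the input frame on $E_1\cup E_2$: a generic choice of $X$ satisfying \ref{XiDeg} and \ref{XiSymm} would in general modify $\widehat{\Phi}$ everywhere on the boundary, but the explicit, \emph{localized} choice of $X$ supported on $E_3\cup E_4$ (available thanks to the symplectic normal form \eqref{symplectic} of $\eps$) confines the modification away from $E_1\cup E_2$. This is the key observation that makes the lemma usable in the inductive $3d$ construction, where one needs to glue extensions obtained on adjacent faces without spoiling the frame already defined on their common edges.
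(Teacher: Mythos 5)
Your proposal is correct and follows essentially the same route as the paper: extend the given frame to $\partial\Bred^{(2)}$ by imposing the edge symmetries as in \eqref{Psihat} (the vertex conditions at $v_1,\dots,v_4$ guaranteeing continuity at the remaining corners), then use the explicit unitary $X$ from the proof of Theorem \ref{thm:Q2'}, supported on $E_3\cup E_4$, to kill the even degree and extend to $\Bred^{(2)}$ without altering the frame on $E_1\cup E_2$. The only implicit ingredient, which the paper also uses, is the well-posedness of $\delta$ (Section \ref{app:Independence}), ensuring that $\delta(P)=0$ forces $\deg([\det\widehat{U}])$ to be even for the $\widehat{U}$ built from the given edge frame.
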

\begin{proof}
This is just a different formulation of what was proved in Section \ref{sec:FaceExtension}. From the datum of $\Phi(k)$ on the edges $E_1 \cup E_2 \cup E_3$ one can construct the unitary matrix $\widehat{U} \colon \partial \Bred^{(2)} \to \U(\C^m)$, such that $\Phi(k) \act \widehat{U}(k)$ is a symmetric Bloch frame, for $k \in \partial \Bred^{(2)}$. The hypothesis $\delta(P) = 0 \in \Z_2$ ensures that $\deg([\det \widehat{U}]) = 2s$, $s \in \Z$. The proof of Theorem \ref{thm:Q2'} then gives an explicit family of unitary matrices $X \colon \partial \Bred^{(2)} \to \U(\C^m)$ such that $\deg([\det X]) = -2s$ and that $\eff{\Phi}(k) := \Phi(k) \act \left( \widehat{U}(k) \, X(k) \right)$ is still symmetric, so it extends to the whole $\Bred^{(2)}$. The fact that $X(k)$ can be chosen to be constantly equal to the identity matrix on $E_1 \cup E_2 \cup E_5 \cup E_6$ shows that \eqref{Don'tMove} holds.
\end{proof}

We are now ready to prove Theorem \ref{thm:Q3}.
\begin{proof}[Proof of Theorem \ref{thm:Q3}]
Consider the region $Q = \Bred^{(3)} \cap \set{k_2 \le 0, k_3 \ge 0}$. We label its vertices as $v_1=(0,-1/2,0), v_2, v_3, v_4 \in Q \cap \set{k_2 = -1/2}$ (counted counter-clockwise) and $v_5=(0,0,0), v_6, v_7, v_8 \in Q \cap \set{k_2=0}$ (again counted counter-clockwise). We also let $E_{ij}$ denote the edge joining $v_i$ and $v_j$. The labels are depicted in Figure \ref{fig:StepsQ3}.

\begin{figure}
\centering
\subfigure[Step 1]{\scalebox{.5}{%
\begin{tikzpicture}[>=stealth']
\draw [dotted] (0,3)
	   -- (0,-3) node {$\bullet$} node [anchor = north east] {\LARGE $v_2$}
	   -- (3,-3) node {$\bullet$} node [anchor = north] {\LARGE $v_3$}
	   -- (3,3)
	   -- (0,3)
	   -- (5,5.5)
	   -- (8,5.5)
	   -- (3,3);
\draw [dotted] (3,-3)
	   -- (8,-0.5)
	   -- (8,5.5);
\draw [dotted] (0,-3)
				 -- (5,-0.5)
				 -- (8,-0.5);
\draw [dotted] (5,-0.5) -- (5,5.5);
\draw [dotted] (5.5,4.25)
			     -- (5.5,1.25) node {$\bullet$} node [anchor = west] {\LARGE $v_8$};
\draw [dotted] (5.5,1.25) -- (5.5,-1.75) node {$\bullet$} node [anchor = west] {\LARGE $v_4$};
\draw [dotted] (2.5,4.25)
				 -- (2.5,1.25) node {$\bullet$} node [anchor = east] {\LARGE $v_5$};
\draw [dotted] (2.5,1.25) -- (2.5,-1.75) node {$\bullet$} node [anchor = east] {\LARGE $v_1$};
\draw  (0,0) node {$\bullet$} node [anchor = east] {\LARGE $v_6$}
		(3,0) node {$\bullet$} node [anchor = west] {\LARGE $v_7$};
\end{tikzpicture}
}}\quad
\subfigure[Step 2]{\scalebox{.5}{%
\begin{tikzpicture}[>=stealth']
\draw (0,0) -- (0,-3) -- (3,-3) -- (3,0);
\draw (3,-3) -- (5.5,-1.75) -- (5.5,1.25);
\draw [dashed] (0,-3) -- (2.5,-1.75) -- (2.5,1.25);
\draw [dotted] (0,3)
	   -- (0,-3) node {$\bullet$}
	   -- (3,-3) node {$\bullet$}
	   -- (3,3)
	   -- (0,3)
	   -- (5,5.5)
	   -- (8,5.5)
	   -- (3,3);
\draw [dotted] (3,-3)
	   -- (8,-0.5)
	   -- (8,5.5);
\draw [dotted] (0,-3)
				 -- (5,-0.5)
				 -- (8,-0.5);
\draw [dotted] (5,-0.5) -- (5,5.5);
\draw [dotted] (5.5,4.25)
			     -- (5.5,1.25) node {$\bullet$};
\draw [dotted] (5.5,1.25) -- (5.5,-1.75) node {$\bullet$};
\draw [dotted] (2.5,4.25)
				 -- (2.5,1.25) node {$\bullet$};
\draw [dotted] (2.5,1.25) -- (2.5,-1.75) node {$\bullet$};
\draw  (0,0) node {$\bullet$}
		(3,0) node {$\bullet$}
		(0,-1.5) node [anchor = east] {\LARGE $E_{26}$}
		(1.5,-3) node [anchor = north] {\LARGE $E_{23}$}
		(4.1,-2.3) node [anchor = north west] {\LARGE $E_{34}$}
		(5.5,0) node [anchor = west] {\LARGE $E_{48}$}
		(3.1,-1.5) node [anchor = west] {\LARGE $E_{37}$}
		(1.75,-1.9) node [anchor = east] {\LARGE $E_{12}$}
		(2.5,0) node [anchor = east] {\LARGE $E_{15}$};
\end{tikzpicture}
}}\quad
\subfigure[Step 3a]{\scalebox{.5}{
\begin{tikzpicture}[>=stealth']
\filldraw [lightgray] (0,3) -- (0,-3) -- (3,-3) -- (3,3) -- (0,3);
\draw (0,0) -- (0,-3) -- (3,-3) -- (3,0);
\draw (3,-3) -- (5.5,-1.75) -- (5.5,1.25);
\draw [dashed] (0,-3) -- (2.5,-1.75) -- (2.5,1.25);
\draw (0,3) node {$\bullet$}
	   -- (0,-3) node {$\bullet$}
	   -- (3,-3) node {$\bullet$}
	   -- (3,3) node {$\bullet$}
	   -- (0,3);
\draw [dotted] (0,3)
	   -- (5,5.5)
	   -- (8,5.5)
	   -- (3,3);
\draw [dotted] (3,-3)
	   -- (8,-0.5)
	   -- (8,5.5);
\draw [dotted] (0,-3)
				 -- (5,-0.5)
				 -- (8,-0.5);
\draw [dotted] (5,-0.5) -- (5,5.5);
\draw [dotted] (5.5,4.25)
			     -- (5.5,1.25) node {$\bullet$};
\draw [dotted] (5.5,1.25) -- (5.5,-1.75) node {$\bullet$};
\draw [dotted] (2.5,4.25)
				 -- (2.5,1.25) node {$\bullet$};
\draw [dotted] (2.5,1.25) -- (2.5,-1.75) node {$\bullet$};
\draw  (0,0) node {$\bullet$}
		(3,0) node {$\bullet$};
\end{tikzpicture}
}}\\
\subfigure[Step 3b]{\scalebox{.5}{
\begin{tikzpicture}[>=stealth']
\filldraw [lightgray] (0,-3) -- (3,-3) -- (8,-0.5) -- (5,-0.5) -- (0,-3);
\filldraw [gray] (0,3) -- (0,-3) -- (3,-3) -- (3,3) -- (0,3);
\draw (0,0) -- (0,-3) -- (3,-3) -- (3,0);
\draw (3,-3) -- (5.5,-1.75) -- (5.5,1.25);
\draw [dashed] (0,-3) -- (2.5,-1.75) -- (2.5,1.25);
\draw (0,3) node {$\bullet$}
	   -- (0,-3) node {$\bullet$}
	   -- (3,-3) node {$\bullet$}
	   -- (3,3) node {$\bullet$}
	   -- (0,3);
\draw [dotted] (0,3)
	   -- (5,5.5)
	   -- (8,5.5)
	   -- (3,3);
\draw (3,-3) -- (8,-0.5) node {$\bullet$};
\draw [dotted] (8,-0.5) -- (8,5.5);
\draw [dashed] (0,-3)	 -- (5,-0.5) node {$\bullet$}
				 -- (8,-0.5) ;
\draw [dotted] (5,-0.5) -- (5,5.5);
\draw [dotted] (5.5,4.25)
			     -- (5.5,1.25) node {$\bullet$};
\draw [dotted] (5.5,1.25) -- (5.5,-1.75) node {$\bullet$};
\draw [dotted] (2.5,4.25)
				 -- (2.5,1.25) node {$\bullet$};
\draw [dotted] (2.5,1.25) -- (2.5,-1.75) node {$\bullet$};
\draw  (0,0) node {$\bullet$}
		(3,0) node {$\bullet$};
\end{tikzpicture}
}}\qquad
\subfigure[Step 3c]{\scalebox{.5}{
\begin{tikzpicture}[>=stealth']
\filldraw [lightgray] (0,3) -- (2.5,4.25) -- (2.5,-1.75) -- (0,-3) -- (0,3);
\filldraw [gray] (0,-3) -- (3,-3) -- (8,-0.5) -- (5,-0.5) -- (0,-3);
\filldraw [gray] (0,3) -- (0,-3) -- (3,-3) -- (3,3) -- (0,3);
\draw (0,0) -- (0,-3) -- (3,-3) -- (3,0);
\draw (3,-3) -- (5.5,-1.75) -- (5.5,1.25);
\draw [dashed] (0,-3) -- (2.5,-1.75) -- (2.5,1.25);
\draw (0,3) node {$\bullet$}
	   -- (0,-3) node {$\bullet$}
	   -- (3,-3) node {$\bullet$}
	   -- (3,3) node {$\bullet$}
	   -- (0,3);
\draw [dotted] (0,3)
	   -- (5,5.5)
	   -- (8,5.5)
	   -- (3,3);
\draw (3,-3) -- (8,-0.5) node {$\bullet$};
\draw [dotted] (8,-0.5) -- (8,5.5);
\draw [dashed] (0,-3)	 -- (5,-0.5) node {$\bullet$}
				 -- (8,-0.5) ;
\draw [dotted] (5,-0.5) -- (5,5.5);
\draw [dotted] (5.5,4.25)
			     -- (5.5,1.25) node {$\bullet$};
\draw [dotted] (5.5,1.25) -- (5.5,-1.75) node {$\bullet$};
\draw (0,3) -- (2.5,4.25);
\draw [dashed] (2.5,4.25) node {$\bullet$} 
				 -- (2.5,1.25) node {$\bullet$}
				 -- (2.5,-1.75) node {$\bullet$};
\draw  (0,0) node {$\bullet$}
		(3,0) node {$\bullet$};
\end{tikzpicture}
}}\qquad
\subfigure[Step 3d]{\scalebox{.5}{
\begin{tikzpicture}[>=stealth']
\filldraw [gray] (0,3) -- (2.5,4.25) -- (2.5,-1.75) -- (0,-3) -- (0,3);
\filldraw [gray] (0,-3) -- (3,-3) -- (8,-0.5) -- (5,-0.5) -- (0,-3);
\filldraw [gray] (0,3) -- (0,-3) -- (3,-3) -- (3,3) -- (0,3);
\filldraw [lightgray] (3,-3) -- (5.5,-1.75) -- (5.5,4.25) -- (3,3) -- (3,-3);
\draw (0,0) -- (0,-3) -- (3,-3) -- (3,0);
\draw (3,-3) -- (5.5,-1.75) -- (5.5,1.25);
\draw [dashed] (0,-3) -- (2.5,-1.75) -- (2.5,1.25);
\draw (0,3) node {$\bullet$}
	   -- (0,-3) node {$\bullet$}
	   -- (3,-3) node {$\bullet$}
	   -- (3,3) node {$\bullet$}
	   -- (0,3);
\draw [dotted] (0,3)
	   -- (5,5.5)
	   -- (8,5.5)
	   -- (3,3);
\draw (3,-3) -- (8,-0.5) node {$\bullet$};
\draw [dotted] (8,-0.5) -- (8,5.5);
\draw [dashed] (0,-3)	 -- (5,-0.5) node {$\bullet$}
				 -- (8,-0.5) ;
\draw [dotted] (5,-0.5) -- (5,5.5);
\draw (3,3) -- (5.5,4.25) node {$\bullet$} 
			     -- (5.5,1.25) node {$\bullet$};
\draw [dotted] (5.5,1.25) -- (5.5,-1.75) node {$\bullet$};
\draw (0,3) -- (2.5,4.25);
\draw [dashed] (2.5,4.25) node {$\bullet$} 
				 -- (2.5,1.25) node {$\bullet$}
				 -- (2.5,-1.75) node {$\bullet$};
\draw  (0,0) node {$\bullet$}
		(3,0) node {$\bullet$};
\end{tikzpicture}
}}\\
\subfigure[Step 4]{\scalebox{.5}{
\begin{tikzpicture}[>=stealth']
\filldraw [lightgray] (2.5,4.25) -- (2.5,-1.75) -- (5,-0.5) -- (5,5.5) -- (2.5,4.25);
\filldraw [gray] (0,3) -- (2.5,4.25) -- (2.5,-1.75) -- (0,-3) -- (0,3);
\filldraw [gray] (0,-3) -- (3,-3) -- (8,-0.5) -- (5,-0.5) -- (0,-3);
\filldraw [gray] (0,3) -- (0,-3) -- (3,-3) -- (3,3) -- (0,3);
\filldraw [gray] (3,-3) -- (5.5,-1.75) -- (5.5,4.25) -- (3,3) -- (3,-3);
\filldraw [lightgray] (5.5,4.25) -- (5.5,-1.75) -- (8,-0.5) -- (8,5.5) -- (5.5,4.25);
\draw (0,0) -- (0,-3) -- (3,-3) -- (3,0);
\draw (3,-3) -- (5.5,-1.75) -- (5.5,1.25);
\draw [dashed] (0,-3) -- (2.5,-1.75) -- (2.5,1.25);
\draw (0,3) node {$\bullet$}
	   -- (0,-3) node {$\bullet$}
	   -- (3,-3) node {$\bullet$}
	   -- (3,3) node {$\bullet$}
	   -- (0,3);
\draw [dotted] 
	  (5,5.5)
	   -- (8,5.5);
\draw (2.5,4.25) -- (5,5.5) node {$\bullet$};
\draw (5.5,4.25) -- (8,5.5) node {$\bullet$};
\draw (3,-3) -- (8,-0.5) node {$\bullet$};
\draw (8,-0.5) -- (8,5.5);
\draw [dashed] (0,-3)	 -- (5,-0.5) node {$\bullet$}
				 -- (8,-0.5) ;
\draw [dashed] (5,-0.5) -- (5,5.5);
\draw (3,3) -- (5.5,4.25) node {$\bullet$} 
			     -- (5.5,1.25) node {$\bullet$};
\draw (5.5,4.25) -- (5.5,1.25) -- (5.5,-1.75) node {$\bullet$};
\draw (0,3) -- (2.5,4.25);
\draw [dashed] (2.5,4.25) node {$\bullet$} 
				 -- (2.5,1.25) node {$\bullet$}
				 -- (2.5,-1.75) node {$\bullet$};
\draw  (8,2.5) node {$\bullet$}
		(5,2.5) node {$\bullet$}
		(0,0) node {$\bullet$}
		(3,0) node {$\bullet$};
\end{tikzpicture}
}}\qquad
\subfigure[Step 5]{\scalebox{.5}{
\begin{tikzpicture}[>=stealth']
\filldraw [lightgray] (5,-0.5) -- (5,5.5) -- (8,5.5) -- (8,-0.5) -- (5,-0.5);
\filldraw [gray] (2.5,4.25) -- (2.5,-1.75) -- (5,-0.5) -- (5,5.5) -- (2.5,4.25);
\filldraw [gray] (0,3) -- (2.5,4.25) -- (2.5,-1.75) -- (0,-3) -- (0,3);
\filldraw [gray] (0,-3) -- (3,-3) -- (8,-0.5) -- (5,-0.5) -- (0,-3);
\filldraw [gray] (0,3) -- (0,-3) -- (3,-3) -- (3,3) -- (0,3);
\filldraw [gray] (3,-3) -- (5.5,-1.75) -- (5.5,4.25) -- (3,3) -- (3,-3);
\filldraw [gray] (5.5,4.25) -- (5.5,-1.75) -- (8,-0.5) -- (8,5.5) -- (5.5,4.25);
\draw (0,0) -- (0,-3) -- (3,-3) -- (3,0);
\draw (3,-3) -- (5.5,-1.75) -- (5.5,1.25);
\draw [dashed] (0,-3) -- (2.5,-1.75) -- (2.5,1.25);
\draw (0,3) node {$\bullet$}
	   -- (0,-3) node {$\bullet$}
	   -- (3,-3) node {$\bullet$}
	   -- (3,3) node {$\bullet$}
	   -- (0,3);
\draw 
	  (5,5.5)
	   -- (8,5.5);
\draw (2.5,4.25) -- (5,5.5) node {$\bullet$};
\draw (5.5,4.25) -- (8,5.5) node {$\bullet$};
\draw (3,-3) -- (8,-0.5) node {$\bullet$};
\draw (8,-0.5) -- (8,5.5);
\draw [dashed] (0,-3)	 -- (5,-0.5) node {$\bullet$}
				 -- (8,-0.5) ;
\draw [dashed] (5,-0.5) -- (5,5.5);
\draw (3,3) -- (5.5,4.25) node {$\bullet$} 
			     -- (5.5,1.25) node {$\bullet$};
\draw [dashed] (5.5,1.25) -- (5.5,-1.75) node {$\bullet$};
\draw (0,3) -- (2.5,4.25);
\draw [dashed] (2.5,4.25) node {$\bullet$} 
				 -- (2.5,1.25) node {$\bullet$}
				 -- (2.5,-1.75) node {$\bullet$};
\draw  (8,2.5) node {$\bullet$}
		(5,2.5) node {$\bullet$}
		(0,0) node {$\bullet$}
		(3,0) node {$\bullet$};
\end{tikzpicture}
}}\qquad
\subfigure[Step 6]{\scalebox{.5}{
\begin{tikzpicture}[>=stealth']
\filldraw [gray] (2.5,4.25) -- (2.5,-1.75) -- (5,-0.5) -- (5,5.5) -- (2.5,4.25);
\filldraw [gray] (0,3) -- (2.5,4.25) -- (2.5,-1.75) -- (0,-3) -- (0,3);
\filldraw [gray] (0,-3) -- (3,-3) -- (8,-0.5) -- (5,-0.5) -- (0,-3);
\filldraw [gray] (0,3) -- (0,-3) -- (3,-3) -- (3,3) -- (0,3);
\filldraw [gray] (3,-3) -- (5.5,-1.75) -- (5.5,4.25) -- (3,3) -- (3,-3);
\filldraw [gray] (5.5,4.25) -- (5.5,-1.75) -- (8,-0.5) -- (8,5.5) -- (5.5,4.25);
\filldraw [lightgray] (5,5.5) -- (8,5.5) -- (3,3) -- (0,3) -- (5,5.5);
\draw (0,0) -- (0,-3) -- (3,-3) -- (3,0);
\draw (3,-3) -- (5.5,-1.75) -- (5.5,1.25);
\draw [dashed] (0,-3) -- (2.5,-1.75) -- (2.5,1.25);
\draw (0,3) node {$\bullet$}
	   -- (0,-3) node {$\bullet$}
	   -- (3,-3) node {$\bullet$}
	   -- (3,3) node {$\bullet$}
	   -- (0,3);
\draw 
	  (5,5.5)
	   -- (8,5.5);
\draw (2.5,4.25) -- (5,5.5) node {$\bullet$};
\draw (5.5,4.25) -- (8,5.5) node {$\bullet$};
\draw (3,-3) -- (8,-0.5) node {$\bullet$};
\draw (8,-0.5) -- (8,5.5);
\draw [dashed] (0,-3)	 -- (5,-0.5) node {$\bullet$}
				 -- (8,-0.5) ;
\draw [dashed] (5,-0.5) -- (5,5.5);
\draw (3,3) -- (5.5,4.25) node {$\bullet$} 
			     -- (5.5,1.25) node {$\bullet$};
\draw [dashed] (5.5,1.25) -- (5.5,-1.75) node {$\bullet$};
\draw (0,3) -- (2.5,4.25);
\draw [dashed] (2.5,4.25) node {$\bullet$} 
				 -- (2.5,1.25) node {$\bullet$}
				 -- (2.5,-1.75) node {$\bullet$};
\draw  (8,2.5) node {$\bullet$}
		(5,2.5) node {$\bullet$}
		(0,0) node {$\bullet$}
		(3,0) node {$\bullet$};
\end{tikzpicture}
}}
\caption{Steps in the proof of Theorem \ref{thm:Q3}.}
\label{fig:StepsQ3}
\end{figure}
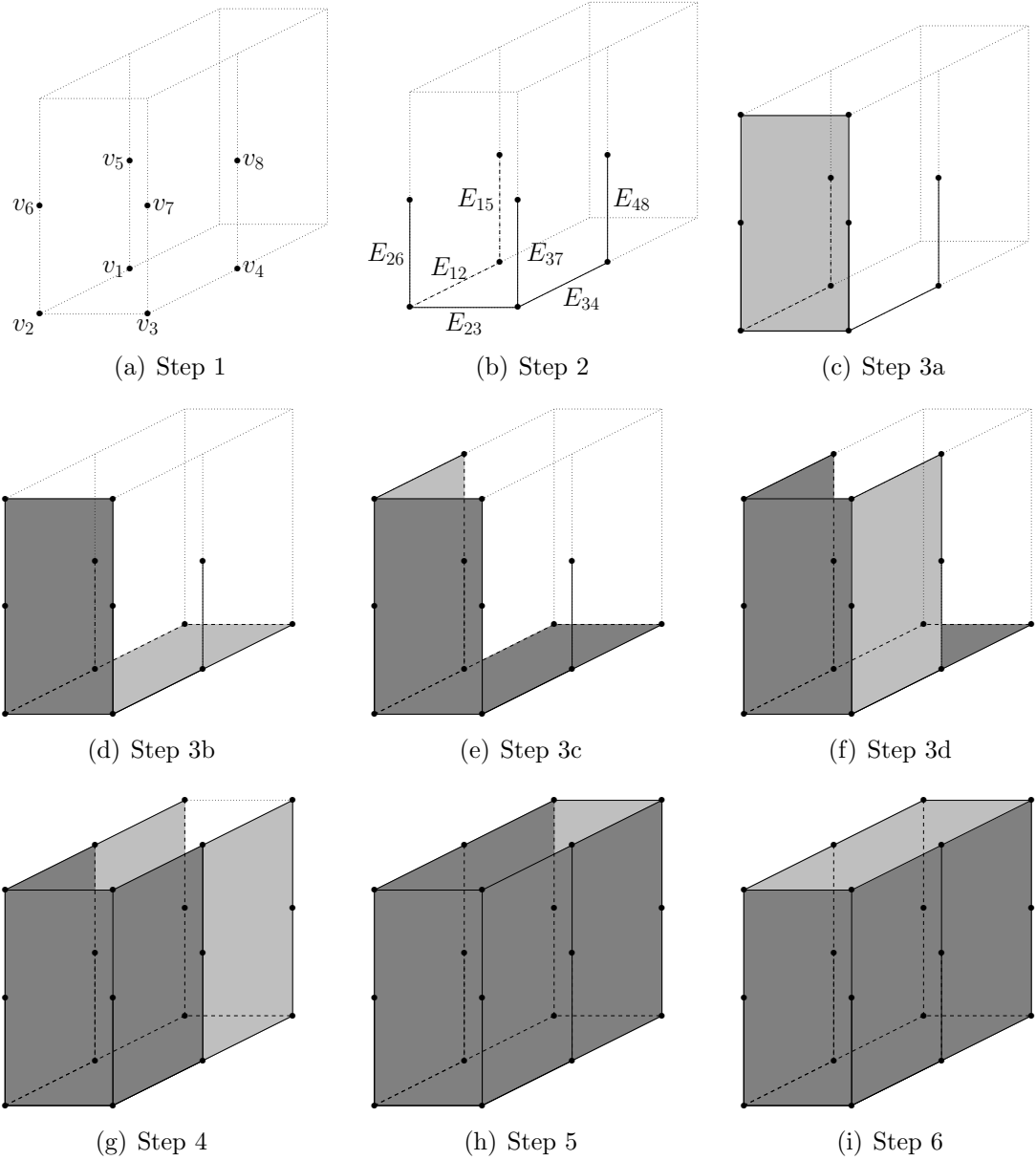

\noindent \textsl{Step 1.} Choose a continuous Bloch frame $\set{\Psi(k)}_{k \in \Bred^{(3)}}$ for $\set{P(k)}_{k \in \R^3}$. Solve the vertex conditions at all $v_i$'s, using Lemma \ref{V->U} to obtain frames $\Phi\sub{ver}(v_i)$, $i = 1, \ldots, 8$.

\noindent \textsl{Step 2.} Using repeatedly Lemma \ref{0d->1d}, extend the definition of $\Phi\sub{ver}$ to the edges $E_{12}$, $E_{15}$, $E_{23}$, $E_{26}$, $E_{34}$, $E_{37}$ and $E_{48}$, to obtain frames $\Phi\sub{edg}(k)$.

\noindent \textsl{Step 3.} Use Lemma \ref{1d->2d} to extend further the definition of $\Phi\sub{edg}$ as follows.
\begin{enumerate}[label=3\alph*.]
\item Extend the definition of $\Phi\sub{edg}$ on $F_{3,+}$: this can be done since $\delta_{3,+}$ vanishes. This will change the original definition of $\Phi\sub{edg}$ only on $E_{37}$.
\item Extend the definition of $\Phi\sub{edg}$ on $F_{2,-}$: this can be done since $\delta_{2,-} = \delta_{2,+}$ vanishes. This will change the original definition of $\Phi\sub{edg}$ only on $E_{34}$.
\item Extend the definition of $\Phi\sub{edg}$ on $F_{1,0} \cap \set{k_3 \ge 0}$: this can be done since $\delta_{1,0}$ vanishes. This will change the original definition of $\Phi\sub{edg}$ only on $E_{15}$. (Here we give $F_{1,0}$ the \emph{opposite} orientation to the one inherited from $\partial \Bred^{(3)}$, but this is inessential.)
\item Extend the definition of $\Phi\sub{edg}$ on $F_{1,+} \cap \set{k_3 \ge 0}$. Although we have already changed the definition of $\Phi\sub{edg}$ on $E_{37} \cup E_{34}$, this extension can still be performed since we have proved in Section \ref{app:EdgeInvariance} that $\delta_{1,+}$ is independent of the choice of the extension of the frame along edges, and hence vanishes by hypothesis. The extension will further modify the definition of $\Phi\sub{edg}$ only on $E_{48}$.
\end{enumerate}
We end up with a Bloch frame $\Phi_S$ defined on
\[ S := F_{2,-} \cup F_{3,+} \cup \left( F_{1,0} \cap \set{k_3 \ge 0} \right) \cup \left( F_{1,+} \cap \set{k_3 \ge 0} \right).\]

\noindent \textsl{Step 4.} Extend the definition of $\Phi_S$ to $F_{1,0} \cap \set{k_3 \le 0}$ and $F_{1,+} \cap \set{k_3 \le 0}$ by setting
\begin{align*}
\Phi_{S'}(k) & := \Phi_S(k) && \text{for } k \in S, \\
\Phi_{S'}(\theta(k)) & := \Theta \Phi_S(k) \act \eps && \text{for } k \in F_{1,0} \cap \set{k_3 \ge 0}, \quad \text{and} \\
\Phi_{S'}(t_{e_1}\theta(k)) & := \tau_{e_1} \Theta \Phi_S(k) \act \eps && \text{for } k \in F_{1,+} \cap \set{k_3 \ge 0}.
\end{align*}

This extension is continuous. Indeed, continuity along $E_{15} \cup \theta(E_{15}) = F_{1,0} \cap \set{k_3 = 0}$ (respectively $E_{48} \cup t_{e_1}\theta(E_{48}) = F_{1,+} \cap \set{k_3 = 0}$) is a consequence of the edge symmetries that we have imposed on $\Phi_S$ in Step 3c (respectively in Step 3d). What we have to verify is that the definition of $\Phi_{S'}$ that we have just given on $t_{-e_2}\theta(E_{12}) = (F_{1,0} \cap \set{k_3 \le 0}) \cap F_{2,-}$ and $t_{e_1-e_2}\theta(E_{34}) = (F_{1,+} \cap \set{k_3 \le 0}) \cap F_{2,-}$ agrees with the definition of $\Phi_S$ on the same edges that was achieved in Step 3b. 

We look at $t_{-e_2}\theta(E_{12})$, since the case of $t_{e_1-e_2}\theta(E_{34})$ is analogous. If $k \in E_{12}$, then $\Phi_S(t_{-e_2}\theta(k)) = \tau_{-e_2}\Theta \Phi_S(k) \act \eps$, since in Step 3b we imposed the edge symmetries on $F_{3,-}$. On the other hand, $\Phi_{S'}(k) = \Phi_S(k)$ on $E_{12} \subset S$ by definition of $\Phi_{S'}$: we get then also
\begin{align*}
\Phi_S(t_{-e_2}\theta(k)) & = \tau_{-e_2} \Theta \Phi_S(k) \act \eps = \tau_{-e_2}\Theta \Phi_{S'}(k) \act \eps = \\
& = \tau_{-e_2} \Theta \tau_{-e_2} \Phi_{S'}(t_{e_2}(k)) \act \eps
\end{align*}
because of the edge symmetries satisfied by $\Phi_{S'}$ on the shorter edges of $F_{1,0} \cap \set{k_3 \ge 0}$. In view of \eqref{item:TRtau} and of the definition of $\Phi_{S'}$ on $F_{1,0} \cap \set{k_3 \le 0}$, we conclude that
\[ \Phi_S(t_{-e_2}\theta(k)) = \Theta \Phi_{S'}(t_{e_2}(k)) \act \eps = \Phi_{S'}(\theta t_{e_2}(k)) = \Phi_{S'}(t_{-e_2}\theta(k)) \]
as wanted. This shows that $\Phi_{S'}$ is continuous on
\[ S' := S \cup \left( F_{1,0} \cap \set{k_3 \le 0} \right) \cup \left( F_{1,+} \cap \set{k_3 \le 0} \right). \]

\noindent \textsl{Step 5.} Extend the definition of $\Phi_{S'}$ to $F_{3,-}$ by setting
\begin{align*}
\Phi_{S''}(k) & := \Phi_{S'}(k) && \text{for } k \in S', \quad \text{and} \\
\Phi_{S''}(t_{-e_3}(k)) & := \tau_{-e_3} \Phi_{S'}(k) && \text{for } k \in F_{3,+}.
\end{align*}

This extension is continuous on $t_{-e_3}(E_{23}) = F_{2,-} \cap F_{3,-}$ because in Step 3b we have imposed on $\Phi_S$ the edge symmetries on the shorter edges of $F_{2,-}$. Similarly, $\Phi_{S''}$ and $\Phi_{S'}$ agree on $\theta(E_{26}) \cup t_{-e_3}(E_{26}) = F_{1,0} \cap F_{3,-}$ and on $t_{-e_3}(E_{37}) \cup t_{e_1} \theta(E_{37}) = F_{1,+} \cap F_{3,-}$, because by construction $\Phi_{S'}$ is $\tau$-equivariant. We have constructed so far a continuous Bloch frame $\Phi_{S''}$ on $S'' := S' \cup F_{3,-}$.

\noindent \textsl{Step 6.} Extend the definition of $\Phi_{S''}$ to $F_{2,+}$ by setting
\begin{align*}
\widehat{\Phi}(k) & := \Phi_{S''}(k) && \text{for } k \in S'', \quad \text{and} \\
\widehat{\Phi}(t_{e_2}(k)) & := \tau_{e_2} \Phi_{S''}(k) && \text{for } k \in F_{2,-}.
\end{align*}

Similarly to what was argued in Step 5, the edge symmetries (\ie $\tau$-equivariance) on $\partial F_{2,-}$ and $\partial F_{2,+}$ imply that this extension is continuous. The Bloch frame $\widehat{\Phi}$ is now defined and continuous on $\partial \Bred^{(3)}$.

\noindent \textsl{Step 7.} Finally, extend the definition of $\widehat{\Phi}$ on the interior of $\Bred^{(3)}$. As was argued before at the end of Section \ref{sec:3dObs}, this step is topologically unobstructed. Thus, we end up with a continuous symmetric Bloch frame $\eff{\Phi}$ on $\Bred^{(3)}$. Set now
\[ \uc{\Phi}(k) := \begin{cases}
\eff{\Phi}(k) & \text{if } k \in \Bred^{(3)}, \\
\Theta \eff{\Phi}(\theta(k)) \act \eps & \text{if } k \in \B^{(3)} \setminus \Bred^{(3)}.
\end{cases} \]
The symmetries satisfied by $\eff{\Phi}$ on $F_{1,0}$ (imposed in Step 4) imply that the Bloch frame $\uc{\Phi}$ is still continuous. We extend now the definition of $\uc{\Phi}$ to the whole $\R^3$ by setting
\[ \Phi(k) := \tau_{\lambda} \uc{\Phi}(k'), \quad \text{if } k = k' + \lambda \text{ with } k' \in \B^{(3)}, \: \lambda \in \Lambda. \]
Again, the symmetries imposed on $\uc{\Phi}$ in the previous Steps of the proof ensure that $\Phi$ is continuous, and by construction it is also symmetric. This concludes the proof of Theorem \ref{thm:Q3}.
\end{proof}

\subsection{Comparison with the Fu-Kane-Mele indices}
\label{sec: GL(3,Z) action}

In their work \cite{FuKaneMele}, Fu, Kane and Mele generalized the definition of $\Z_2$ indices for $2d$ topological insulators, appearing in \cite{FuKa}, to $3d$ topological insulators. There the authors mainly use the formulation of $\Z_2$ indices given by evaluation of certain quantities at the inequivalent time-reversal invariant momenta $k_\lambda$, as in \eqref{DeltaFromPoints}. We briefly recall the definition of the four Fu-Kane-Mele $\Z_2$ indices for the reader's convenience.

In the $3$-dimensional unit cell $\B^{(3)}$, there are only eight vertices which are inequivalent up to periodicity: these are the points $k_\lambda$ corresponding to $\lambda = (n_1/2, n_2/2, n_3/2)$, where each $n_j$ can be either $0$ or $1$. These are the vertices in Figure \ref{fig:Bred3} that are connected by thick lines. Define the matrix $w$ as in \eqref{w(k)}, where $\Psi$ is a continuous and $\tau$-equivariant Bloch frame (whose existence is guaranteed again by the generalization of the results of \cite{Panati} given in \cite{PaMo}). Set
\[ \eta_{n_1, n_2, n_3} := \left. \frac{\sqrt{\det w(k_\lambda)}}{\Pf w(k_\lambda)} \right|_{\lambda = (n_1/2, n_2/2, n_3/2)}. \]
Then the four Fu-Kane-Mele $\Z_2$ indices $\nu_0, \nu_1, \nu_2, \nu_3 \in \Z_2$ are defined as \cite[Eqn.s (2) and (3)]{FuKaneMele}
\begin{align*}
(-1)^{\nu_0} & := \prod_{n_1, n_2, n_3 \in \set{0,1}} \eta_{n_1, n_2, n_3}, \\
(-1)^{\nu_i} & := \prod_{n_i = 1, \: n_{j \ne i} \in \set{0,1}} \eta_{n_1, n_2, n_3}, \quad i \in \set{1, 2, 3}.
\end{align*}
In other words, the Fu-Kane-Mele $3d$ index $(-1)^{\nu_i}$ equals the Fu-Kane $2d$ index $(-1)^{\Delta}$ for the face where the $i$-th coordinate is set equal to $1/2$, while the index $(-1)^{\nu_0}$ involves the product over all the inequivalent time-reversal invariant momenta $k_\lambda$.

Since our invariants $\delta_{1,0}$ and $\delta_{i,+}$, $i = 1, 2, 3$, are defined as the $2$-dimensional $\Z_2$ invariants relative to certain (effective) faces on the boundary of the $3$-dimensional unit cell, they also satisfy identities which express them as product of quantities to be evaluated at vertices of the \effective unit cell, as in Section \ref{sec:EvalVertices} (compare \eqref{easydelta} and \eqref{easierdelta}). Explicitly, these expressions read as
\begin{align*}
(-1)^{\delta_{1,0}} & := \prod_{n_1 = 0, n_2, n_3 \in \set{0,1}} \widehat{\eta}_{n_1, n_2, n_3}, \\
(-1)^{\delta_{i,+}} & := \prod_{n_i = 1, \: n_{j \ne i} \in \set{0,1}} \widehat{\eta}_{n_1, n_2, n_3}, \quad i \in \set{1, 2, 3},
\end{align*}
where (compare \eqref{easydelta})
\[ \widehat{\eta}_{n_1, n_2, n_3} := \left. \frac{\sqrt{\left(\det U(k_\lambda)\right)^2}}{\det U(k_\lambda)} \right|_{\lambda = (n_1/2, n_2/2, n_3/2)}. \]
As we have also shown in Section \ref{sec:Fu-Kane} that our $2$-dimensional invariant $\delta$ agrees with the Fu-Kane index $\Delta$, from the previous considerations it follows at once that
\[ \nu_i = \delta_{i,+} \in \Z_2, \: i \in \set{1,2,3}, \quad \text{and} \quad \nu_0 = \delta_{1,0} + \delta_{1, +} \in \Z_2. \]
This shows that the four $\Z_2$ indices proposed by Fu, Kane and Mele are compatible with ours, and in turn our reformulation proves that indeed these $\Z_2$ indices represent the obstruction to the existence of a continuous symmetric Bloch frame in $3d$. The geometric r\^ole of the Fu-Kane-Mele indices is thus made transparent. 

We also observe that, under this identification, the indices $\nu_i$, $i \in \set{1,2,3}$, depend manifestly on the choice of a basis $\set{e_1, e_2, e_3} \subset \R^d$ for the lattice $\Lambda$, while the index $\nu_0$ is \emph{independent} of such a choice. This substantiates the terminology of \cite{FuKaneMele}, where $\nu_0$ is called ``strong'' invariant, while $\nu_1$, $\nu_2$ and $\nu_3$ are referred to as ``weak'' invariants.

Elaborating on the latter point, notice that there is a group of affine transformations of $\R^3$ which do not change the description of the physical system described by the family of projectors $\set{P(k)}_{k \in \R^3}$, consisting of
\begin{itemize}
 \item the group $GL(3;\Z)$ of linear transformations that leave the lattice $\Lambda$ fixed, and 
 \item the translations by vectors with half-integer coordinates, which correspond to the choice of an origin in $\R^3$ (and thus, to a change in the electromagnetic gauge in the Hamiltonian description of the physical system \cite[Remark 2]{PaMo}).
\end{itemize} 
Any sensible definition of a ``strong invariant'' should indeed stay unchanged even when one of these transformation is applied.

To study which ``strong invariants'', in this sense, can be constructed from our four $\Z_2$ invariants $\set{\delta_{1,0}, \, \delta_{1,+}, \, \delta_{2,+}, \, \delta_{3,+}}$, we first observe that the above group of transformation is generated by the following three:
\begin{align*}
s_1(k_1, k_2, k_3) & := (-k_3, -k_1, -k_2), \\
s_2(k_1, k_2, k_3) & := (k_1 + k_2, k_2, k_3), \\
t(k_1, k_2, k_3) & := (k_1 + 1/2, k_2, k_3)
\end{align*}
(see \eg \cite{HuaReiner}). It is easy to compute how the quadruples $\left(\delta_{1,0}, \, \delta_{1,+}, \, \delta_{2,+}, \, \delta_{3,+}\right) \in \Z_2^4$ change according to these transformations: denoting with a slight abuse of notation the induced transformations with the same symbols, one obtains explicitly
\begin{equation} \label{actionZ24}
\begin{aligned}
s_1\left(\delta_{1,0}, \, \delta_{1,+}, \, \delta_{2,+}, \, \delta_{3,+}\right) & = \left(\delta_{1,0} + \delta_{1,+} + \delta_{2,+}, \, \delta_{2,+}, \, \delta_{3,+}, \, \delta_{1,+}\right), \\
s_2\left(\delta_{1,0}, \, \delta_{1,+}, \, \delta_{2,+}, \, \delta_{3,+}\right) & = \left(\delta_{1,0} + \delta_{2,+}, \, \delta_{1,+} + \delta_{2,+}, \, \delta_{2,+}, \, \delta_{3,+}\right), \\
t\left(\delta_{1,0}, \, \delta_{1,+}, \, \delta_{2,+}, \, \delta_{3,+}\right) & = \left(\delta_{1,+}, \, \delta_{1,0}, \, \delta_{2,+}, \, \delta_{3,+}\right).
\end{aligned}
\end{equation}

The orbits of $\Z_2^4$ under this action are as follows:
\begin{align*}
\mathcal{O}_1 & := \set{(0,0,0,0)}, \\
\mathcal{O}_2 & := \set{\left(\delta_{1,0}, \, \delta_{1,+}, \, \delta_{2,+}, \, \delta_{3,+}\right) \in \Z_2^4 \setminus \set{(0,0,0,0)} : \delta_{1,0} + \delta_{1,+} = 0}, \\ 
\mathcal{O}_3 & := \set{\left(\delta_{1,0}, \, \delta_{1,+}, \, \delta_{2,+}, \, \delta_{3,+}\right) \in \Z_2^4 : \delta_{1,0} + \delta_{1,+} = 1}.
\end{align*}
A ``strong invariant'', in the sense specified above, would be a map $\nu \colon \Z_2^4 \to \Z_2$ that is equivariant with respect to the above action, or equivalently that is constant on these orbits. It can be computed that all such equivariant maps which are algebraic in the entries $\left(\delta_{1,0}, \ldots, \delta_{3,+}\right)$ are combinations of the following three:
\begin{itemize}
 \item $\nu\sub{trivial} \equiv 1$, the trivial constant map;
 \item $\nu_0\left(\delta_{1,0}, \, \delta_{1,+}, \, \delta_{2,+}, \, \delta_{3,+}\right) = \delta_{1,0} + \delta_{1,+}$, namely the Fu-Kane-Mele ``strong'' invariant;
 \item $\nu\sub{tot}$, the sum of all possible monomials in the variables $\left(\delta_{1,0}, \ldots, \delta_{3,+}\right) \in \Z_2^4$:
 \[ \nu\sub{tot} \left(\delta_{1,0}, \, \delta_{1,+}, \, \delta_{2,+}, \, \delta_{3,+}\right) = \sum_{s_0, s_1, s_2, s_3 \in \set{0,1}} \delta_{1,0}^{s_0} \, \delta_{1,+}^{s_1} \, \delta_{2,+}^{s_2} \, \delta_{3,+}^{s_3}. \]
\end{itemize}

Let us also remark that, among the above orbits, $\mathcal{O}_1$ is distinguished by the fact that it corresponds to $3$-dimensional families of projectors, satisfying Assumption \ref{proj}, for which a global continuous symmetric Bloch frame exists. In particular, the vanishing of the Fu-Kane-Mele ``strong'' invariant $\nu_0$ does not suffice to distinguish such families of projectors, since $\nu_0 = 0$ also on $\mathcal{O}_2$. There is, however, a ``secondary invariant'' that is able to characterize the quadruple of invariants $(0,0,0,0)$ among those which satisfy $\nu_0 = 0$: this is
\[ \omega := \sum_{s_1, s_2, s_3 \in \set{0,1}} \delta_{1,+}^{s_1} \, \delta_{2,+}^{s_2} \, \delta_{3,+}^{s_3}. \]
This expression is invariant under the action \eqref{actionZ24} provided that $\nu_0 = 0$, and among the quadruples $\left(\delta_{1,0}, \, \delta_{1,+}, \, \delta_{2,+}, \, \delta_{3,+}\right)$ satisfying the latter condition one has $\omega = 0$ if and only if $\left(\delta_{1,0}, \, \delta_{1,+}, \, \delta_{2,+}, \, \delta_{3,+}\right) = (0,0,0,0)$. In particular, to decide whether a $3$-dimensional family of projectors satisfying Assumption \ref{proj} admits a global continuous symmetric Bloch frame, one needs to compute first the Fu-Kane-Mele invariant $\nu_0$: if this vanishes, then one computes $\omega$, and it is only if also this quantity vanishes, that such a Bloch frame indeed exists.

\newpage


\section{Completeness of the $\Z_2$ invariants} \label{sec:completeness}

In the previous Sections, we have identified the topological obstructions to the existence of a global continuous symmetric Bloch frame for a family of projectors $\mathcal{P} = \set{P(k)}_{k \in \R^d}$ satisfying Assumption \ref{proj}, and encoded them in the $\Z_2$ invariants defined in \eqref{delta} when $d=2$, and in \eqref{3Ddelta} when $d=3$. The vanishing of these invariants characterizes the topological class of those families $\mathcal{P}$ admitting such a Bloch frame.

It is a natural question whether the same holds true in general, namely if the $\Z_2$ invariants characterize completely the topological classes of $d$-dimensional families of projectors satisfying Assumption \ref{proj}; up to now, we have only proved that the ``trivial'' class of projectors admitting a global continuous symmetric Bloch frame is characterized by the vanishing of all $\Z_2$ invariants. An affirmative answer to this question is provided by the following
\begin{thm} \label{thm:Completeness}
Let $\set{P_1(k)}_{k \in \R^d}$ and $\set{P_2(k)}_{k \in \R^d}$, with $d \le 3$, be two families of projectors on the separable Hilbert space $\Hi$ satisfying Assumption \ref{proj}, with unitary representations $\tau_1$ and $\tau_2$ and time-reversal operators $\Theta_1$ and $\Theta_2$, respectively. Assume moreover that $\dim \Ran P_1(k) = \dim \Ran P_2(k) \equiv m$, and:
\begin{itemize}
 \item if $d=2$, that 
 \[ \delta(P_1) = \delta(P_2) \in \Z_2; \]
 \item if $d=3$, that 
 \[ \left(\delta_{1,0}(P_1), \, \delta_{1,+}(P_1), \, \delta_{2,+}(P_1), \, \delta_{3,+}(P_1) \right) = \left(\delta_{1,0}(P_2), \, \delta_{1,+}(P_2), \, \delta_{2,+}(P_2), \, \delta_{3,+}(P_2) \right) \in \Z_2^4. \]
\end{itemize}
Then there exists a family of partial-isometries
\footnote{ A bounded linear operator $V \in \mathcal{B}(\Hi)$  is said to be a \emph{partial isometry} if $V|_{\Ker(V )^{\perp}}$ 
is an isometry. In particular, $V$ restricts to a unitary operator from $\Ker(V )^{\perp}$ to $\mathrm{Im}(V)$, which are called 
\emph{initial subspace} and \emph{final subspace}, respectively. 
The set of partial isometries is hereafter denoted by $\mathcal{I}(\Hi)$.   
} 
 $\set{V(k)}_{k \in \R^d}$, with initial subspace $\Ran P_1(k)$ and final subspace $\Ran P_2(k)$, such that
\begin{enumerate}[label=$(\mathrm{V}_\arabic*)$,ref=$(\mathrm{V}_\arabic*)$]
 \item \label{item:U-smooth} $V \colon \R^d \to \mathcal{I}(\Hi) \subset \mathcal{B}(\Hi)$ is $C^\infty$-smooth;
 \item \label{item:U-tau} $V$ enjoys the $(\tau_2, \tau_1)$-covariance property
 \[ V(k + \lambda) = \tau_2(\lambda) \, V(k) \, \tau_1(\lambda)^{-1}, \quad k \in \R^d, \: \lambda \in \Lambda; \]
 \item \label{item:U-TRS} $V$ enjoys ``time-reversal symmetry'', namely
 \[ V(-k) = \Theta_2 \, V(k) \, \Theta_1^{-1}; \]
 \item \label{item:U-inter} $V$ intertwines $P_1$ and $P_2$, namely
 \[ P_2(k) V(k)  = V(k) \, P_1(k). \]
\end{enumerate}
\end{thm}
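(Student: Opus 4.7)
The plan is to construct $V$ by running the Bloch-frame algorithms of Sections \ref{sec:Algorithm} and \ref{sec:3d} in parallel for $P_1$ and $P_2$, and to exploit the fact that the topological obstructions to symmetrization \emph{cancel} when the $\Z_2$ invariants agree.

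First, by an argument analogous to Proposition \ref{global}, it should suffice to construct a continuous partial isometry $V(k)$ on $\Bred$ (respectively $\Bred^{(3)}$) with initial space $\Ran P_1(k)$ and final space $\Ran P_2(k)$, satisfying $P_2(k) V(k) = V(k) P_1(k)$ together with the boundary relations imposed by \ref{item:U-tau} and \ref{item:U-TRS} on $\partial \Bred$; such a datum will extend uniquely to $\R^d$ by $(\tau_2,\tau_1)$-equivariance, and the resulting continuous $V$ can be made smooth by a routine adaptation of the argument in Appendix \ref{app:Smoothing}.

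For $d=2$ the next step is to choose continuous Bloch frames $\Psi_i = \set{\psi^{(i)}_1, \ldots, \psi^{(i)}_m}$ on $\Bred$ for $\set{P_i(k)}$, $i=1,2$, and apply the algorithm of Section \ref{sec:Algorithm} to each separately, obtaining Bloch frames $\widehat{\Phi}_i = \Psi_i \act \widehat{U}_i$ on $\partial \Bred$ which satisfy \eqref{VertexCondition} and \eqref{EdgeSymmetries}. Defining $V(k) \widehat{\phi}^{(1)}_a(k) := \widehat{\phi}^{(2)}_a(k)$ for $k \in \partial \Bred$ then gives a partial isometry that automatically obeys the boundary relations required by \ref{item:U-tau} and \ref{item:U-TRS}. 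Translated to the reference frames, $V$ corresponds to the matrix-valued map $M := \widehat{U}_2 \widehat{U}_1^{-1} \colon \partial \Bred \to \U(\C^m)$, so extending $V$ continuously to $\Bred$ amounts to extending $M$ to $\Bred$, which by the obstruction theory of Section \ref{sec:FaceExtension} requires $\deg([\det M]) = 0$. A priori the hypothesis only gives $\deg([\det M]) = \deg([\det \widehat{U}_2]) - \deg([\det \widehat{U}_1]) \equiv \delta(P_2) - \delta(P_1) = 0 \bmod 2$, but the proof of Theorem \ref{thm:Q2'} constructs, for every even integer $2s$, an explicit symmetry-preserving unitary $X \colon \partial \Bred \to \U(\C^m)$ with $\deg([\det X]) = 2s$; choosing $2s = \deg([\det \widehat{U}_1]) - \deg([\det \widehat{U}_2])$ and replacing $\widehat{\Phi}_2$ by $\widehat{\Phi}_2 \act X$ reduces $\deg([\det M])$ to zero \emph{exactly}, allowing the continuous extension of $M$, and hence of $V$, to $\Bred$.

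For $d=3$ the plan is to mimic, in parallel for $P_1$ and $P_2$, the vertex/edge/face construction in the proof of Theorem \ref{thm:Q3}, defining $V$ at each stage by mapping the partial Bloch frame of $P_1$ to the corresponding partial frame of $P_2$. At each face $F_* \in \set{F_{1,0}, F_{1,+}, F_{2,\pm}, F_{3,\pm}}$, the obstruction to extending $V$ across $F_*$ is the $2$-dimensional relative invariant computed in the previous paragraph for $P_i|_{F_*}$, which under the hypothesis reduces to $\delta_*(P_1) - \delta_*(P_2) = 0 \in \Z_2$ and can be killed by the same 2D argument. Extension from $\partial \Bred^{(3)}$ to the interior is then unobstructed since $\pi_2(\U(\C^m)) = 0$, and the final extension to $\R^3$ is by the reduction of the second paragraph. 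The hard part will be the same bookkeeping issue that appears in Steps 3--6 of the proof of Theorem \ref{thm:Q3}: ensuring that the face-by-face adjustments needed to cancel the relative $\Z_2$ obstructions remain mutually compatible along the shared edges, so that the resulting $V$ is continuous on the whole of $\partial \Bred^{(3)}$.
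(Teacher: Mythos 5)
Your proposal is correct and follows essentially the same strategy as the paper's proof: run the symmetrization algorithm for both families, use the hypothesis $\delta(P_1)=\delta(P_2)$ together with the explicit symmetry-preserving map $X$ from the proof of Theorem \ref{thm:Q2'} to make the two degrees agree \emph{exactly} (not just mod $2$), observe that the relative gauge transform $\widehat{U}_2\widehat{U}_1^{-1}$ then has degree zero and extends over $\Bred$, read off \ref{item:U-tau} and \ref{item:U-TRS} on $\partial\Bred$ from the fact that $V$ intertwines two frames satisfying \eqref{VertexCondition} and \eqref{EdgeSymmetries}, and in $d=3$ work face by face and use $\pi_2(\U(\C^m))=0$; smoothness is deferred to Appendix \ref{app:Smoothing} in both arguments. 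The one point where you genuinely diverge is that the paper first builds a global reference intertwiner $W_*(k)=G_2(k)^{-1}V_*G_1(k)$ (Steps I--II, via the triviality results of \cite{Panati,PaMo} and Kuiper's theorem) and then phrases everything as a single gauge transform relative to $W_*$, whereas you work directly with two independently chosen continuous frames $\Psi_1,\Psi_2$ on the contractible cell $\Bred$ (resp.\ $\Bred^{(3)}$) and encode $V$ in the matrix $M=\widehat{U}_2\widehat{U}_1^{-1}$ relative to that pair; this is legitimate, since the invariants and the algorithm only require frames continuous on $\Bred$, and it spares you the bundle-triviality input, at the price of carrying two reference frames instead of one. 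Like the paper's own (admittedly sketched) Step III.3, your $3d$ part leaves the edge-compatibility bookkeeping along shared edges of the faces at the level of a plan rather than a verification, but you identify correctly that this is the same mechanism as Steps 3--6 of the proof of Theorem \ref{thm:Q3}.
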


\begin{rmk}[Isomorphism classes of time-reversal invariant Bloch bundles] \label{rmk: TRS Bloch bundles}
We provide a geometric reformulation of the above result, in terms of time-reversal invariant Bloch bundles (adopting the terminology of \cite{GrafPorta}). Using the construction of \cite{Panati}, one can naturally associate a smooth Hermitian vector bundle $\mathcal{E} = \left( E \xrightarrow{\pi} \T^d \right)$, called the \emph{Bloch bundle}, to any family of projectors $\set{P(k)}_{k \in \R^d}$ satisfying \ref{item:smooth} and \ref{item:tau}, where the (Brillouin) torus $\T^d$ is the quotient $\R^d / \Lambda$. The fiber $\mathcal{E}_k$ over the point $k \in \T^d$ is the range of the projector $P(k)$. If the family of projectors is also time-reversal symmetric, \ie satisfies also \ref{item:TRS}, the Bloch bundle can be endowed with a further structure, namely that of a fiberwise antiunitary map $\widehat{\Theta} \colon \mathcal{E} \to \mathcal{E}$, squaring fiber-by-fiber to $-\Id$ and making the following diagram commutative:
\[ \xymatrix{ E \ar[r]^{\widehat{\Theta}} \ar[d]_{\pi} & E \ar[d]^{\pi} \\ \T^d \ar[r]^{\theta} & \T^d} \]
where $\theta(k) = -k$. One speaks then of a \emph{time-reversal invariant Bloch bundle}.

The natural notion of isomorphism for time-reversal invariant Bloch bundles $\mathcal{E}_1$ and $\mathcal{E}_2$ requires the existence of a smooth isomorphism of Hermitian bundles $\widehat V \colon \mathcal{E}_1 \to \mathcal{E}_2$ which is compatible with the time-reversal structure (\emph{TRS isomorphism}), namely such that the following diagram
\begin{equation} \label{TRSdiagram}
\xymatrix{
E_1 \ar[rr]^{\widehat V} \ar[dr]_{\pi_1} \ar[ddd]_{\widehat{\Theta}_1} & & E_2 \ar[dl]^{\pi_2} \ar[ddd]^{\widehat{\Theta}_2} \\
& \T^d \ar[d]^{\theta} & \\
& \T^d & \\
E_1 \ar[rr]_{\widehat V} \ar[ur]^{\pi_1} & & E_2 \ar[ul]_{\pi_2} }
\end{equation}
commutes. Such an isomorphism, when looked at the level of fibers, can be recast in a {partial-isometry-valued map $V \colon \R^d \to \mathcal{I}(\Hi)$} satisfying exactly the conditions \ref{item:U-smooth}, \ref{item:U-tau}, \ref{item:U-TRS} and \ref{item:U-inter} stated in Theorem \ref{thm:Completeness}: conditions \ref{item:U-smooth}, \ref{item:U-tau}, and \ref{item:U-inter} (the latter implying the equality of the dimensions of the range of the projectors, \ie of their rank) come from the fact that 
$\widehat V \colon \mathcal{E}_1 \to \mathcal{E}_2$ is a smooth isomorphism of the Bloch bundles associated to the families of projectors $\set{P_1(k)}_{k \in \R^d}$ and $\set{P_2(k)}_{k \in \R^d}$, while \ref{item:U-TRS} encodes the commutativity of the diagram \eqref{TRSdiagram}. \soloarXiv{The equivalence of the two concepts is discussed in Appendix \ref{app:Isomorphisms}.}
 

Theorem \ref{thm:Completeness} implies that, if two families of projectors satisfying Assumption \ref{proj} have the same rank and the same $\Z_2$ invariants, then their associated Bloch bundles are isomorphic as time-reversal invariant bundles. 
Conversely, if they are isomorphic, then any isomorphism compatible with the time-reversal structure induces 
a map $V \colon \R^d \to \mathcal{I}(\Hi)$ satisfying \ref{item:U-smooth}, \ref{item:U-tau}, \ref{item:U-TRS} and \ref{item:U-inter}: in particular, in view of \ref{item:U-inter}, a global continuous Bloch frame $\Psi_1$ for $\set{P_1(k)}_{k \in \R^d}$ automatically gives rise to a global continuous Bloch frame $\Psi_2$ for $\set{P_2(k)}_{k \in \R^d}$ 
by setting $\Psi_2(k) := V(k) \Psi_1(k)$. Since these two frames, which can be used to compute the $\Z_2$ invariants, are then related by a unitary isometry, arguing as in the proof of Proposition \ref{deltaTopInv} one can prove that the two families of projectors have the same $\Z_2$ invariants.

We conclude that the $\Z_2$ invariants $\delta(P) \in \Z_2$ (when $d=2$) and $( \delta_{1,0}(P)$, $\delta_{1,+}(P), \, \delta_{2,+}(P), \, \delta_{3,+}(P) ) \in \Z_2^4$ (when $d=3$), {together with the rank}, are indeed a complete set of invariants for a family of projectors $\set{P(k)}_{k \in \R^d}$ satisfying Assumption \ref{proj}, with respect to TRS isomorphism of the associated Bloch bundles.   
This is compatible with the results of \cite{deNittisGomi}, which predict a $\Z_2$ and $\Z_2^4$ classification for isomorphism classes of time-reversal invariant (or ``Quaternionic'') Bloch bundles {with fixed rank $m$} in dimension $d=2$ and $d=3$, respectively: our $\Z_2$ indices provide numerical invariants to distinguish these isomorphism classes. 

An important remark is in order here.  The previous claims hold true for a fixed choice of a linear basis for $\Lambda$ and  
an origin in $\R^d$, as in the statement of Theorem \ref{thm:Completeness} (the choice of the origin is implicit in \ref{item:TRS},  the choice of the basis in the labeling of the $3$-dimensional indices). Without such a choice, 
a complete set of invariants for $3$-dimensional time-reversal invariant Bloch bundles is provided instead 
by the \emph{equivalence class (orbit)} of the quadruple  
$( \delta_{1,0}(P)$, $\delta_{1,+}(P), \, \delta_{2,+}(P), \, \delta_{3,+}(P) ) \in \Z_2^4$ 
under the action of  $GL(3, \Z)$ and of the translation group  ${\tiny \frac{1}{2}} \Lambda$, 
as discussed at the end of  Section \ref{sec: GL(3,Z) action}, together with the rank. 
This distinction is clearly immaterial for $d =2$. 
 \end{rmk}

\begin{rmk}[Quantum spin Hall systems]
The completeness, given by Theorem \ref{thm:Completeness}, of the $2$-dimensional $\Z_2$ invariant $\delta$ defined in \eqref{delta}, also confirms the classification of quantum spin Hall systems in exactly two topologically distinct classes \cite{FuKa,GrafPorta}, the one of the ``trivial'' insulator and the one of the ``topological insulator'', by means of a \emph{bulk} invariant.
\end{rmk}

\begin{proof}[Proof of Theorem \ref{thm:Completeness}]
We give a sketch of the proof, articulating it in various steps; the careful reader can fill in the details.

Before sketching the argument, we make a number of preliminary observations. First of all, in this proof we provide a \emph{continuous} map $V \colon \R^d \to \mathcal{I}(\Hi)$, satisfying all the properties required in the statement; a regular one can be obtained by a smoothing argument similar to the one presented in Appendix \ref{app:Smoothing}. Moreover, it suffices to construct $V$ on $\Bred$, extending it to the whole $\R^d$ by first imposing \ref{item:U-TRS} on $\B$ and then \ref{item:U-tau}. In turn, conditions \ref{item:U-tau} and \ref{item:U-TRS} impose some ``edge'' and ``face symmetries'' on the map $V$ to be defined on $\Bred$. 

\medskip

\noindent \textsl{Step I. Existence of an intertwining unitary.} We begin by constructing a ``reference intertwining unitary'' $W_*$ satisfying \ref{item:U-smooth}, \ref{item:U-tau} and \ref{item:U-inter}. Its existence is a reformulation of the results in \cite{Panati,PaMo}, and it holds even without the assumptions on the $\Z_2$ invariants of the two families of projectors being equal. 


As a first auxiliary result, we need the following

\begin{lemma}[Existence of a trivializing unitary] \label{Lem:Trivializing G}
Let $\set{P(k)}_{k \in \R^d}$ be a family of projectors satisfying Assumption \ref{proj}, for $d \leq 3$. 
Then there exists $G : \R^d \to \U(\Hi)$ such that: 
\begin{enumerate}[label=$(\mathrm{G}_\arabic*)$,ref=$(\mathrm{G}_\arabic*)$]
 \item \label{item:G-smooth} $G \colon \R^d \to \U(\Hi)$ is $C^\infty$-smooth;
 \item \label{item:G-tau} $G$ enjoys the $(\Id, \tau)$-covariance property
 \[ G(k + \lambda) =   G(k) \, \tau(\lambda)^{-1},   \quad k \in \R^d, \: \lambda \in \Lambda; \]
 \item \label{item:G-inter} $G$ enjoys the trivializing property
 \[ G(k) \, P(k) \, G(k)^{-1} = P(0). \]
\end{enumerate}
\end{lemma}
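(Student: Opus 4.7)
My plan is to reduce the statement to the smooth triviality of the Bloch bundle, which is the main result of \cite{Panati} (refined in \cite{PaMo}), and then to construct $G$ by trivializing also the orthogonal complement. More specifically, for $d \le 3$ and under Assumption \ref{proj}, \cite{Panati,PaMo} provides a smooth $\tau$-equivariant orthonormal Bloch frame $\set{\psi_a(k)}_{a=1,\ldots,m}$ spanning $\Ran P(k)$, \ie satisfying $\psi_a(k+\lambda) = \tau(\lambda)\psi_a(k)$. Setting $e_a := \psi_a(0) \in \Ran P(0)$, I would first introduce the smooth family of partial isometries
\[ S(k) := \sum_{a=1}^{m} \ket{e_a}\bra{\psi_a(k)} \in \BH, \]
which satisfies $S(k)\psi_a(k) = e_a$, $S(k)^{*}S(k) = P(k)$ and $S(k)S(k)^{*} = P(0)$, and inherits from the frame the $\tau$-equivariance $S(k+\lambda) = S(k)\tau(\lambda)^{-1}$ (using the antilinearity-free computation $\bra{\tau(\lambda)\psi_a(k)} = \bra{\psi_a(k)}\tau(\lambda)^{-1}$).

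The main obstacle is then the extension of $S$ to a genuine unitary $G(k) \in \U(\Hi)$ acting on all of $\Hi$, while preserving smoothness and $\tau$-equivariance: for this I need a smooth $\tau$-equivariant family of partial isometries $T(k) \colon \Ran(\Id - P(k)) \to \Ran(\Id - P(0))$. Two regimes arise depending on $\dim \Hi$. If $\dim \Hi = +\infty$, the orthogonal complement bundle has infinite-dimensional fibres over the paracompact base $\T^d = \R^d / \Lambda$, and Kuiper's theorem (the contractibility of the unitary group of a separable infinite-dimensional Hilbert space) ensures that every such Hilbert bundle is smoothly trivial, providing the desired $T$ (smoothness can be arranged via an approximation argument analogous to the one in Appendix \ref{app:Smoothing}). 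If instead $\dim \Hi < +\infty$, as in tight-binding models, then the family $\set{\Id - P(k)}_{k \in \R^d}$ itself satisfies Assumption \ref{proj}, with the same $\tau$ and $\Theta$, so a second application of \cite{Panati,PaMo} furnishes a smooth $\tau$-equivariant frame for the complement, from which $T$ is built exactly as $S$ was.

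With both $S$ and $T$ in hand, I would set
\[ G(k) := S(k) + T(k)\bigl(\Id - P(k)\bigr), \]
and check the three conclusions by a direct computation: smoothness \ref{item:G-smooth} is inherited from that of $S$, $T$ and $P$; the $(\Id,\tau)$-covariance \ref{item:G-tau} follows from $S(k+\lambda) = S(k)\tau(\lambda)^{-1}$, $T(k+\lambda) = T(k)\tau(\lambda)^{-1}$, together with the intertwining $\bigl(\Id - P(k+\lambda)\bigr)\tau(\lambda) = \tau(\lambda)\bigl(\Id - P(k)\bigr)$; and the trivializing property \ref{item:G-inter} is immediate, since $S(k)$ and $T(k)(\Id - P(k))$ act on orthogonal parts of $\Hi$ and, combined, realize unitary identifications $\Ran P(k) \xrightarrow{\sim} \Ran P(0)$ and $\Ran(\Id - P(k)) \xrightarrow{\sim} \Ran(\Id - P(0))$. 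The truly delicate ingredient is the invocation of the Panati-Monaco trivialization of the Bloch bundle: this is precisely the point at which the dimensional restriction $d \le 3$ enters, since in higher dimensions the smooth $\tau$-equivariant frame $\set{\psi_a(k)}$ need not exist and the whole construction collapses.
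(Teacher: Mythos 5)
Your proposal is correct and takes essentially the same route as the paper's proof: triviality of the Bloch bundle from \cite{Panati,PaMo} for $d \le 3$, triviality of the orthogonal complement (Kuiper's theorem when $\dim\Hi = +\infty$, a rank-finite argument when $\dim\Hi < +\infty$), and assembly of the two trivializations into the unitary $G(k)$, with $\tau$-equivariance checked directly. The only minor variation is in the finite-dimensional step: the paper deduces triviality of the complement bundle from $\mathrm{Ch}_1(\mathcal{E}^\perp) = -\mathrm{Ch}_1(\mathcal{E}) = 0$ together with the technical Lemma of \cite{Panati}, whereas you apply the full time-reversal-symmetric triviality result to the family $\set{\Id - P(k)}_{k \in \R^d}$, which indeed satisfies Assumption \ref{proj} with the same $\tau$ and $\Theta$; both arguments are valid.
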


\begin{proof}[Proof of Lemma \ref{Lem:Trivializing G}]
In view of the results of \cite{Panati,PaMo}, the Bloch bundle $\mathcal{E}$ associated to the family $\set{P(k)}_{k \in \R^d}$ is trivial, because it has vanishing first Chern class $\mathrm{Ch}_1(\mathcal{E})$ {and $d \leq 3$} (see \cite[Proposition 4]{Panati}). Now, if the ambient Hilbert space $\Hi$ is finite-dimensional, this also implies the triviality of the \emph{orthogonal} Bloch bundle $\mathcal{E}^\perp$, whose fibre over the point $k \in \T^d$ is the space $(\Ran P(k))^\perp$. Indeed, we have
\[ \mathrm{Ch}_1(\mathcal{E}^\perp) = \mathrm{Ch}_1(\mathcal{E}) + \mathrm{Ch}_1(\mathcal{E}^\perp) = \mathrm{Ch}_1(\mathcal{E} \oplus \mathcal{E}^\perp) = \mathrm{Ch}_1(\T^d \times \Hi) = 0 \]
and again by the technical Lemma in \cite{Panati} the bundle $\mathcal{E}^\perp$ is trivial, since $d \le 3$. If $\Hi$ is instead infinite-dimensional, then the orthogonal bundle $\mathcal{E}^\perp$ has also infinite-dimensional typical fibre $\mathcal{K}$; it follows from Kuiper's theorem that $\mathcal{E}^\perp$ is again trivial \cite{Kuiper}.

Thus, regardless of the dimensionality of the ambient Hilbert space $\Hi$, we conclude that there exist isomorphisms of Hermitian bundles $F \colon \mathcal{E} \to \T^d \times \Ran P(0)$ and $F^\perp \colon \mathcal{E}^\perp \to \T^d \times (\Ran P(0))^\perp$. 
Since $\mathcal{E}$ and $\mathcal{E}^\perp$ are constructed via the equivalence relation  $ \sim_\tau$ \cite{Panati}, defined by
\begin{equation} \label{tau equivalence}
(k, \phi) \sim_\tau (k', \phi') \quad \text{if and only if} \quad \exists\, \lambda \in \Lambda : 
k' = k + \lambda \text{ and } \phi' = \tau(\lambda) \phi, 
\end{equation}
the direct sum  $F \oplus F^\perp : \mathcal{E} \oplus \mathcal{E}^\perp \to \T^d \times \Hi$ 
can be regarded as a map on the equivalence classes, namely
$$
F \oplus F^\perp: (\R^d \times \Hi) / _{\sim_\tau} \longrightarrow  (\R^d \times \Hi) / _{\sim_\Id}.
$$   
Notice that, in particular, it maps $\tau$-equivariant Bloch frames into periodic ones. 
Therefore, it induces a smooth map $G \colon \R^d \to \U(\Hi)$ such that
\ref{item:G-tau} and \ref{item:G-inter} hold true. 
\end{proof} 

Applying Lemma \ref{Lem:Trivializing G} to both families $\set{P_1(k)}_{k \in \R^d}$ and $\set{P_2(k)}_{k \in \R^d}$ as in the statement of the Theorem, we obtain that there exist continuous $(\Id, \tau_j)$-covariant maps $G_j \colon \R^d \to \U(\Hi)$, $j \in \set{1,2}$, such that
\begin{equation} \label{Id0}
G_1(k) \, P_1(k) \, G_1(k)^{-1} = P_1(0), \quad G_2(k) \, P_2(k) \, G_2(k)^{-1} = P_2(0).
\end{equation}
We now fix an identification of $\Ran P_1(0)$ and $\Ran P_2(0)$, by choosing a unitary $V_* \in \U(\Hi)$ such that 
\[ P_2(0) = V_* \, P_1(0) \, V_*^{-1} 
\mbox{ and hence also }
(\Id_\Hi - P_2(0)) = V_* \, (\Id_\Hi - P_1(0)) \, V_*^{-1}, \]
so that we have the following commutative diagram:
\begin{equation} \label{Diagram}
\xymatrix{ \Ran P_1(k) \ar[r]^{G_1(k)} \ar[d]_{W_*(k)} & \Ran P_1(0) \ar[d]^{V_*} \\ \Ran P_2(k) \ar[r]^{G_2(k)} & \Ran P_2(0)} 
\end{equation}
We conclude that the map
\begin{equation} \label{W_*}
W_*(k) := G_2(k)^{-1} \, V_* \, G_1(k),   
\end{equation}
satisfies properties \ref{item:U-smooth}, \ref{item:U-tau} and \ref{item:U-inter}.
Our next goal is to modify $W_*|_{\Ran P_1}$ into $V$, in such a way that also property  \ref{item:U-TRS} is satisfied.

\bigskip

\noindent \textsl{Step II. Reduction to a gauge transform.} In view of the above construction we can assume without loss of generality to have fixed an identification of $\Ran P_1(k)$ with $\Ran P_2(k)$, 
and similarly for the orthogonal subspaces. 

More explicitly, the map $W_*$ provides an initial identification of $\Ran P_1(k)$ and $\Ran P_2(k)$, which may be re-expressed as an identification of a smooth $\tau_1$-equivariant Bloch frame $\Psi_1$  and its image  $\Psi_2$, namely $\Psi_2(k) := W_*(k) \Psi_1(k)$.  Any  partial isometry $V$ satisfying \ref{item:U-inter} is characterized  by a map $U: \R^d \rightarrow \U(\C^m)$ (gauge transform) via the identity
\begin{equation} \label{W characterization}
V(k) \Psi_1(k) =  \big(W_*(k)\Psi_1(k) \big) \act U(k).    
\end{equation} 
To prove Theorem \ref{thm:Completeness} we just need to show how to construct a map $U \colon \R^d \to \U(\C^m)$
which induces, via \eqref{W characterization}, a partial-isometry-valued  map $V$ satisfying \ref{item:U-smooth}, \ref{item:U-tau} and \ref{item:U-TRS}.

\bigskip

\noindent \textsl{Step III. Enforcing the time-reversal symmetry.} We conclude the proof with a dimension dependent argument.
 
 \noindent \textsl{Step III.1: $d=1$.} From Theorem \ref{thm:Q1}, we know that there exist global continuous symmetric Bloch frames for $\set{P_1(k)}_{k \in \R}$ and $\set{P_2(k)}_{k \in \R}$: denote them by $\Phi_1$ and $\Phi_2$, respectively. 
Let $V$ be defined point-wise by the relation $\Phi_2(k) = V(k) \Phi_1(k)$. 
Then \ref{item:U-inter} is trivially satisfied, and moreover the $\tau$-equivariance \ref{tau-cov} and the time-reversal invariance \ref{tr} for the Bloch frames translate exactly into properties \ref{item:U-tau} and \ref{item:U-TRS} for $V$.

\noindent  \textsl{Step III.2: $d=2$.}  In view of the above identification \eqref{Diagram} of $\Ran P_1(k)$ and $\Ran P_2(k)$ with 
$\Ran P_1(0) \simeq \C^m$, fixing a basis $\Psi$ in $\C^m$ gives a global continuous $\tau$-equivariant Bloch frame both for $\set{P_1(k)}_{k \in \R^2}$ and for $\set{P_2(k)}_{k \in \R^2}$. We denote them by $\Psi_1(k)$  and $\Psi_2(k) = W_*(k) \Psi_1(k)$, respectively.

Now construct
\[ \widehat{\Phi}_1(k) = \Psi_1(k) \act \widehat{U}_1(k), \quad \widehat{\Phi}_2(k) = \Psi_2(k) \act \widehat{U}_2(k), \quad k \in \partial \Bred, \]
as in \eqref{hatU}. By assumption
\[ \delta(P_1) \equiv \deg([\det \widehat{U}_1]) \equiv \deg([\det \widehat{U}_2]) \equiv \delta(P_2) \bmod 2, \]
or equivalently
\[ \deg([\det \widehat{U}_1]) = \deg([\det \widehat{U}_2]) - 2 s, \quad s \in \Z. \]
Let $X \colon \partial \Bred \to \U(\C^m)$ be as in the proof of Theorem \ref{thm:Q2'}, so that $X$ satisfies \ref{XiSymm} and $\deg([\det X]) = -2s$. It follows that $\Psi_2(k) \act \left( \widehat{U}_2(k) \, X(k) \right) = \widehat{\Phi}_2(k) \act X(k)$ also satisfies the vertex conditions \eqref{VertexCondition} and the edge symmetries \eqref{EdgeSymmetries} by \ref{XiSymm}, and moreover $\deg([\det \widehat{U}_2 \, X]) = \deg([\det \widehat{U}_1])$. This shows that we may assume, without loss of generality, that 
\begin{equation} \label{equaldeg}
\deg([\det \widehat{U}_1]) = \deg([\det \widehat{U}_2]) \in \Z.
\end{equation}

\medskip

We  define a partial-isometry-valued map $V$, initially for $k \in \partial \Bred$, by imposing the relation 
\begin{equation} \label{V intertwines}
V(k) \widehat \Phi_1(k) =  \widehat \Phi_2(k) \qquad \qquad \mbox{ for any }  k \in \partial \Bred. 
\end{equation}
\noindent In term of the $\Psi$'s frames, the  above definition reads  
\begin{eqnarray} \label{W definition}
V(k) \Psi_1(k) 
&= \Psi_2(k) \act  \left(\widehat U_2(k) \widehat U_1(k)^{-1} \right),  
\end{eqnarray}
where, as before,  $\Psi_2(k) = W_*(k) \Psi_1(k)$.  The map $ k \mapsto \widetilde{U}(k) := \widehat U_2(k) \widehat U_1(k)^{-1}$,  initially defined for $k \in \partial \Bred$,  extends continuously to a map $U \colon \Bred \to \U(\C^m)$ since $\deg([\det \widetilde{U}]) = 0$ by \eqref{equaldeg}. Therefore,  $V$  extends to a continuous 
$\mathcal{I}(\Hi)$-valued map on $\Bred$ by using \eqref{W characterization}.

\newpage 

Since $V(k)$ intertwines two frames that are symmetric on $\partial \Bred$, it satisfies 
\begin{align*}
V(-k)  & = \Theta_2 V(k) \Theta_1^{-1}  && \text{for every }  k \in E_1 \cup E_6, \\
V(-k  + e_1) & = \tau_2(e_1)  \Theta_2 V(k) \Theta_1^{-1}  \tau_1(e_1)^{-1}   && \text{for every }  k \in E_3 \cup E_4,
\end{align*}
as well as the natural $(\tau_2, \tau_1)$-covariance relation for $k \in E_2 \cup E_5$. As a consequence, $V$ extends to a \emph{continuous} map on $\R^d$ which satisfies \ref{item:U-tau}, \ref{item:U-TRS}, and obviously \ref{item:U-inter}.  
This concludes the proof for $d =2$. 

\noindent  \textsl{Step III.3: $d=3$.} The restrictions of the families of projectors to the $2$-dimensional faces 
$F_{1,0} \cap \set{k_3 \geq 0}$, $F_{1,+} \cap \set{k_3 \geq 0}$, $F_{2,\pm}$ and $F_{3,\pm}$ (in the notation of Section \ref{sec:3d}) yield $2$-dimensional families of projectors, which by assumption have the same $2$-dimensional $\Z_2$ invariants. 

Let $F_* \simeq \Bred^{(2)}$ be any of the above faces. By the previous Step of the proof, we can construct a continuous map $  U_* \colon F_* \to \U(\C^m)$ which  induces, via \eqref{W characterization}, a partial isometry 
$V|_{F_*}(k)$ relating, on the boundary of $F_*$, two symmetric Bloch frames for the two families of projectors,  as in \eqref{V intertwines}. 

By imposing properties \ref{item:U-tau} and \ref{item:U-TRS},  one constructs  a \emph{continuous} extension of $V$ to $\partial \Bred^{(3)}$ (compare Section \ref{sec:3d}), which can be re-expressed, via \eqref{W characterization}, in terms of a  continuous gauge transform $\widetilde U \colon \partial \Bred^{(3)} \to \U(\C^m)$. 
Since the group $\pi_2(\U(\C^m))$ is trivial, we can find a continuous extension $U \colon \Bred^{(3)} \to \U(\C^m)$ of $\widetilde{U}$.  Then, by using again \eqref{W characterization}, 
one obtains a partial-isometry-valued \emph{continuous} map $V$  on $\Bred^{(3)}$, which satisfies the natural  $\tau$-covariance and time-reversal symmetry properties on $\partial \Bred^{(3)}$. Therefore, $V$ extends to a \emph{continuous} map on $\R^d$ which satisfies \ref{item:U-tau}, \ref{item:U-TRS}, and obviously \ref{item:U-inter}. 
This concludes the proof for $d=3$. 
\end{proof}

\newpage


\appendix


\newcommand{\Mid}{\wideparen{M}}

\section{Smoothing procedure} \label{app:Smoothing}

Throughout the main body of the paper, we have mainly considered the issue of the existence of a \emph{continuous} global symmetric Bloch frame for a family of projectors $\set{P(k)}_{k \in \R^d}$ satisfying Assumption \ref{proj}. This Appendix is devoted to showing that, if such a continuous Bloch frame exists, then also a \emph{smooth} one can be found, arbitrarily close to it. The topology in which we measure ``closeness'' of frames is given by the distance
\[ \mathrm{dist} (\Phi, \Psi) := \sup_{k \in \R^d} \left( \sum_{a=1}^{m} \left\| \phi_a(k) - \psi_a(k) \right\|_{\Hi}^2 \right)^{1/2} \]
for two global Bloch frames $\Phi = \set{\phi_a(k)}_{a=1, \ldots, m, \: k \in \R^d}$ and $\Psi = \set{\psi_a(k)}_{a=1, \ldots, m, \: k \in \R^d}$. The following result holds in any dimension $d \ge 0$.

\begin{prop} \label{smoothing}
Let $\set{P(k)}_{k \in \R^d}$ be a family of orthogonal projectors satisfying Assumption \ref{proj}. Assume that a \emph{continuous} global symmetric Bloch frame $\Phi$ exists for $\set{P(k)}_{k \in \R^d}$. Then for any $\mu > 0$ there exists a \emph{smooth} global symmetric Bloch frame $\Phi\sub{sm}$ such that
\[ \mathrm{dist} (\Phi, \Phi\sub{sm}) < \mu. \]
\end{prop}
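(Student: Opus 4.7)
The plan is to produce $\Phi\sub{sm}$ from $\Phi$ in three steps: a componentwise mollification by Bochner convolution with a suitable scalar kernel, a fibrewise projection back into $\Ran P(k)$ via $P(k)$, and a symmetric (L\"owdin-type) orthonormalisation of the resulting (non-orthonormal) frame. The delicate point throughout is that each operation must be carried out so as to preserve both $\tau$-equivariance \ref{tau-cov} and the antiunitary invariance \ref{tr}; this dictates specific choices at every stage.

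First I would fix a non-negative, real, smooth and \emph{even} mollifier $K_\eta \in C_c^\infty(\R^d)$ with $\int K_\eta = 1$ and support of radius $\eta > 0$, and set
\[ \tilde\phi_a(k) := \int_{\R^d} K_\eta(k - k')\,\phi_a(k')\,\di k'. \]
Linearity of $\tau(\lambda)$ and the scalar nature of the kernel immediately give that each $\tilde\phi_a$ is smooth and $\tau$-equivariant. Property \ref{tr} survives the convolution because $K_\eta$ is \emph{real} (so the antilinearity of $\Theta$ passes through the integral) and \emph{even} (so the substitution $k' \mapsto -k'$ produces the kernel $K_\eta(k-k'')$ required to recover the TRS relation for $\tilde\phi_a$). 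Moreover, continuity of $\phi_a$ on the compact cell $\B$ combined with the unitarity of the $\tau(\lambda)$ yields uniform continuity of $\phi_a$ on $\R^d$, and hence $\sup_{k \in \R^d}\norm{\tilde\phi_a(k) - \phi_a(k)} \to 0$ as $\eta \to 0$.

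Next, I set $\chi_a(k) := P(k)\tilde\phi_a(k)$. Smoothness of $P$ makes $\chi_a$ smooth, while \ref{item:tau} and \ref{item:TRS} combined with the already established symmetries of $\tilde\phi_a$ ensure that $\set{\chi_a}$ still satisfies both \ref{tau-cov} and \ref{tr}. Since $P(k)$ is a contraction, $\norm{\chi_a(k) - \phi_a(k)} \le \norm{\tilde\phi_a(k) - \phi_a(k)}$, so the Gram matrix $G(k) := \left(\scal{\chi_a(k)}{\chi_b(k)}\right)_{a,b}$ is uniformly close to $\Id_m$ for $\eta$ small, hence invertible, with smooth inverse square root $G(k)^{-1/2}$. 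I then define
\[ \phi^{\mathrm{sm}}_a(k) := \sum_{b=1}^{m} \chi_b(k)\,\bigl(G(k)^{-1/2}\bigr)_{ba}, \]
which by construction is a smooth orthonormal frame in $\Ran P(k)$.

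It remains to check the symmetries of $\Phi\sub{sm}$. Unitarity of $\tau(\lambda)$ gives $G(k+\lambda) = G(k)$, whence $\tau$-equivariance of $\Phi\sub{sm}$ is immediate. For time-reversal, the antiunitary identity $\scal{\Theta u}{\Theta v} = \scal{v}{u}$ applied to \ref{tr} for $\chi$ yields the matrix relation
\[ G(-k) = \eps^{-1}\,\overline{G(k)}\,\eps; \]
since $\eps$ is unitary, this conjugation commutes with taking positive square roots, producing $G(-k)^{-1/2} = \eps^{-1}\,\overline{G(k)^{-1/2}}\,\eps$. Inserting this into the definition of $\phi^{\mathrm{sm}}_b(-k)$ and using antilinearity of $\Theta$ reproduces exactly $\phi^{\mathrm{sm}}_b(-k) = \sum_a \Theta\phi^{\mathrm{sm}}_a(k)\,\eps_{ab}$. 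Finally, the distance estimate $\mathrm{dist}(\Phi, \Phi\sub{sm}) \lesssim \sup_a \bigl(\norm{\chi_a - \phi_a}_\infty + \norm{G^{-1/2} - \Id}_\infty\bigr)$ tends to $0$ as $\eta \to 0$, and a sufficiently small choice of $\eta$ secures the bound $\mathrm{dist}(\Phi, \Phi\sub{sm}) < \mu$. The main obstacle, as anticipated, is reconciling the antilinear structure of $\Theta$ with the otherwise linear operations of convolution and re-orthonormalisation — this is precisely what forces the mollifier to be real and even, and the orthonormalisation to be the symmetric (L\"owdin) one rather than Gram--Schmidt.
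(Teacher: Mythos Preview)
Your proof is correct, and it takes a genuinely different route from the paper's. The paper proceeds in two abstract steps: first it invokes Steenrod's ironing principle on the frame bundle $\Fr(\PB) \to \T^d$ to produce a smooth (but not symmetric) global Bloch frame $\widetilde{\Phi}\sub{sm}$ close to $\Phi$, and then it symmetrises via a Riemannian \emph{midpoint} construction on $\U(\C^m)$, setting $\Phi\sub{sm}(k) := \Mid\bigl(\widetilde{\Phi}\sub{sm}(k),\, \Theta \widetilde{\Phi}\sub{sm}(-k) \act \eps\bigr)$ and verifying (with a separate lemma on how the midpoint interacts with right-multiplication by $\eps$) that this restores \eqref{TR}. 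Your argument is entirely elementary and self-contained: mollification by a real even kernel, fibrewise projection by $P(k)$, and L\"owdin orthonormalisation. The crucial observation --- that a \emph{real even} scalar kernel commutes with the antilinear $\Theta$ under Bochner integration, and that the \emph{symmetric} square-root orthonormalisation transforms covariantly via $G(-k)^{-1/2} = \eps^{-1}\overline{G(k)^{-1/2}}\eps$ --- plays exactly the r\^ole that the midpoint lemma plays in the paper. What you gain is constructivity and the avoidance of both Steenrod's theorem and the Riemannian geometry of $\U(\C^m)$; what the paper's approach gains is a cleaner separation between ``smoothing'' and ``symmetrising'', which makes it easier to reuse the midpoint machinery from \cite{FiPaPi} for the bosonic case verbatim.
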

\begin{proof}
We use the same strategy used to prove Proposition 5.1 in \cite{FiPaPi}, to which we refer for most details.

A global Bloch frame gives a section of the principal $\U(\C^m)$-bundle $\Fr(\PB) \to \T^d$, the \emph{frame bundle} associated with the Bloch bundle $\PB \to \T^d$, whose fibre $\Fr(\PB)_k =: F_k$ over the point $k \in \T^d$ consists of all orthonormal bases in $\Ran P(k)$. The existence of a continuous global Bloch frame $\Phi$ is thus equivalent to the existence of a continous global section of the frame bundle $\Fr(\PB)$. By Steenrod's ironing principle \cite{Steenrod,Wockel}, there exists a \emph{smooth} section of the frame bundle arbitrarily close to the latter continuous one; going back to the language of Bloch frames, this implies for any $\mu > 0$ the existence of a \emph{smooth} global Bloch frame $\widetilde{\Phi}\sub{sm}$ such that
\[ \mathrm{dist} (\Phi, \widetilde{\Phi}\sub{sm}) < \frac{1}{2} \mu. \]

The Bloch frame $\widetilde{\Phi}\sub{sm}$ will not in general be also symmetric, so we apply a symmetrization procedure to $\widetilde{\Phi}\sub{sm}$ to obtain the desired smooth symmetric Bloch frame $\Phi\sub{sm}$. In order to do so, we define the \emph{midpoint} between two sufficiently close frames $\Phi$ and $\Psi$ in $F_k$ as follows. Write $\Psi = \Phi \act U_{\Phi, \Psi}$, with $U_{\Phi, \Psi}$ a unitary matrix. If $\Psi$ is sufficiently close to $\Phi$, then $U_{\Phi, \Psi}$ is sufficiently close to the identity matrix $\Id_m$ in the standard Riemannian metric of $\U(\C^m)$ (see \cite[Section 5]{FiPaPi} for details). As such, we have that $U_{\Phi, \Psi} = \exp \left( A_{\Phi, \Psi} \right)$, for $A_{\Phi, \Psi} \in \mathfrak{u}(\C^m)$ a skew-Hermitian matrix, and $\exp \colon \mathfrak{u}(\C^m) \to \U(\C^m)$ the exponential map%
\footnote{It is known that the exponential map $\exp \colon \mathfrak{u}(\C^m) \to \U(\C^m)$ defines a diffeomorphism between a ball $B_{\delta}(0) \subset \mathfrak{u}(\C^m)$ and a Riemannian ball $B_{\delta}(\Id_m) \subset \U(\C^m)$, for some $\delta > 0$. When we say that the two frames $\Phi$ and $\Psi$ should be ``sufficiently close'', we mean (here and in the following) that the unitary matrix $U_{\Phi, \Psi}$ lies in the ball $B_{\delta}(\Id_m)$.}%
. We define then the \emph{midpoint} between $\Phi$ and $\Psi$ to be the frame
\[ \Mid(\Phi, \Psi) := \Phi \act \exp \left( \frac{1}{2} A_{\Phi, \Psi} \right). \]

Set
\begin{equation} \label{Phism}
\Phi\sub{sm}(k) := \Mid\left( \widetilde{\Phi}\sub{sm}(k), \Theta \widetilde{\Phi}\sub{sm}(-k) \act \eps \right).
\end{equation}
The proof that $\Phi\sub{sm}$ defines a \emph{smooth} global Bloch frame, which satisfies the $\tau$-equivariance property \eqref{Tau-Cov}, and that moreover its distance from the frame $\Phi$ is less than $\mu$ goes exactly as the proof of \cite[Prop. 5.1]{FiPaPi}. A slightly different argument is needed, instead, to prove that $\Phi\sub{sm}$ is also time-reversal symmetric, \ie that it satisfies also \eqref{TR}. To show this, we need a preliminary result.

\begin{lemma} \label{FermionicMidpoint}
Let $\Phi, \Psi \in F_k$ be two frames, and assume that $\Phi$ and $\Psi \act \eps$ are sufficiently close. Then
\begin{equation} \label{MidpointEps}
\Mid(\Phi, \Psi \act \eps) = \Mid(\Phi \act \eps^{-1}, \Psi) \act \eps.
\end{equation}
\end{lemma}
\begin{proof}
Notice first that
\[ \Psi = \Phi \act U_{\Phi, \Psi} \quad \Longrightarrow \quad \Psi \act \eps = \Phi \act \left( U_{\Phi, \Psi} \eps \right) \text{ and } \Psi = (\Phi \act \eps^{-1}) \act \left( \eps U_{\Phi, \Psi} \right), \]
which means that 
\[ U_{\Phi, \Psi \act \eps} = U_{\Phi, \Psi} \eps \quad \text{and} \quad U_{\Phi \act \eps^{-1}, \Psi} = \eps U_{\Phi, \Psi}. \]

Write
\[ \eps U_{\Phi, \Psi} = \exp(A) \quad \text{and} \quad U_{\Phi, \Psi} \eps = \exp(B) \]
for $A,B \in B_{\delta}(0) \subset \mathfrak{u}(\C^m)$, where $\delta > 0$ is such that $\exp \colon B_{\delta}(0) \subset \mathfrak{u}(\C^m) \to B_{\delta}(\Id_m) \subset \U(\C^m)$ is a diffeomorphism. Since the Hilbert-Schmidt norm $\left\| A \right\|\sub{HS}^2 := \tr(A^* A)$ is invariant under unitary conjugation, also $\eps^{-1} A \eps \in B_{\delta}(0)$. Then we have that
\begin{equation} \label{exp} 
\exp\left(\eps^{-1} A \eps\right) = \eps^{-1} \exp(A) \eps = \eps^{-1} \eps U \eps = U \eps = \exp(B).
\end{equation}
It then follows from \eqref{exp} that $B = \eps^{-1} A \eps$, because the exponential map is a diffeomorphism. From this we can conclude that
\begin{align*}
\Mid(\Phi \act \eps^{-1}, \Psi) \act \eps & = (\Phi \act \eps^{-1}) \act \left( \exp \left(\frac{1}{2} A \right) \eps \right) = \\
& = \Phi \act \exp \left(\frac{1}{2} \eps^{-1} A \eps \right) = \Phi \act \exp \left(\frac{1}{2} B \right) = \\
& = \Mid(\Phi, \Psi \act \eps)
\end{align*}
which is exactly \eqref{MidpointEps}.
\end{proof}

We are now in position to prove that $\Phi\sub{sm}$, defined in \eqref{Phism}, satisfies \eqref{TR}. Indeed, we compute
\begin{align*}
\Theta \Phi\sub{sm}(k) \act \eps & = \Theta \Mid \left( \widetilde{\Phi}\sub{sm}(k), \Theta \widetilde{\Phi}\sub{sm}(-k) \act \eps \right) \act \eps = && \\
& = \Mid \left( \Theta \widetilde{\Phi}\sub{sm}(k), \Theta^2 \widetilde{\Phi}\sub{sm}(-k) \act \overline{\eps} \right) \act \eps = && \text{(by \cite[Eqn. (5.11)]{FiPaPi})} \\
& = \Mid \left( \Theta \widetilde{\Phi}\sub{sm}(k), \widetilde{\Phi}\sub{sm}(-k) \act \eps^{-1} \right) \act \eps = && \text{(because $- \overline{\eps} = \eps^{-1}$ by Remark \ref{rmk:eps})} \\
& = \Mid \left( \widetilde{\Phi}\sub{sm}(-k) \act \eps^{-1}, \Theta \widetilde{\Phi}\sub{sm}(k) \right) \act \eps = && \text{(because $\Mid(\Phi,\Psi) = \Mid(\Psi,\Phi)$)} \\
& = \Mid \left( \widetilde{\Phi}\sub{sm}(-k), \Theta \widetilde{\Phi}\sub{sm}(k) \act \eps \right) = \Phi\sub{sm}(-k) && \text{(by \eqref{MidpointEps}).} 
\end{align*}
This concludes the proof of the Proposition.
\end{proof}


\soloarXiv{
\newpage

\section{Time-reversal symmetric isomorphisms of Bloch bundles and intertwining partial isometries} 
\label{app:Isomorphisms}

\textsl{Note:} This Appendix is not meant for publication in CMP, but it will appear in the final \textsl{arXiv} version of the paper. 

\bigskip

In this Appendix, we consider two families  of projectors $\mathcal{P}_1 = \set{P_1(k)}_{k \in \R^d}$ 
and $\mathcal{P}_2 = \set{P_2(k)}_{k \in \R^d}$, both satisfying Assumption \ref{proj}, and the associated 
Bloch bundles  $\mathcal{E}_1$  and $\mathcal{E}_2$, with $\mathcal{E}_j = \big( E_j \xrightarrow{\pi_j} \T^d \big)$.  We discuss the relation between the existence of a TRS isomorphism 
of the Bloch bundles, in the sense of Remark \ref{rmk: TRS Bloch bundles}, and the existence of a partial-isometry-valued map $V : \R^d \to \mathcal{I}(\Hi)$ satisfying suitable properties. For the sake of a lighter notation, we assume $\tau_1 \equiv \tau \equiv \tau_2$. 

\begin{lemma} \label{Lem:V to isomorphism}
Any map $V \colon \R^d \to \mathcal{I}(\Hi)$ satisfying \ref{item:U-smooth}, \ref{item:U-tau}, \ref{item:U-TRS} and \ref{item:U-inter} induces a TRS isomorphism $\widehat V \colon \mathcal{E}_1 \to \mathcal{E}_2$. 
\end{lemma}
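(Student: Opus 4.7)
The plan is to define $\widehat V$ fiberwise and then verify, one by one, that the four properties of $V$ precisely encode the four conditions needed for $\widehat V$ to be a TRS bundle isomorphism. Recall that, as in \eqref{tau equivalence}, the Bloch bundle $\mathcal{E}_j$ arises as the quotient of $\{(k,\phi) \in \R^d \times \Hi : \phi \in \Ran P_j(k)\}$ modulo the equivalence relation $(k,\phi) \sim_\tau (k+\lambda, \tau(\lambda)\phi)$. The natural candidate is then to set
\[
\widehat V\big([k,\phi]\big) := \big[k,\, V(k)\phi\big], \qquad [k,\phi] \in E_1.
\]

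First I would check that this formula is well-posed. Property \ref{item:U-inter}, namely $P_2(k)V(k) = V(k)P_1(k)$, guarantees that if $\phi \in \Ran P_1(k)$ then $V(k)\phi \in \Ran P_2(k)$, so the image lies in $E_2$. Independence of the representative of the class $[k,\phi]$ amounts to checking that
\[
V(k+\lambda)\, \tau(\lambda)\phi \;=\; \tau(\lambda)\, V(k)\phi \qquad \text{for all } \lambda \in \Lambda,
\]
which is exactly $(\tau,\tau)$-covariance \ref{item:U-tau}. Smoothness of $\widehat V$ as a bundle map follows from \ref{item:U-smooth}, since the quotient maps $\R^d \times \Hi \to E_j$ are smooth submersions. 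The diagram involving the base projections commutes trivially because the base component of the map is the identity on $\T^d$.

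Second I would verify that $\widehat V$ is a \emph{Hermitian} isomorphism on each fiber. By hypothesis $V(k)$ is a partial isometry with initial subspace $\Ran P_1(k)$ and final subspace $\Ran P_2(k)$, so its restriction $V(k)|_{\Ran P_1(k)} \colon \Ran P_1(k) \to \Ran P_2(k)$ is a \emph{unitary} operator. This gives fiberwise bijectivity and preservation of the inner products inherited from $\Hi$, hence $\widehat V$ is an isomorphism of smooth Hermitian vector bundles over $\T^d$.

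Finally I would check compatibility with the time-reversal structures, \ie the commutativity of diagram \eqref{TRSdiagram}. The induced map $\widehat\Theta_j$ acts on classes by $\widehat\Theta_j([k,\phi]) = [-k, \Theta_j\phi]$. Then
\[
\widehat V\big(\widehat\Theta_1([k,\phi])\big) = \big[-k,\, V(-k)\,\Theta_1\phi\big],
\qquad
\widehat\Theta_2\big(\widehat V([k,\phi])\big) = \big[-k,\, \Theta_2\, V(k)\phi\big],
\]
so the required equality is $V(-k)\Theta_1 = \Theta_2 V(k)$ on $\Ran P_1(k)$, which is precisely property \ref{item:U-TRS}. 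There is no genuine obstacle here: the entire argument is a translation of pointwise operator identities on $\Hi$ into bundle-theoretic statements on the quotient, with each property \ref{item:U-smooth}--\ref{item:U-TRS} matching exactly one of the structural requirements for $\widehat V$ to be a TRS isomorphism.
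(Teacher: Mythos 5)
Your proposal is correct and follows essentially the same route as the paper's own proof: defining $\widehat V\bigl([k,\phi]\bigr) = [k, V(k)\phi]$ on equivalence classes, using \ref{item:U-tau} for well-posedness, \ref{item:U-inter} and the partial-isometry structure for fiberwise unitarity, \ref{item:U-smooth} for smoothness, and \ref{item:U-TRS} for the commutation with the induced time-reversal maps $\widehat\Theta_1, \widehat\Theta_2$. No gaps; nothing to add.
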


\newcommand{\class}[1]{\left[ #1 \right]_{\tau}}

\begin{proof}  Recall that $\mathcal{E}_j$, for $j \in \set{1,2}$, is defined in terms of the equivalence relation \eqref{tau equivalence}, so that
$$
E_j := \set{\class{k, \phi} \in (\R^d \times \Hi) / _{\sim_\tau}  : \phi \in \Ran P_j(k)}. 
$$

\medskip

For a point $p = \class{k, \phi} \in E_1$, we define $\widehat V$ by 
\begin{equation} \label{hatV definition}
\widehat V  \class{k, \phi}  = \class{k, V(k) \phi}.
\end{equation}

\noindent The independence of definition \eqref{hatV definition} from the choice of the representative follows from property \ref{item:U-tau}. Indeed, for any $p = \class{k, \phi} \in E_1$ and $\lambda \in \Lambda$, one has
\begin{eqnarray*}
\widehat V  \class{k + \lambda, \tau(\lambda) \phi} &=&  \class{k + \lambda, V(k + \lambda)  \tau(\lambda) \phi} 
=   \class{k + \lambda, \tau(\lambda) V(k) \phi} \\
&=& \class{k, V(k) \phi}  = \widehat V  \class{k, \phi}. \\ 
\end{eqnarray*}

\noindent The fact that $\widehat V$ maps $\mathcal{E}_1$ onto $\mathcal{E}_2$ preserving the fibers follows from \ref{item:U-inter}, while the fact that $\widehat V$ is unitary on the fibers is a consequence of the fact that $V(k)$ is a partial isometry. 
The smoothness of  $\widehat V$ follows from \ref{item:U-smooth}.

\noindent Finally, $\widehat V$ is a TRS isomorphism. Indeed, for any $p = \class{k, \phi} \in E_1$ one has
\begin{eqnarray*}
\widehat V  \, \widehat \Theta_1  \class{k, \phi} &=&  \widehat V \class{- k , \Theta_1 \phi} 
=  \class{- k, V(- k) \Theta_1  \phi} \\
&=& \class{- k, \Theta_2 V(k) \phi}  = \widehat \Theta_2 \, \widehat V  \class{k, \phi},  \\ 
\end{eqnarray*}
which proves that $\widehat V  \, \widehat \Theta_1  = \widehat \Theta_2 \, \widehat V $.
\end{proof}

\goodbreak

The converse statement is also true.

\begin{lemma} \label{Lem:Isomorphism to V}
Let $\widehat V \colon \mathcal{E}_1 \to \mathcal{E}_2$ be a TRS isomorphism of Bloch bundles. Then, there exists a partial-isometry-valued map  $V \colon \R^d \to \mathcal{I}(\Hi)$ satisfying \ref{item:U-smooth}, \ref{item:U-tau}, \ref{item:U-TRS} and \ref{item:U-inter}, 
such that 
\begin{equation} \label{V crucial}
\widehat V  \class{k, \phi}  = \class{k, V(k) \phi}.
\end{equation}
\end{lemma}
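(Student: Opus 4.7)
The natural strategy is to define $V$ fiberwise from $\widehat V$ and extend by zero on the orthogonal complement. Since $\widehat V \colon \mathcal{E}_1 \to \mathcal{E}_2$ covers the identity on $\T^d$, for every $k \in \R^d$ it induces a unitary map $\widehat V_k \colon \Ran P_1(k) \to \Ran P_2(k)$, characterized by $\widehat V \class{k,\phi} = \class{k, \widehat V_k \phi}$ for $\phi \in \Ran P_1(k)$. The plan is to set
\[ V(k) := \widehat V_k \, P_1(k) \in \mathcal{I}(\Hi), \]
so that $V(k)$ is a partial isometry with initial subspace $\Ran P_1(k)$ and final subspace $\Ran P_2(k)$. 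With this definition, property \ref{item:U-inter} and the identity \eqref{V crucial} are immediate, and \ref{item:U-smooth} follows from the smoothness of $\widehat V$ and of $P_1$ via a local trivialization argument.

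The verification of \ref{item:U-tau} would exploit the well-definedness of $\widehat V$ on $\sim_\tau$-equivalence classes. For $\phi \in \Ran P_1(k)$ and $\lambda \in \Lambda$, the identity $\class{k+\lambda, \tau(\lambda)\phi} = \class{k,\phi}$ in $\mathcal{E}_1$ is mapped by $\widehat V$ to
\[ \class{k+\lambda, V(k+\lambda) \tau(\lambda)\phi} = \class{k, V(k)\phi} = \class{k+\lambda, \tau(\lambda) V(k)\phi}, \]
forcing $V(k+\lambda)\tau(\lambda) = \tau(\lambda) V(k)$ on $\Ran P_1(k)$. Both sides vanish on $(\Ran P_1(k))^\perp$ since $\tau(\lambda)$ maps it onto $(\Ran P_1(k+\lambda))^\perp$, and property \ref{item:U-tau} follows.

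The main step is \ref{item:U-TRS}, which is the only place where the TRS condition $\widehat V \widehat \Theta_1 = \widehat \Theta_2 \widehat V$ is actually used. Applying both sides to $\class{k,\phi}$ with $\phi \in \Ran P_1(k)$ and comparing canonical representatives gives
\[ \class{-k, V(-k) \Theta_1 \phi} = \widehat V \widehat \Theta_1 \class{k,\phi} = \widehat \Theta_2 \widehat V \class{k,\phi} = \class{-k, \Theta_2 V(k) \phi}, \]
whence $V(-k) \Theta_1 = \Theta_2 V(k)$ on $\Ran P_1(k)$. Both sides vanish on the orthogonal complement because $\Theta_1$ maps $(\Ran P_1(k))^\perp$ onto $(\Ran P_1(-k))^\perp$, so \ref{item:U-TRS} holds on all of $\Hi$.

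The only obstacle I foresee is purely bookkeeping: one must carefully track the passage between the bundle-theoretic formulation via equivalence classes in $(\R^d \times \Hi)/\sim_\tau$ and the operator-theoretic formulation via bounded operators on the ambient Hilbert space $\Hi$, and verify that the canonical representative $(k,\phi)$ with $\phi \in \Ran P_1(k)$ is the one used consistently throughout. Modulo this, the argument is a direct dualization of the proof of Lemma \ref{Lem:V to isomorphism}.
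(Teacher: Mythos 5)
Your proposal is correct and follows essentially the same route as the paper: extend the fiberwise unitary induced by $\widehat V$ by zero on $(\Ran P_1(k))^\perp$, and read off \ref{item:U-tau} and \ref{item:U-TRS} by comparing representatives with the same first coordinate, using \ref{item:tau} and \ref{item:TRS} to handle the orthogonal complements. The paper merely packages the smoothness and $\tau$-covariance checks together by identifying the zero-extended bundle morphism with an element of the algebra of smooth $\tau$-covariant $\mathcal{B}(\Hi)$-valued maps, which is the same argument in slightly more abstract clothing.
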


\begin{proof}  Recall that, by definition, each Bloch bundle has a natural smooth embedding 
\begin{equation} \label{Embedding}
\mathcal{E}_j   \hookrightarrow  (\R^d \times \Hi) / _{\sim_\tau} \simeq  \T^d \times \Hi.  
\end{equation}
We extend $\widehat V $ to a smooth morphism of Hilbert bundles 
$$
M \colon (\R^d \times \Hi) / _{\sim_\tau}  \longrightarrow  (\R^d \times \Hi) / _{\sim_\tau}
$$ 
by setting $M$ equal to zero on the orthogonal complement of $E_1$, and by embedding $E_2$
in $(\R^d \times \Hi) / _{\sim_\tau}$ via \eqref{Embedding}.  
Since $M$ maps smooth $\tau$-equivariant functions into themselves, 
it is identified with an element $V$ of the $*$-algebra 
$
C^{\infty}_{\tau}(\R^d, \mathcal{B}(\Hi)),   
$ 
consisting of smooth $\tau$-covariant  $\mathcal{B}(\Hi)$-valued functions. 
Moreover, for any $p = \class{k, \phi}\in E_1$, we have
$$  
\widehat V  \class{k, \phi}  = M \class{k,\phi}  =  \class{k, V(k) \phi}, 
$$
which proves that relation \eqref{V crucial} holds true. Since $\widehat V$ is an isomorfism of Hermitian bundles, $V(k)|_{\Ran P_1(k)}$ is a unitary operator from  $\Ran P_1(k)$ to 
$\Ran P_2(k)$,  so that $V(k)$ is a partial isometry in $\Hi$. Thus, we obtain a $\mathcal{I}(\Hi)$-valued map $V$ satisfying 
\ref{item:U-smooth}, \ref{item:U-tau}, and \ref{item:U-inter}.

%
%
%

Finally, property \ref{item:U-TRS} is easily checked. 
Indeed, since $\widehat V$ preserves the time-reversal structure of the Bloch bundles, one has 
$$
\widehat V \widehat \Theta_1 \class{k, \phi} = \widehat \Theta_2 \widehat V \class{k, \phi},  \qquad \mbox{ for all } k \in \R^d, 
\phi \in \Ran P_1(k). 
$$
Since, in view of  \eqref{V crucial},
\begin{eqnarray*}
\widehat V \widehat \Theta_1 \class{k, \phi} &=& \widehat V  \class{- k, \Theta_1 \phi} =  \class{- k, V(- k) \Theta_1 \phi}, \\ 
\widehat \Theta_2 \widehat V \class{k, \phi} &=&  \widehat \Theta_2  \class{k, V(k) \phi} = \class{- k, 	\Theta_2 V(k) \phi}, 
\end{eqnarray*}
one concludes that \ref{item:U-TRS}  holds true. 
\end{proof}

} 


\newpage


\bigskip \bigskip

{\footnotesize

\begin{tabular}{ll}
(D. Fiorenza) & \textsc{Dipartimento di Matematica, ``La Sapienza'' Universit\`{a} di Roma} \\
 &  Piazzale Aldo Moro 2, 00185 Rome, Italy \\
 &  {E-mail address}: \href{mailto:fiorenza@mat.uniroma1.it}{\texttt{fiorenza@mat.uniroma1.it}} \\
 \\
(D. Monaco) &  \textsc{SISSA -- International School for Advanced Studies}\\
& Via Bonomea 265, 34136 Trieste, Italy \\
& {E-mail address}: \href{mailto:dmonaco@sissa.it}{\texttt{dmonaco@sissa.it}} \\
\\
(G. Panati) & \textsc{Dipartimento di Matematica, ``La Sapienza'' Universit\`{a} di Roma} \\
 &  Piazzale Aldo Moro 2, 00185 Rome, Italy \\
 &  {E-mail addresses}: \href{mailto:panati@mat.uniroma1.it}{\texttt{panati@mat.uniroma1.it}}, 
 \href{mailto:panati@sissa.it}{\texttt{panati@sissa.it}}  \\

\end{tabular}

}

\end{document}